\providecommand{\U}[1]{\protect\rule{.1in}{.1in}}
\long
\def\@makecaption#1#2{  \vskip\abovecaptionskip
\sbox\@tempboxa{{\captionfonts #1: #2}}  \ifdim \wd\@tempboxa >\hsize
{\captionfonts #1: #2\par}
\else
\hbox to\hsize{\hfil\box\@tempboxa\hfil}  \fi
\vskip\belowcaptionskip}
\renewcommand{\theequation}{{\thesection}.{\arabic{equation}}}
\titleformat{\subsection}{\normalfont\bfseries}{\thesubsection}{0.5em}{}
\titleformat{\subsubsection}{\normalfont\bfseries\itshape}{\thesubsubsection}{0.5em}{}
\titleformat{\paragraph}[runin]{\normalfont\bfseries}{\theparagraph}{0.5em}{}[.]
\renewcommand{\thesubsubsection}{\arabic{section}.\arabic{subsection}.\arabic{subsubsection}}
\renewcommand{\theparagraph}{(\arabic{paragraph})}
\newtheoremstyle{thm}{1.0em}{1.0em}{\itshape}{}{\bfseries}{.}{.5em}{}
\theoremstyle{thm}
\newtheorem{theorem}{Theorem}[section]
\newtheorem{corollary}{Corollary}[section]
\newtheorem{lemma}{Lemma}[section]
\newtheorem{proposition}{Proposition}[section]
\newtheoremstyle{rem}{1.0em}{1.0em}{\normalfont}{}{\bfseries}{.}{.5em}{}
\theoremstyle{rem}
\newtheorem{remark}{Remark}[section]
\newenvironment{proof}[1][Proof]{\par\vspace{1.0em}\noindent\textbf{#1. }}{\hfill $\Box$\par\vspace{1.0em}}
\newcommand{\captionfonts}{\small}
\renewcommand{\qed}{\hfill $\Box$}
\DeclareSymbolFont{AMSb}{U}{msb}{m}{n}
\DeclareSymbolFontAlphabet{\mathbb}{AMSb}
\let\pdfoutput=\undefined\fi
\begin{document}
\hspace{13.9cm}1

\ \vspace{1mm}\newline

\noindent{\Large \textbf{Information-Theoretic Bounds and Approximations
\newline in Neural Population Coding}}

\ \newline\textbf{{\large Wentao Huang}}\newline\textit{whuang21@jhmi.edu}
\newline\textit{Department of Biomedical Engineering, Johns Hopkins University
School of Medicine, Baltimore, MD 21205, U.S.A., and Cognitive and Intelligent
Lab of China Electronics Technology Group Corporation, Beijing 100846, China.}
\medskip\newline\textbf{{\large Kechen Zhang}}\newline%
\textit{kzhang4@jhmi.edu\newline Department of Biomedical Engineering, Johns
Hopkins University School of Medicine, Baltimore, MD 21205, U.S.A.}\newline

\thispagestyle{empty} \markboth{}{Information-Theoretic Bounds} \ \vspace
{-20mm}\newline

\begin{center}
\textbf{Abstract}
\end{center}

\textbf{While Shannon's mutual information has wide spread applications in
many disciplines, for practical applications it is often difficult to
calculate its value accurately for high-dimensional variables because of the
curse of dimensionality. This paper is focused on effective approximation
methods for evaluating mutual information in the context of neural population
coding. For large but finite neural populations, we derive several
information-theoretic asymptotic bounds and approximation formulas that remain
valid in high-dimensional spaces. We prove that optimizing the population
density distribution based on these approximation formulas is a convex
optimization problem which allows efficient numerical solutions. Numerical
simulation results confirmed that our asymptotic formulas were highly accurate
for approximating mutual information for large neural populations. In special
cases, the approximation formulas are exactly equal to the true mutual
information. We also discuss techniques of variable transformation and
dimensionality reduction to facilitate computation of the approximations. }

\section{Introduction}

\label{Sec:1} Shannon's mutual information (MI) provides a quantitative
characterization of the association between two random variables by measuring
how much knowing one of the variables reduces uncertainty about the other
\citep{Shannon(1948-mathematical)}. Information theory has become a useful
tool for neuroscience research
\citep{Rieke-book,Borst(1999-information),Pouget(2000-information), Laughlin(2003-communication), Brown(2004-multiple),Quiroga(2009-extracting)},
with applications to various problems such as sensory coding problems in the
visual systems
\citep{Eckhorn(1975-rigorous), Optican(1987-temporal), Atick(1990-towards), McClurkin(1991-lateral), Atick(1992-understanding),Becker(1992-self),VanHateren(1992-real), Gawne(1993-how), Tovee(1993-information), Bell(1997-independent),Lewis(2006-are)}
and the auditory systems \citep{Chechik(2006-reduction),Gourevitch(2007-evaluating),chase2005limited}.

One major problem encountered in practical applications of information theory
is that the exact value of mutual information is often hard to compute in
high-dimensional spaces. For example, suppose we want to calculate the mutual
information between a random stimulus variable that requires many parameters
to specify and the elicited noisy responses of a large population of neurons.
In order to accurately evaluate the mutual information between the stimuli and
the responses, one has to average over all possible stimulus patterns and over
all possible response patterns of the whole population. This averaging quickly
leads to a combinatorial explosion as either the stimulus dimension or the
population size increases.
This problem occurs not only when one computes MI numerically for a given theoretical model 
but also when one estimates MI empirically from experimental data.

Even when the input and output dimensions 
are not that high,
MI estimate from experimental data
tends to  have
 a positive bias due to limited
 sample size
\citep{Miller(1955-note),Treves(1995-upward)}.
For example, a perfectly flat joint probability distribution implies zero MI, 
but an empirical joint distribution with fluctuations due to finite data size
appears to suggest a positive MI.
The error may get much worse as the input and output dimensions increase because a reliable estimate of MI may require exponentially more data points to fill the space
of the joint  distribution.
 Various asymptotic expansion methods have been proposed to reduce the bias in MI estimate
\citep{Miller(1955-note), Carlton(1969-bias), Treves(1995-upward), Victor(2000-asymptotic), Paninski(2003-estimation)}.
Other estimators of MI have also been studied, such as those based on \textit{k}-nearest neighbor
\citep{Kraskov(2004-estimating)} and minimal spanning trees
\citep{Khan(2007-relative)}. However, it is not easy  for these methods
to handle the general situation with high-dimensional inputs and high-dimensional
outputs.

For numerical computation of MI for a given theoretical model, one useful approach is Monte Carlo sampling, a convergent method that may potentially reaches arbitrary accuracy
\citep{Yarrow(2012-fisher)}. However, its stochastic and inefficient
computational scheme makes it unsuitable for many applications. For instance,
to optimize the distribution of a neural population for a given set of
stimuli, one may want to slightly alter the population parameters and see how
the perturbation affects the MI, but a tiny change of MI can be easily drowned
out by the inherent noise in the Monte Carlo method.

An alternative approach is to use information-theoretic bounds and
approximations to simplify calculations. For example, the Cram\'{e}r-Rao lower
bound \citep{Rao(1945-information)} tell us that the inverse of Fisher
information (FI) is a lower bound to the mean square decoding error of any
unbiased decoder. Fisher information is useful for many applications partly
because it is often much easier to calculate than MI \citep[see e.g.,][]{Zhang(1998-interpreting),Zhang(1999-neuronal),Abbott(1999-effect), Bethge(2002-optimal), Harper(2004-optimal), Toyoizumi(2006-fisher)}.

A link between MI and FI has been studied by several researchers
\citep{Clarke(1990-information),Rissanen(1996-fisher),Brunel(1998-mutual),Sompolinsky(2001-population)}.
\cite{Clarke(1990-information)} first derived an asymptotic formula between
the relative entropy and FI for parameter estimation from independent and
identically distributed (i.i.d.) observations with suitable smoothness
conditions. \cite{Rissanen(1996-fisher)} generalized it in the framework of
stochastic complexity for model selection. \cite{Brunel(1998-mutual)}
presented an asymptotic relationship between the MI and FI in the limit of a
large number of neurons. The method was extended to discrete inputs\ by
\cite{Kang(2001-mutual)}. More general discussions about this also appeared in
other papers \citep[e.g.][]{Ganguli(2014-efficient),Wei(2015-mutual)}.
However, for finite population size, the asymptotic formula may lead to large
errors, especially for high-dimensional inputs as detailed in sections
\ref{Sec:2.2} and \ref{Sec:3.1}.

In this paper, our main goal is to improve FI approximations to MI for finite
neural populations especially for high-dimensional inputs. Another goal is to
discuss how to use these approximations to optimize neural population coding.
We will present several information-theoretic bounds and approximation
formulas and discuss the conditions under which they are established\ in
section \ref{Sec:2}, with detailed proofs given in Appendix. We also discuss
how our approximation formulas are related to other statistical estimators and
information-theoretic bounds, such as Cram\'{e}r-Rao bound and van Trees'
Bayesian Cram\'{e}r-Rao bound (section \ref{Sec:3}). In order to better apply
the approximation formulas in high-dimensional input space, we propose some
useful techniques in section \ref{Sec:3.0}, including variable transformation
and dimensionality reduction, which may greatly reduce the computational
complexity for practical applications. Finally, in section \ref{Sec:4}, we
discuss how to use the approximation formulas for the optimization of
information transfer for neural population coding.

\section{Bounds and Approximations for Mutual Information in Neural Population
Coding}

\label{Sec:2}

\subsection{Mutual Information and Notations}

\label{Sec:2.1}Suppose the input $\mathbf{x}$ is a $K$-dimensional vector,
$\mathbf{x}=(x_{1}$,\thinspace$x_{2}$,\thinspace$\cdots$,\thinspace$x_{K}%
)^{T}$, the outputs of $N$ neurons are denoted by a vector, $\mathbf{r}%
=(r_{1}$,\thinspace$r_{2}$,\thinspace$\cdots$,\thinspace$r_{N})^{T}$. In this
paper we denote\ random variables by upper case letters, e.g., random
variables $X$ and $R$, in contrast to their vector values $\mathbf{x}$ and
$\mathbf{r}$. The MI $I\left(  X{\text{;\thinspace}}R\right)  $ (denoted as
$I$ below) between $X$ and $R$ is defined by \citep{Cover(2006-BK-elements)}%
\begin{equation}
I=\int_{{\mathcal{X}}}\int_{{\mathcal{R}}}p(\mathbf{r}|\mathbf{x}%
)p(\mathbf{x})\ln\frac{p(\mathbf{r}|\mathbf{x})}{p(\mathbf{r})}d\mathbf{r}%
d\mathbf{x}\text{,} \label{MI}%
\end{equation}
where $\mathbf{x}\in{{\mathcal{X}}}\subseteq%
\mathbb{R}
^{K}$, $\mathbf{r}\in\mathcal{R}\subseteq%
\mathbb{R}
^{N}$, $d\mathbf{x}=\prod_{k=1}^{K}dx_{k}$, $d\mathbf{r}=\prod_{n=1}^{N}%
dr_{n}$, and the integration symbol $%
{\textstyle\int}
$ is for the continuous variables and can be replaced by summation symbol $%
{\textstyle\sum}
$\ for discrete variables. The probability density function (p.d.f.)
of\textbf{\ }$\mathbf{r}$, $p(\mathbf{r})$, satisfies
\begin{equation}
p(\mathbf{r})=\int_{{{\mathcal{X}}}}p(\mathbf{r}|\mathbf{x})p(\mathbf{x}%
)d\mathbf{x}\text{.} \label{pr}%
\end{equation}
The MI $I$ in (\ref{MI}) may also be expressed equivalently as
\begin{equation}
I=H(X)-\left\langle \ln\frac{p(\mathbf{r})}{p(\mathbf{r}|\mathbf{x}%
)p(\mathbf{x})}\right\rangle _{\mathbf{r,\,x}}=H(X)-H(X|R)\text{,} \label{MI1}%
\end{equation}
where $H(X)$ is the entropy of random variable $X$:
\begin{equation}
H(X)=-\left\langle \ln p(\mathbf{x})\right\rangle _{\mathbf{x}}\text{,
}H(X|R)=-\left\langle \ln p(\mathbf{x}|\mathbf{r})\right\rangle _{\mathbf{r}%
\text{\textbf{,\thinspace}}\mathbf{x}}\text{,} \label{H(x)}%
\end{equation}
and $\left\langle \cdot\right\rangle $ denotes expectation:
\begin{align}
&  {\left\langle \cdot\right\rangle _{\mathbf{x}}}={\int_{{{\mathcal{X}}}%
}p(\mathbf{x})(\cdot)d}\mathbf{x}\text{, }\label{Expect1}\\
&  {\left\langle \cdot\right\rangle _{\mathbf{r}|\mathbf{x}}}={\int%
_{{\mathcal{R}}}p(\mathbf{r}|\mathbf{x})(\cdot)d}\mathbf{r}\text{,
}\label{Expect2}\\
&  {\left\langle \cdot\right\rangle _{\mathbf{r}\text{,\thinspace}\mathbf{x}}%
}={\int_{{{\mathcal{X}}}}\int_{{\mathcal{R}}}p(\mathbf{r}\text{,\thinspace
}\mathbf{x})(\cdot){d}\mathbf{r}d}\mathbf{x}\text{.} \label{Expect3}%
\end{align}
Next, we introduce the following notations,
\begin{align}
&  l\left(  \mathbf{r}|\mathbf{x}\right)  =\ln p\left(  \mathbf{r}%
|\mathbf{x}\right)  \text{, }\label{Notations1}\\
&  L\left(  \mathbf{r}|\mathbf{x}\right)  =\ln\left(  p\left(  \mathbf{r}%
|\mathbf{x}\right)  p\left(  \mathbf{x}\right)  \right)  \text{,}%
\label{Notations2}\\
&  {q}\left(  {\mathbf{x}}\right)  =\ln p\left(  \mathbf{x}\right)  \text{,}
\label{Notations3}%
\end{align}
and
\begin{align}
I_{F}  &  =\frac{1}{2}\left\langle \ln\left(  \det\left(  \frac{\mathbf{J}%
(\mathbf{x})}{2\pi e}\right)  \right)  \right\rangle _{\mathbf{x}%
}+H(X)\text{,}\label{IF}\\
I_{G}  &  =\frac{1}{2}\left\langle \ln\left(  \det\left(  \frac{\mathbf{G}%
(\mathbf{x})}{2\pi e}\right)  \right)  \right\rangle _{\mathbf{x}%
}+H(X)\text{,} \label{IG}%
\end{align}
where $\det\left(  \cdot\right)  $ denotes the matrix determinant, and
\begin{align}
&  \mathbf{J}(\mathbf{x})=\left\langle l^{\prime}(\mathbf{r}|\mathbf{x}%
)l^{\prime}(\mathbf{r}|\mathbf{x})^{T}\right\rangle _{\mathbf{r}|\mathbf{x}%
}\text{,}\label{Jx}\\
&  \mathbf{G}(\mathbf{x})=\mathbf{J}(\mathbf{x})+\mathbf{P}\left(
\mathbf{x}\right)  \text{,}\label{Gx}\\
&  \mathbf{P}(\mathbf{x})=-{{q}^{\prime\prime}(\mathbf{x})}\text{.} \label{Px}%
\end{align}
Here $\mathbf{J}(\mathbf{x})$ is FI matrix, which is symmetric and
positive-semidefinite, and $^{\prime}$ and ${^{\prime\prime}}$ denote the
first and second derivative for $\mathbf{x}$, respectively; that is,
$l^{\prime}(\mathbf{r}|\mathbf{x})=\partial l\left(  \mathbf{r}|\mathbf{x}%
\right)  /\partial\mathbf{x}$ and ${{q}^{\prime\prime}}(\mathbf{r}%
|\mathbf{x})=\partial^{2}\ln p\left(  \mathbf{x}\right)  /\partial
\mathbf{x}\partial\mathbf{x}^{T}$. If $p(\mathbf{r}|\mathbf{x})$ is twice
differentiable for $\mathbf{x}$, then
\begin{equation}
\mathbf{J}(\mathbf{x})=\left\langle l^{\prime}(\mathbf{r}|\mathbf{x}%
)l^{\prime}(\mathbf{r}|\mathbf{x})^{T}\right\rangle _{\mathbf{r}|\mathbf{x}%
}=-\left\langle l^{\prime\prime}(\mathbf{r}|\mathbf{x})\right\rangle
_{\mathbf{r}|\mathbf{x}}\text{.} \label{Jx1}%
\end{equation}
We denote the Kullback-Leibler (KL) divergence as
\begin{equation}
D\left(  \mathbf{x}||\mathbf{\hat{x}}\right)  =\int_{{\mathcal{R}}}p\left(
\mathbf{r}|\mathbf{x}\right)  \ln\frac{p\left(  \mathbf{r}|\mathbf{x}\right)
}{p\left(  \mathbf{r}|\mathbf{\hat{x}}\right)  }d\mathbf{r}\text{,}
\label{DKL}%
\end{equation}
and define
\begin{equation}
{{\mathcal{X}}}_{\omega}(\mathbf{x})=\left\{  \mathbf{\breve{x}}\in%
\mathbb{R}
^{K}:\left(  \mathbf{\breve{x}}-\mathbf{x}\right)  ^{T}\mathbf{G}%
(\mathbf{x})\left(  \mathbf{\breve{x}}-\mathbf{x}\right)  <N\omega
^{2}\right\}  \text{,} \label{X_neigb}%
\end{equation}
as the $\omega$ neighborhoods of $\mathbf{x}$, and its complementary set as
\begin{equation}
{{\mathcal{\bar{X}}}}_{\omega}(\mathbf{x})={{\mathcal{X}}}-{{\mathcal{X}}%
}_{\omega}(\mathbf{x})\text{,} \label{X_neigb-}%
\end{equation}
where $\omega$ is a positive number.

\subsection{Information-Theoretic Asymptotic Bounds and Approximations}

\label{Sec:2.2}In large $N$ limit, \cite{Brunel(1998-mutual)} proposed an
asymptotic relationship $I\sim I_{F}$ between MI and FI and gave a proof in
the case of one-dimensional input. Another proof is given by
\cite{Sompolinsky(2001-population)} although there appears to be an error in
their proof when replica trick is used (see Eq.~(B1) in their paper; their
Eq.~(B5) does not follow directly from the replica trick). For large but
finite $N$, $I\simeq I_{F}$ is usually a good approximation as long as the
inputs are low-dimensional. For the high-dimensional inputs, the approximation
may no longer be valid. For example, suppose $p(\mathbf{r}|\mathbf{x})$ is a
normal distribution with mean $\mathbf{A}^{T}\mathbf{x}$ and covariance matrix
$\mathbf{I}_{N}$ and $p(\mathbf{x})$ is a normal distribution with mean
${\boldsymbol{\mu}}$ and covariance matrix $\boldsymbol{\Sigma}$,
\begin{equation}
{p(\mathbf{r}|\mathbf{x})}=\mathcal{N}{\left(  \mathbf{A}^{T}\mathbf{x}%
\text{,\thinspace}\mathbf{I}_{N}\right)  }\text{, }\quad{p(\mathbf{x}%
)=\mathcal{N}\left(  {\boldsymbol{\mu}}\text{,\thinspace}\boldsymbol{\Sigma
}\right)  }\text{,} \label{Gau}%
\end{equation}
where $\mathbf{A}=\left[  \mathbf{a}_{1}\text{,\thinspace}\mathbf{a}%
_{2}\text{,\thinspace}\cdots\text{,\thinspace}\mathbf{a}_{N}\right]  $ is a
deterministic $K\times N$ matrix and $\mathbf{I}_{N}$\ is the $N\times N$
identity matrix. The MI $I$ is given by
\citep[see][for details]{Verdu(1986-IP-capacity),Guo(2005-mutual)}
\begin{equation}
I=\frac{1}{2}\ln\left(  \det\left(  \boldsymbol{\Sigma}^{1/2}\mathbf{AA}%
^{T}\boldsymbol{\Sigma}^{1/2}+\mathbf{I}_{K}\right)  \right)  \text{.}
\label{I_Gau}%
\end{equation}
If $\mathrm{rank}\left(  \mathbf{J}(\mathbf{x})\right)  <K$, then
$I_{F}=-\infty$. Notice that here $\mathbf{J}(\mathbf{x})=\mathbf{AA}^{T}$.
When $\mathbf{a}=\mathbf{a}_{1}=\cdots=\mathbf{a}_{N}$ and $\boldsymbol{\Sigma
}=\mathbf{I}_{K}$, then by (\ref{I_Gau}) and matrix determinant lemma, we
have
\begin{equation}
I=\frac{1}{2}\ln\left(  \det\left(  N\mathbf{aa}^{T}+\mathbf{I}_{K}\right)
\right)  =\frac{1}{2}\ln\left(  N\mathbf{a}^{T}\mathbf{a}+1\right)
\geq0\text{,} \label{I_Gau1}%
\end{equation}
and by (\ref{IF}),
\begin{equation}
I_{F}=\frac{1}{2}\ln\left(  \det\left(  N\mathbf{aa}^{T}\right)  \right)
=-\infty\text{,} \label{I_Gau2}%
\end{equation}
which is obviously incorrect as an approximation to $I$. For high-dimensional
inputs, the determinant $\det\left(  \mathbf{J}(\mathbf{x})\right)  $ may
become close to zero in practical applications. When the FI matrix
$\mathbf{J}(\mathbf{x})$ becomes degenerate, the regularity condition ensuring
the Cram\'{e}r-Rao paradigm of statistics is violated
\citep{Amari(2005-difficulty)}, in which case using $I_{F}$ as a proxy for $I$
incurs large errors.

In the following, we will show $I_{G}$ is a better approximation of $I$ for
high-dimensional inputs. For instance, for the above example, we can verify
that%
\begin{align}
I_{G}  &  =\frac{1}{2}\ln\left(  \det\left(  \frac{1}{2\pi e}\left(
\mathbf{AA}^{T}+\boldsymbol{\Sigma}^{-1}\right)  \right)  \right)  +\frac
{1}{2}\ln\left(  \det\left(  2\pi e\boldsymbol{\Sigma}\right)  \right)
\nonumber\\
&  =\frac{1}{2}\ln\left(  \det\left(  \boldsymbol{\Sigma}^{1/2}\mathbf{AA}%
^{T}\boldsymbol{\Sigma}^{1/2}+\mathbf{I}_{K}\right)  \right)  =I\text{,}
\label{I_Gau3}%
\end{align}
which is exactly equal to the MI $I$ given in (\ref{I_Gau}).

\subsubsection{Regularity Conditions}

\label{Conditions}First, we consider the following regularity conditions for
$p(\mathbf{x})$ and $p(\mathbf{r}|\mathbf{x})$:

\textbf{C1:} $p(\mathbf{x})$ and $p(\mathbf{r}|\mathbf{x})$ are twice
continuously differentiable for almost every $\mathbf{x}\in{{\mathcal{X}}}$,
where ${{\mathcal{X}}}$ is a convex set; $\mathbf{G}(\mathbf{x})$ is positive
definite and ${\left\Vert \mathbf{G}^{-1}\left(  \mathbf{x}\right)
\right\Vert }=O\left(  N^{-1}\right)  $, where ${\left\Vert {\mathbf{\cdot}%
}\right\Vert }$ denotes the Frobenius norm of a matrix; the following
conditions hold
\begin{subequations}
\begin{align}
&  {\left\Vert {q}{^{\prime}(\mathbf{x})}\right\Vert }<\infty\text{,}%
\label{C1.a1}\\
&  {\left\Vert { {q}^{\prime\prime}(\mathbf{x})}\right\Vert }<\infty
\text{,}\label{C1.a2}\\
&  {\left\langle \left(  N^{-1}l^{\prime}(\mathbf{r}|\mathbf{x})^{T}l^{\prime
}(\mathbf{r}|\mathbf{x})\right)  ^{2}\right\rangle _{\mathbf{r}|\mathbf{x}}%
}=O\left(  1\right)  \text{,}\label{C1.b1}\\
&  {\left\langle \left\Vert N^{-1}\left(  l^{\prime\prime}(\mathbf{r}%
|\mathbf{x})-\left\langle l^{\prime\prime}(\mathbf{r}|\mathbf{x})\right\rangle
_{\mathbf{r}|\mathbf{x}}\right)  \right\Vert ^{2}\right\rangle _{\mathbf{r}%
|\mathbf{x}}}=O\left(  N^{-1}\right)  \text{,} \label{C1.b2}%
\end{align}
and there exists an $\omega=\omega\left(  \mathbf{x}\right)  >0$\ for
$\forall\mathbf{\breve{x}}\in{{\mathcal{X}}}_{\omega}(\mathbf{x})$ such that%
\begin{equation}
N^{-1}{\left\Vert l^{\prime\prime}(\mathbf{r}|\mathbf{\breve{x}}%
)-l^{\prime\prime}(\mathbf{r}|\mathbf{x})\right\Vert }=O\left(  1\right)
\text{,} \label{C1.c}%
\end{equation}
where $O$\ indicates the big-O notation.

\textbf{C2:} The following condition is satisfied:%
\end{subequations}
\begin{subequations}
\begin{equation}
{\left\langle \left\Vert N^{-1}\left(  l^{\prime\prime}(\mathbf{r}%
|\mathbf{x})-\left\langle l^{\prime\prime}(\mathbf{r}|\mathbf{x})\right\rangle
_{\mathbf{r}|\mathbf{x}}\right)  \right\Vert ^{2\left(  m+1\right)
}\right\rangle _{\mathbf{r}|\mathbf{x}}}=O\left(  N^{-1}\right)  \text{,}
\label{C2.a}%
\end{equation}
for $m\in%
\mathbb{N}
$, and there exists $\eta>1$ such that
\begin{equation}
\mathbb{P}_{\mathbf{r}|\mathbf{x}}\left\{  {\det}\left(  \mathbf{G}\left(
\mathbf{x}\right)  \right)  ^{1/2}\int_{{{\mathcal{\bar{X}}}}_{\hat{\omega}%
}(\mathbf{x})}p(\mathbf{\hat{x}}|\mathbf{r})d\mathbf{\hat{x}}>\epsilon
p(\mathbf{x}|\mathbf{r})\right\}  =O\left(  N^{-\eta}\right)  \label{C2.b}%
\end{equation}
for all $\epsilon\in\left(  0\text{,\thinspace}1/2\right)  $, $\hat{\omega}%
\in\left(  0\text{,\thinspace}\omega\right)  $ and $\mathbf{x}\in
{{\mathcal{X}}}$ with $p(\mathbf{x})>0$, where $\mathbb{P}_{\mathbf{r}%
|\mathbf{x}}\left\{  \cdot\right\}  $ denotes the probability of $\mathbf{r}%
$\ given $\mathbf{x}$.

\bigskip

The regularity conditions \textbf{C1} and \textbf{C2} are needed to prove
theorems in later sections. They are expressed in mathematical forms that are
convenient for our proofs although their meanings may seem opaque at the first
glance. In the following, we will examine these conditions more closely. We
will use specific examples to make interpretations of these conditions more
transparent.

\end{subequations}

\begin{remark}
\label{Remark 1d}

In this paper we assume that the probability distributions $p(\mathbf{x})$
and $p(\mathbf{r|x})$ are piecewise twice continuously differentiable. This is
because we need to use Fisher information to approximate mutual information,
and Fisher information requires derivatives that make sense only for
continuous variables. Therefore, the methods developed in this paper apply
only to continuous input variables or stimulus variables. For discrete input
variables, we need alternative methods for approximating MI and we will
address this issue in a separate publication.

Conditions (\ref{C1.a1}) and (\ref{C1.a2}) state that the first and the second
derivatives of $q(\mathbf{x})=\ln p(\mathbf{x})$ have finite values for any
given $\mathbf{x}\in{{\mathcal{X}}}$. These two conditions are easily
satisfied by commonly encountered probability distributions because they only
require finite derivatives within ${{\mathcal{X}}}$, the set of allowable
inputs, and derivatives do not need to be finitely bounded.

\end{remark}

\begin{remark}
\label{Remark 1b}

Conditions (\ref{C1.b1})--(\ref{C2.a}) constrain how the first and the second
derivatives of $l(\mathbf{r}|\mathbf{x})=\ln p(\mathbf{r}|\mathbf{x})$ scale
with $N$, the number of neurons. These conditions are easily met when
$p(\mathbf{r}|\mathbf{x})$ is conditionally independent or when the noises of
different neurons are independent, i.e., $p(\mathbf{r}|\mathbf{x})=\prod
_{n=1}^{N} p(r_{n}|\mathbf{x})$.

We emphasize that it is possible to satisfy these conditions even when
$p(\mathbf{r}|\mathbf{x})$ is not independent or when the noises are
correlated, as shown later. Here we first examine these conditions closely 
assuming independence. For simplicity, our demonstration below is based on a one-dimensional
input variable ($K=1$). The conclusions  are readily generalizable to
higher dimensional inputs ($K>1$) because $K$ is fixed and does not affect the
scaling with $N$.

Assuming independence, we have $l(\mathbf{r}|x)=\sum_{n=1}^{N}l(r_{n}|{x})$
with $l(r_{n}|{x})=\ln p(r_{n}|{x})$, and the left-hand side of (\ref{C1.b1})
becomes
\begin{align}
&  N^{-2}{\left\langle l^{\prime}(\mathbf{r}|x)^{4}\right\rangle
_{\mathbf{r}|x}}\nonumber\\
&  =N^{-2}{\sum_{n_{1},\cdots,n_{4}=1}^{N}\left\langle l^{\prime}(r_{n_{1}%
}|x)l^{\prime}(r_{n_{2}}|x)l^{\prime}(r_{n_{3}}|x)l^{\prime}(r_{n_{4}%
}|x)\right\rangle _{r_{n_{1}},r_{n_{2}},r_{n_{3}},r_{n_{4}}|x}}\nonumber\\
&  =N^{-2}\left(  \sum_{n\neq m}{\left\langle l^{\prime}(r_{n}|x)^{2}%
\right\rangle _{r_{n}|x}\left\langle l^{\prime}(r_{m}|x)^{2}\right\rangle
_{r_{m}|x}}+\sum_{n=1}^{N}{\left\langle l^{\prime}(r_{n}|x)^{4}\right\rangle
_{r_{n}|x}}\right)  , \label{RemS1.1}%
\end{align}
where the final result contains only two terms with even numbers of duplicated
indices while all other terms in the expansion vanish because any unmatched or
lone index $k$ (from $n_{1},n_{2},n_{3},n_{4}$) should yield a vanishing
average:
\begin{equation}
\left\langle l^{\prime}(r_{k}|x)\right\rangle _{r_{k}|x}={\int_{{\mathcal{R}}%
}}p(r_{k}|x)l^{\prime}(r_{k}|x)dr_{k}=\frac{\partial}{\partial x}\left(
{\int_{{\mathcal{R}}}}p(r_{k}|x)dr_{k}\right)  =0. \label{RemS1.1a}%
\end{equation}
Thus, condition (\ref{C1.b1}) is satisfied as long as ${\left\langle
l^{\prime}(r_{n}|x)^{2}\right\rangle _{r_{n}|x}}$ and ${\left\langle
l^{\prime}(r_{n}|x)^{4}\right\rangle _{r_{n}|x}}$ are bounded by some finite
numbers, say, $a$ and $b$, respectively, 
because now (\ref{RemS1.1}) should scale as $N^{-2}\left(
aN(N-1)+bN\right)  =O(1)$. For instance, a Gaussian distribution always meets
this requirement because the averages of the second and fourth powers are
proportional to the second and fourth moments, which are both finite.
Note that the argument above works even if 
${\left\langle
l^{\prime}(r_{n}|x)^{4}\right\rangle _{r_{n}|x}}$
is not finitely bounded but scales as $O(N)$.

Similarly, under the assumption of independence, the left-hand side of
(\ref{C1.b2}) becomes
\begin{align}
&  N^{-2}{\left\langle \left(  l^{\prime\prime}(\mathbf{r}|x)-\left\langle
l^{\prime\prime}(\mathbf{r}|x)\right\rangle _{\mathbf{r}|x}\right)
^{2}\right\rangle _{\mathbf{r}|x}}\nonumber\\
&  =N^{-2}\sum_{n,m=1}^{N}{\left\langle \left(  l^{\prime\prime}%
(r_{n}|x)-\left\langle l^{\prime\prime}(r_{n}|x)\right\rangle _{r_{n}%
|x}\right)  \left(  l^{\prime\prime}(r_{m}|x)-\left\langle l^{\prime\prime
}(r_{m}|x)\right\rangle _{r_{m}|x}\right)  \right\rangle _{r_{n},r_{m}|x}%
}\nonumber\\
&  =N^{-2}\sum_{n=1}^{N}{\left\langle \left(  l^{\prime\prime}(r_{n}%
|x)-\left\langle l^{\prime\prime}(r_{n}|x)\right\rangle _{r_{n}|x}\right)
^{2}\right\rangle _{r_{n}|x}}\nonumber\\
&  =N^{-2}\sum_{n=1}^{N} \left(  {\left\langle l^{\prime\prime}(r_{n}%
|x)^{2}\right\rangle _{r_{n}|x}} -\left\langle l^{\prime\prime}(r_{n}%
|x)\right\rangle _{r_{n}|x}^{2} \right)  \text{,} \label{RemS1.2}%
\end{align}
where in the second step, the only remaining terms are the squares while all
other terms in the expansion with $n\neq m$ have vanished because
${\left\langle l^{\prime\prime}(r_{n}|x)-\left\langle l^{\prime\prime}%
(r_{n}|x)\right\rangle _{r_{n}|x}\right\rangle _{r_{n}|x}=0}$. Thus, condition
(\ref{C1.b2}) is satisfied as long as $\left\langle l^{\prime\prime}%
(r_{n}|x)\right\rangle _{r_{n}|x}$ and $\left\langle l^{\prime\prime}%
(r_{n}|x)^{2}\right\rangle _{r_{n}|x}$ are bounded so that (\ref{RemS1.2})
scales as $N^{-2}N=N^{-1}$.

Condition (\ref{C1.c}) is easily satisfied under the assumption of
independence. It is easy to show that this condition holds when $l^{\prime
\prime}(r_{n}|x)$ is bounded.

Condition (\ref{C2.a}) can be examined using similar arguments used for
(\ref{RemS1.1}) and (\ref{RemS1.2}). Assuming independence, we rewrite the
left-hand side of (\ref{C2.a}) as:
\begin{align}
&  N^{-z}{\left\langle \left(  l^{\prime\prime}(\mathbf{r}|x)-\left\langle
l^{\prime\prime}(\mathbf{r}|x)\right\rangle _{\mathbf{r}|x}\right)
^{z}\right\rangle _{\mathbf{r}|x}}\nonumber\\
&  =N^{-z}\sum_{n_{1},\cdots,n_{z}=1}^{N} \left\langle \left(  l^{\prime
\prime}({r}_{n_{1}}|x)-\left\langle l^{\prime\prime}({r}_{n_{1}}|x)
\right\rangle _{r_{n_{1}}|x} \right)  \cdots\left(  l^{\prime\prime}({r}%
_{1}|x)-\left\langle l^{\prime\prime}({r}_{n_{z}}|x) \right\rangle _{r_{n_{z}%
}|x} \right)  \right\rangle _{r_{n_{z}}|x}\nonumber\\
&  =N^{-z}\sum_{n_{1},\cdots,n_{m+1}=1}^{N} \left\langle \prod_{i=1}^{m+1}
\left(  l^{\prime\prime}({r}_{n_{i}}|x)-\left\langle l^{\prime\prime}%
({r}_{n_{i}}|x) \right\rangle _{r_{n_{i}}|x} \right)  ^{2} \right\rangle
_{r_{n_{i}}|x} +\cdots\label{RemS1.C2.a}%
\end{align}
where $z=2(m+1)\geq4$ is an even number. Any term in the expansion with an
unmatched index $n_{k}$ should vanish, as in the cases of (\ref{RemS1.1}) and
(\ref{RemS1.2}). When $\left\langle l^{\prime\prime}(r_{n}|x)\right\rangle
_{r_{n}|x}$ and $\left\langle l^{\prime\prime}(r_{n}|x)^{2}\right\rangle
_{r_{n}|x}$ are bounded, the leading term with respect to scaling with $N$ is
the product of squares as shown at the end of (\ref{RemS1.C2.a}) because all
the other non-vanishing terms increase more slowly with $N$. Thus
(\ref{RemS1.C2.a}) should scale as $N^{-z}N^{m+1}=N^{-m-1}$, which trivially
satisfies condition (\ref{C2.a}). \qed

In summary, conditions (\ref{C1.b1})--(\ref{C2.a}) are easy to meet when
$p(\mathbf{r}|\mathbf{x})$ is independent. It is sufficient to satisfy these
conditions when the averages of the first and second derivatives of
$l(\mathbf{r}|\mathbf{x})=\ln p(\mathbf{r}|\mathbf{x})$ as well as the
averages of their powers are bounded by finite numbers for all the neurons.

\end{remark}

\begin{remark}
\label{Remark 1c}

For neurons with correlated noises, if there exists an invertible
transformation that maps $\mathbf{r}$ to $\mathbf{\tilde{r}}$ such that
$p(\mathbf{\tilde{r}}|\mathbf{x})$ becomes conditionally independent, then
conditions \textbf{C1 }and \textbf{C2} are easily met in the space of the new
variables by the discussion in \textbf{Remark \ref{Remark 1b}}. This situation is
best illustrated by the familiar example of a population of neurons with
correlated noises that obey a multivariate Gaussian distribution:
\begin{equation}
{p(\mathbf{r}|x)=\dfrac{1}{\sqrt{\det\left(  2\pi\boldsymbol{\Sigma}\right)
}}\exp\left(  -\dfrac{1}{2}\left(  {\mathbf{r}}-\mathbf{g}\right)
^{T}\boldsymbol{\Sigma}^{-1}\left(  {\mathbf{r}}-\mathbf{g}\right)  \right)
}\text{,} \label{pGau}%
\end{equation}
where $\boldsymbol{\Sigma}$ is an $N\times N$ invertible covariance matrix and
$\mathbf{g} =\left(  g_{1}( x{\text{;\thinspace}}\boldsymbol{\theta}_{1}),
\cdots, g_{N}( x{\text{;\thinspace}}\boldsymbol{\theta}_{N}) \right)  $
describes the mean responses with $\boldsymbol{\theta}_{n}$ being the
parameter vector. Using the following transformation,
\begin{align}
\mathbf{\tilde{r}}  &  =\boldsymbol{\Sigma}^{-1/2}{\mathbf{r}}=\left(
\tilde{r}_{1}\text{,\thinspace}\tilde{r}_{2}\text{,\thinspace}\cdots
\text{,\thinspace}\tilde{r}_{N}\right)  ^{T}\text{,}\label{rg}\\
\mathbf{\tilde{g}}  &  =\boldsymbol{\Sigma}^{-1/2}\mathbf{g}=\left(  \tilde
{g}_{1}\text{,\thinspace}\tilde{g}_{2}\text{,\thinspace}\cdots
\text{,\thinspace}\tilde{g}_{N}\right)  ^{T}\text{,} \label{rg.b}%
\end{align}
we obtain the independent distribution:
\begin{equation}
{p(\mathbf{\tilde{r}}|x)={\prod_{n=1}^{N}}\dfrac{1}{\sqrt{2\pi}}\exp\left(
-\dfrac{1}{2}\left(  \tilde{r}_{n}-\tilde{g}_{n}\right)  ^{2}\right)
}\text{.} \label{pr-x}%
\end{equation}
In the special case when the correlation coefficient between any pair of
neurons is a constant $c$, $-1<c<1$, the noise covariance can be written as
\begin{equation}
\boldsymbol{\Sigma}=a\left(  (1-c)\mathbf{I}_{N}+c\mathbf{uu}^{T}\right)
\text{,} \label{Sigr}%
\end{equation}
where $a>0$ is a constant, $\mathbf{I}_{N}$ is the $N\times N$ identity
matrix, $\mathbf{u}=(1,1,\cdots,1)^{T}\in\mathbb{R}^{N\times1}$. The desired
transformation in (\ref{rg}) and (\ref{rg.b}) is given explicitly by
\begin{equation}
\boldsymbol{\Sigma}^{-1/2}{=b_{0}\left(  \mathbf{I}_{N}-{b_{1}}\mathbf{uu}%
^{T}\right)  }\text{,} \label{Sigr_2}%
\end{equation}
where%
\begin{equation}
b_{0}=\frac{1}{\sqrt{a(1-c)}}\text{,} \quad{b_{1}}=\frac{1}{N}\left(
1\pm\sqrt{\frac{1-c}{(N-1)c+1}}\right)  \text{.} \label{b0b1}%
\end{equation}
The new response variables defined in (\ref{rg}) and (\ref{rg.b}) now read:
\begin{align}
\tilde{r}_{n}  &  ={b_{0}}\left(  r_{n}-{b_{1}}\sum_{m=1}^{N}r_{m}\right)
\text{,}\label{rn1}\\
\tilde{g}_{n}  &  ={b_{0}}\left(  g_{n}- {b_{1}}\sum_{m=1}^{N}g_{m}\right)
\text{.} \label{gn1}%
\end{align}
Now we have the derivatives:
\begin{align}
&  l^{\prime}(\tilde{r}_{n}|x) =\left(  \tilde{r}_{n}-\tilde{g}_{n}\right)
\dfrac{\partial\tilde{g}_{n}}{\partial x},\label{l'rx}\\
&  l^{\prime\prime}(\tilde{r}_{n}|x)-\left\langle l^{\prime\prime}(\tilde
r_{n}|x)\right\rangle _{r_{n}|x}=\left(  \tilde{r}_{n}-\tilde{g}_{n}\right)
\dfrac{\partial^{2}\tilde{g}_{n}}{\partial x^{2}}\text{,} \label{l"rx}%
\end{align}
where $\partial\tilde{g}_{n}/\partial x$ and $\partial^{2}\tilde{g}%
_{n}/\partial x^{2}$ are finite as long as $\partial{g}_{n}/\partial x$ and
$\partial^{2} {g}_{n}/\partial x^{2}$ are finite. Conditions \textbf{C1
}and\textbf{ C2} are satisfied when the derivatives and their powers are
finitely bounded as shown before.

The example above shows explicitly that it is possible to meet conditions
\textbf{C1 }and \textbf{C2} even when the noises of different neurons are
correlated. More generally, if a nonlinear transformation exists that maps
correlated random variables into independent variables, then by similar
argument, conditions \textbf{C1 }and \textbf{C2} are satisfied when the
derivatives of the log likelihood functions and their powers in the new
variables are finitely bounded. Even when the desired transformation does not
exist or is unknown, it does not necessarily imply that conditions \textbf{C1
}and \textbf{C2} must be violated.

While the exact mathematical conditions for the existence of the desired transformation
are unclear, let us consider a specific example.
If a joint probability density function can be morphed smoothly and reversibly into a flat 
or constant density in a cube (hypercube), which is a special case of an independent distribution,  
then this morphing is the desired transformation.
Here we may replace the flat distribution by any known independent distribution
and the argument above should still work. 
So the desired transformation may exist under rather general conditions.

For correlated random variables, one may use algorithms such as independent component
analysis to find an invertible linear mapping that makes the new random
variables as independent as possible \citep{Bell(1997-independent)}, or use
neural networks to find related nonlinear mappings
\citep[]{Huang(2017-IC-information)}. These methods do not 
directly apply to the problem of testing conditions \textbf{C1 }and \textbf{C2} because
they work for a given network size $N$ and further development is needed to address the scaling behavior in the large
network limit $N\rightarrow\infty$.

Finally, we note that the value of the MI of the transformed independent
variables is the same as the MI of the original correlated variables because
of the invariance of MI under invertible transformation of marginal variables.
A related discussion is in Theorem 4.1 which involves a transformation of the
input variables rather than a transformation of the output variables as needed
here.

\end{remark}

\begin{remark}
\label{Remark 1a} Condition (\ref{C2.b}) is satisfied if a positive number
$\delta$ and a positive integer $m$ exist such that%
\begin{equation}
{\det}\left(  \mathbf{G}\left(  \mathbf{x}\right)  \right)  ^{1/2}%
\int_{{{\mathcal{\bar{X}}}}_{\hat{\omega}}(\mathbf{x})}\int_{{{\mathcal{B}}%
}_{m\text{,\thinspace}\delta}\left(  \mathbf{x}\right)  }p(\mathbf{r|\hat{x}%
})p(\mathbf{\hat{x}})d\mathbf{r}d\mathbf{\hat{x}}=O\left(  N^{-\eta}\right)
\text{,} \label{RemS1.C.1}%
\end{equation}
for all $\mathbf{\hat{x}}\in{{\mathcal{\bar{X}}}}_{\hat{\omega}}(\mathbf{x})$,
where%
\begin{equation}
{{\mathcal{B}}}_{m\text{,\thinspace}\delta}\left(  \mathbf{x}\right)
=\left\{  \mathbf{r}\in{\mathcal{R}}:-\delta N^{\frac{\eta-1}{2m}}%
\mathbf{G}(\mathbf{x})<l^{\prime\prime}(\mathbf{r}|\mathbf{x})-\left\langle
l^{\prime\prime}(\mathbf{r}|\mathbf{x})\right\rangle _{\mathbf{r}|\mathbf{x}%
}<\delta N^{\frac{\eta-1}{2m}}\mathbf{G}(\mathbf{x})\right\}
\label{RemS1.C.1a}%
\end{equation}
and $\mathbf{A}<\mathbf{B}$ means that the matrix $\mathbf{A}-\mathbf{B}$\ is
negative definite. A proof is as follows.

First note that in (\ref{RemS1.C.1a}) if $\eta\rightarrow1$\ or $m\rightarrow
\infty$, then $N^{\frac{\eta-1}{2m}}\rightarrow1$. Following Markov's
inequality, condition \textbf{C2} and (\ref{A.vv1<}) in the Appendix, for the
complementary set of ${{\mathcal{B}}}_{m\text{,\thinspace}\delta}\left(
\mathbf{x}\right)  $, ${{\mathcal{\bar{B}}}}_{m\text{,\thinspace}\delta
}\left(  \mathbf{x}\right)  $, we have
\begin{align}
\mathbb{P}_{\mathbf{r}|\mathbf{x}}\left\{  {{\mathcal{\bar{B}}}}%
_{m\text{,\thinspace}\delta}\left(  \mathbf{x}\right)  \right\}   &
\leq\mathbb{P}_{\mathbf{r}|\mathbf{x}}\left\{  \left\Vert \mathbf{B}%
_{0}\right\Vert ^{2}\geq\delta^{2}N^{\frac{\eta-1}{m}}\right\} \nonumber\\
&  \leq\delta^{-2m}N^{-\left(  \eta-1\right)  }\left\langle \left\Vert
\mathbf{B}_{0}\right\Vert ^{2m}\right\rangle _{\mathbf{r}|\mathbf{x}%
}\nonumber\\
&  =O\left(  N^{-\eta}\right)  \text{,} \label{RemS1.C.2}%
\end{align}
where%
\begin{equation}
\mathbf{B}_{0}=\mathbf{G}^{-1/2}(\mathbf{x})\left(  l^{\prime\prime
}(\mathbf{r}|\mathbf{x})-\left\langle l^{\prime\prime}(\mathbf{r}%
|\mathbf{x})\right\rangle _{\mathbf{r}|\mathbf{x}}\right)  \mathbf{G}%
^{-1/2}(\mathbf{x})\text{.} \label{RemS1.C.2a}%
\end{equation}
Define the set,%
\begin{equation}
{{\mathcal{A}}}_{\hat{\omega}}\left(  \mathbf{x}\right)  =\left\{
\mathbf{r}\in{\mathcal{R}}:\int_{{{\mathcal{\bar{X}}}}_{\hat{\omega}%
}(\mathbf{x})}\frac{p(\mathbf{\hat{x}}|\mathbf{r})}{p(\mathbf{x}|\mathbf{r}%
)}d\mathbf{\hat{x}}>{\det}\left(  \mathbf{G}\left(  \mathbf{x}\right)
\right)  ^{-1/2}\epsilon\right\}  \text{,} \label{RemS1.C.3}%
\end{equation}
then it follows from the Markov's inequality and (\ref{RemS1.C.1}) that%
\begin{align}
&  \mathbb{P}_{\mathbf{r}|\mathbf{x}}\left\{  {{\mathcal{A}}}_{\hat{\omega}%
}\left(  \mathbf{x}\right)  \cap{{\mathcal{B}}}_{m\text{,\thinspace}\delta
}\left(  \mathbf{x}\right)  \right\} \nonumber\\
&  \leq\epsilon^{-1}{\det}\left(  \mathbf{G}\left(  \mathbf{x}\right)
\right)  ^{1/2}\int_{{{\mathcal{B}}}_{m\text{,\thinspace}\delta}\left(
\mathbf{x}\right)  }\int_{{{\mathcal{\bar{X}}}}_{\hat{\omega}}(\mathbf{x}%
)}\frac{p(\mathbf{r|\hat{x}})p(\mathbf{\hat{x}})}{p(\mathbf{x})}%
d\mathbf{\hat{x}}d\mathbf{r}\nonumber\\
&  =O\left(  N^{-\eta}\right)  \text{.} \label{RemS1.C.4}%
\end{align}
Hence, we get%
\[
\mathbb{P}_{\mathbf{r}|\mathbf{x}}\left\{  {{\mathcal{A}}}_{\hat{\omega}%
}\left(  \mathbf{x}\right)  \right\}  \leq\mathbb{P}_{\mathbf{r}|\mathbf{x}%
}\left\{  {{\mathcal{A}}}_{\hat{\omega}}\left(  \mathbf{x}\right)
\cap{{\mathcal{B}}}_{m\text{,\thinspace}\delta}\left(  \mathbf{x}\right)
\right\}  +\mathbb{P}_{\mathbf{r}|\mathbf{x}}\left\{  {{\mathcal{\bar{B}}}%
}_{m\text{,\thinspace}\delta}\left(  \mathbf{x}\right)  \right\}  =O\left(
N^{-\eta}\right)  \text{,}%
\]
which yields the condition (\ref{C2.b}).

Condition (\ref{RemS1.C.1}) is satisfied if there exists a positive number
$\varsigma$\ such that
\begin{equation}
\ln\frac{p(\mathbf{r|x})}{p(\mathbf{r|\hat{x}})}\geq N\varsigma
\label{RemS2.D.1}%
\end{equation}
for all $\mathbf{\hat{x}}\in{{\mathcal{\bar{X}}}}_{\hat{\omega}}(\mathbf{x})$
and $\mathbf{r}\in{{\mathcal{B}}}_{m\text{,\thinspace}\delta}\left(
\mathbf{x}\right)  $. This is because%
\begin{align}
&  {\det}\left(  \mathbf{G}\left(  \mathbf{x}\right)  \right)  ^{1/2}%
\int_{{{\mathcal{\bar{X}}}}_{\hat{\omega}}(\mathbf{x})}\int_{{{\mathcal{B}}%
}_{m\text{,\thinspace}\delta}\left(  \mathbf{x}\right)  }p(\mathbf{r|\hat{x}%
})p(\mathbf{\hat{x}})d\mathbf{r}d\mathbf{\hat{x}}\nonumber\\
&  ={\det}\left(  \mathbf{G}\left(  \mathbf{x}\right)  \right)  ^{1/2}%
\int_{{{\mathcal{\bar{X}}}}_{\hat{\omega}}(\mathbf{x})}p(\mathbf{\hat{x}}%
)\int_{{{\mathcal{B}}}_{m\text{,\thinspace}\delta}\left(  \mathbf{x}\right)
}p(\mathbf{r|x})\exp\left(  -\ln\tfrac{p(\mathbf{r|x})}{p(\mathbf{r|\hat{x}}%
)}\right)  d\mathbf{r}d\mathbf{\hat{x}}\nonumber\\
&  \leq{\det}\left(  \mathbf{G}\left(  \mathbf{x}\right)  \right)  ^{1/2}%
\exp\left(  -N\varsigma\right)  =O\left(  N^{K/2}e^{-N\varsigma}\right)
\text{.} \label{RemS2.D.2}%
\end{align}
Here notice that ${\det}\left(  \mathbf{G}\left(  \mathbf{x}\right)  \right)
^{1/2}=O\left(  N^{K/2}\right)  $ (see Eq. \ref{A.detG}).

Inequality (\ref{RemS2.D.1}) holds if $p(\mathbf{r}|\mathbf{x})$ is
conditionally independent, namely, $p(\mathbf{r}|\mathbf{x})={\prod_{n=1}%
^{N}p(}r{_{n}|}\mathbf{x}{)}$, with
\begin{equation}
\ln\frac{p(r{_{n}}\mathbf{|x})}{p(r{_{n}}\mathbf{|\hat{x}})}\geq
\varsigma\text{,\ }\forall n=1\text{,\thinspace}2\text{,\thinspace}%
\cdots\text{,\thinspace}N\text{,} \label{RemS1.D.3}%
\end{equation}
for all $\mathbf{\hat{x}}\in{{\mathcal{\bar{X}}}}_{\hat{\omega}}(\mathbf{x})$
and $\mathbf{r}\in{{\mathcal{B}}}_{m\text{,\thinspace}\delta}\left(
\mathbf{x}\right)  $. Consider the inequality $\left\langle \ln p(r{_{n}%
}\mathbf{|x})/p(r{_{n}}\mathbf{|\hat{x}})\right\rangle _{r{_{n}}\mathbf{|x}%
}\geq0$ where the equality holds when $\mathbf{x}=\mathbf{\hat{x}}$. If there
is only one extreme point at $\mathbf{\hat{x}}=\mathbf{x}$\ for $\mathbf{\hat
{x}}\in{{\mathcal{X}}}_{\omega}\left(  {\mathbf{x}}\right)  $, then generally
it is easy to find a set ${{\mathcal{B}}}_{m\text{,\thinspace}\delta}\left(
\mathbf{x}\right)  $\ that satisfies (\ref{RemS1.D.3}), so that (\ref{C2.b})
holds.\qed

\end{remark}

\subsubsection{Asymptotic Bounds and Approximations for Mutual Information}

Let
\begin{equation}
\xi={N^{-1}}\left\langle \left\Vert \left(  {{l^{\prime\prime}(\mathbf{r}%
|\mathbf{x})}-\left\langle {l^{\prime\prime}(\mathbf{r}|\mathbf{x}%
)}\right\rangle _{{\mathbf{r}|\mathbf{x}}}}\right)  \mathbf{G}{^{-1}\left(
\mathbf{x}\right)  l^{\prime}(\mathbf{r}|\mathbf{x})}\right\Vert
^{2}\right\rangle _{{_{\mathbf{r}|\mathbf{x}}}}\text{,} \label{XiDef}%
\end{equation}
and it follows from conditions \textbf{C1} and \textbf{C2} that%
\begin{align}
\xi &  \leq\left\Vert N\mathbf{G}{^{-1}\left(  \mathbf{x}\right)  }\right\Vert
^{2}\left\langle \left\Vert {N^{-1}}\left(  {{l^{\prime\prime}(\mathbf{r}%
|\mathbf{x})}-\left\langle {l^{\prime\prime}(\mathbf{r}|\mathbf{x}%
)}\right\rangle _{{\mathbf{r}|\mathbf{x}}}}\right)  \right\Vert ^{4}%
\right\rangle _{{_{\mathbf{r}|\mathbf{x}}}}^{1/2}\nonumber\\
&  \times\left\langle \left(  N^{-1}l^{\prime}(\mathbf{r}|\mathbf{x}%
)^{T}l^{\prime}(\mathbf{r}|\mathbf{x})\right)  ^{2}\right\rangle
_{{_{\mathbf{r}|\mathbf{x}}}}^{1/2}\nonumber\\
&  =O\left(  N^{-1/2}\right)  \text{.} \label{Xi<}%
\end{align}
Moreover, if $p(\mathbf{r}|\mathbf{x})$ is conditionally independent, then by
an argument similar to the discussion in \textbf{Remark} \ref{Remark 1b}, we
can verify that the condition $\xi=O\left(  N^{-1}\right)  $ is easily met.

In the following we state several conclusions about the MI, and their proofs
are given in Appendix.

\begin{lemma}
\label{Lemma 1}If condition \textbf{C1 }holds, then the MI $I$ has an
asymptotic upper bound for integer $N$,
\begin{equation}
I\leq I_{G}+O\left(  N^{-1}\right)  \text{.} \label{Lma1}%
\end{equation}
Moreover, if Eqs. (\ref{C1.b1}) and (\ref{C1.b2}) are replaced by%
\begin{subequations}
\begin{align}
&  {\left\langle \left\vert N^{-1}l^{\prime}(\mathbf{r}|\mathbf{x}%
)^{T}l^{\prime}(\mathbf{r}|\mathbf{x})\right\vert ^{1+\tau}\right\rangle
_{\mathbf{r}|\mathbf{x}}}=O\left(  1\right)  \text{,}\label{Lma1.1a}\\
&  {\left\langle \left\Vert N^{-1}\left(  l^{\prime\prime}(\mathbf{r}%
|\mathbf{x})-\left\langle l^{\prime\prime}(\mathbf{r}|\mathbf{x})\right\rangle
_{\mathbf{r}|\mathbf{x}}\right)  \right\Vert ^{2}\right\rangle _{\mathbf{r}%
|\mathbf{x}}}=o\left(  1\right)  \text{,} \label{Lma1.1b}%
\end{align}
for some $\tau\in\left(  0\text{,\thinspace}1\right)  $, where $o$\ indicates
the Little-O notation, then the MI has the following asymptotic upper bound
for integer $N$,
\end{subequations}
\begin{equation}
I\leq I_{G}+o\left(  1\right)  \text{.} \label{Lma1.2}%
\end{equation}

\end{lemma}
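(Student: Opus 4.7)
The plan is to derive the upper bound via a variational (ELBO-type) lower bound on $\ln p(\mathbf{r})$, which in turn translates into an upper bound on $I$. Using $L(\mathbf{r}|\mathbf{\hat{x}}) = \ln p(\mathbf{r}|\mathbf{\hat{x}}) + \ln p(\mathbf{\hat{x}})$, the identity $p(\mathbf{r}) = \int_{\mathcal{X}} e^{L(\mathbf{r}|\mathbf{\hat{x}})}\,d\mathbf{\hat{x}}$ lets one rewrite
\[
I = H(X) - \left\langle \ln\int_{\mathcal{X}} e^{L(\mathbf{r}|\mathbf{\hat{x}}) - L(\mathbf{r}|\mathbf{x})}\,d\mathbf{\hat{x}}\right\rangle_{\mathbf{r},\mathbf{x}},
\]
so producing an upper bound on $I$ reduces to producing a lower bound on the inner log-integral.

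First I would apply Jensen's inequality with a well-chosen probability density $\phi_{\mathbf{r},\mathbf{x}}$ supported in $\mathcal{X}$, yielding
\[
\ln\int_{\mathcal{X}} e^{L(\mathbf{r}|\mathbf{\hat{x}}) - L(\mathbf{r}|\mathbf{x})}\,d\mathbf{\hat{x}} \geq \left\langle L(\mathbf{r}|\mathbf{\hat{x}}) - L(\mathbf{r}|\mathbf{x})\right\rangle_{\phi} + H(\phi).
\]
A convenient choice is the ``one-step Newton'' Gaussian $\phi = \mathcal{N}\bigl(\mathbf{\hat{x}};\,\mathbf{x} + \mathbf{G}(\mathbf{x})^{-1}L'(\mathbf{r}|\mathbf{x}),\,\mathbf{G}(\mathbf{x})^{-1}\bigr)$, truncated to $\mathcal{X}_{\omega}(\mathbf{x})\subseteq\mathcal{X}$; because its covariance scales as $\|\mathbf{G}^{-1}\| = O(N^{-1})$ while the shift has magnitude of order $N^{-1/2}$, the truncation cost is exponentially small in $N$.

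Next I would Taylor expand $L(\mathbf{r}|\mathbf{\hat{x}})$ to second order about $\mathbf{x}$,
\[
L(\mathbf{r}|\mathbf{\hat{x}}) - L(\mathbf{r}|\mathbf{x}) = L'(\mathbf{r}|\mathbf{x})^{T}(\mathbf{\hat{x}}-\mathbf{x}) + \tfrac{1}{2}(\mathbf{\hat{x}}-\mathbf{x})^{T} L''(\mathbf{r}|\mathbf{x})(\mathbf{\hat{x}}-\mathbf{x}) + R_{3},
\]
compute $\left\langle \cdot\right\rangle_{\phi}$ of each term using Gaussian moments, and then average over $\mathbf{r}|\mathbf{x}$ using $\langle l'\rangle_{\mathbf{r}|\mathbf{x}} = \mathbf{0}$, $\langle l'\,l'^{T}\rangle_{\mathbf{r}|\mathbf{x}} = \mathbf{J}(\mathbf{x})$, $\langle l''\rangle_{\mathbf{r}|\mathbf{x}} = -\mathbf{J}(\mathbf{x})$, so that $\langle L''\rangle_{\mathbf{r}|\mathbf{x}} = -\mathbf{G}(\mathbf{x})$. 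After the algebraic cancellations the leading-order contributions combine with $H(\phi) = \tfrac{1}{2}\ln\det(2\pi e\,\mathbf{G}(\mathbf{x})^{-1})$ to deliver, after integrating over $p(\mathbf{x})$, exactly $-I_{G}+H(X)$.

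The main obstacle will be controlling the two residual error contributions. (i) The Taylor remainder $R_{3}$ is handled by the local uniform bound (\ref{C1.c}) on $N^{-1}\|l''(\mathbf{r}|\mathbf{\breve{x}})-l''(\mathbf{r}|\mathbf{x})\|$ inside $\mathcal{X}_{\omega}(\mathbf{x})$, combined with the concentration of $\phi$ on a scale of order $N^{-1/2}$. (ii) The stochastic cross term $\tfrac{1}{2}L'^{T}\mathbf{G}^{-1}L''\mathbf{G}^{-1}L'$ couples $L'L'^{T}$ with the fluctuation $L''-\langle L''\rangle_{\mathbf{r}|\mathbf{x}}$, and bounding it is essentially the content of (\ref{Xi<}): Cauchy--Schwarz together with (\ref{C1.b1})--(\ref{C1.b2}) and $\|N\mathbf{G}^{-1}\| = O(1)$, supplemented by the identity $\mathbf{G}^{-1}\mathbf{J} = \mathbf{I} - \mathbf{G}^{-1}\mathbf{P}$, yields the required $O(N^{-1})$ bound. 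Collecting everything produces $I \leq I_{G} + O(N^{-1})$. For the sharper statement (\ref{Lma1.2}) under the relaxed hypotheses (\ref{Lma1.1a})--(\ref{Lma1.1b}), the same Jensen--Taylor strategy applies verbatim; the only place the hypotheses enter is in bounding (i) and (ii), and the $o(1)$ moment bounds supplied by (\ref{Lma1.1a})--(\ref{Lma1.1b}) deliver an $o(1)$ bound on the combined error, giving the stated slack.
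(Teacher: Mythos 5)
Your overall strategy coincides with the paper's: both reduce the upper bound on $I$ to a lower bound on $\bigl\langle\ln\int\exp\bigl(L(\mathbf{r}|\hat{\mathbf{x}})-L(\mathbf{r}|\mathbf{x})\bigr)\,d\hat{\mathbf{x}}\bigr\rangle_{\mathbf{r},\mathbf{x}}$, restrict the integral to ${{\mathcal{X}}}_{\omega}(\mathbf{x})$, Taylor-expand to second order with the remainder controlled by (\ref{C1.c}), and apply Jensen's inequality against a Gaussian centered at the one-step Newton point $\mathbf{x}+\mathbf{G}^{-1}(\mathbf{x})L^{\prime}(\mathbf{r}|\mathbf{x})$ with covariance essentially $\mathbf{G}^{-1}(\mathbf{x})$ --- which is exactly the measure the paper obtains by completing the square before its Jensen step. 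The entropy term $H(\phi)=\tfrac12\ln\det\bigl(2\pi e\,\mathbf{G}^{-1}(\mathbf{x})\bigr)$ and the leading Gaussian moments deliver $-I_{G}+H(X)$ in both treatments, and the truncation and tail estimates you sketch match the paper's (\ref{A.Lma1P.GauTail})--type bounds.

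There is, however, one concrete gap in your error accounting for item (ii). The stochastic cross term you isolate is $\tfrac12\, L^{\prime T}\mathbf{G}^{-1}\bigl(l^{\prime\prime}-\langle l^{\prime\prime}\rangle_{\mathbf{r}|\mathbf{x}}\bigr)\mathbf{G}^{-1}L^{\prime}$, which is not sign-definite, so for the lower bound on the log-integral you must bound its magnitude. Cauchy--Schwarz together with (\ref{C1.b1}), (\ref{C1.b2}) and $\Vert N\mathbf{G}^{-1}\Vert=O(1)$ gives $\langle\Vert\mathbf{B}_{0}\Vert^{2}\rangle_{\mathbf{r}|\mathbf{x}}^{1/2}=O(N^{-1/2})$ times an $O(1)$ factor, i.e.\ only $O(N^{-1/2})$ --- this is precisely the computation in (\ref{Xi<}), and it is exactly why the paper must \emph{separately} assume $\xi=O(N^{-1})$ (with $\xi$ as in \ref{XiDef}) in Lemma 2.2 and Theorem 2.1, an assumption Lemma 2.1 does not make. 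As written, your argument therefore yields $I\leq I_{G}+O(N^{-1/2})$ under condition \textbf{C1} alone, not (\ref{Lma1}). The paper sidesteps this in the upper-bound direction by arranging the Jensen step so that the residual fluctuation enters as $\langle\mathbf{z}^{T}\mathbf{B}_{0}\mathbf{z}\rangle_{\mathbf{z}}$ (which averages to $O(N^{-1})$ since $\langle\mathbf{B}_{0}\rangle_{\mathbf{r}|\mathbf{x}}=\mathbf{0}$) plus a term of the form $\tilde{\mathbf{v}}^{T}\mathbf{B}_{0}^{2}\tilde{\mathbf{v}}\geq0$ that can simply be discarded when lower-bounding (see \ref{A.Lma1P.2}); the first power of $\mathbf{B}_{0}$ sandwiched between two copies of $L^{\prime}$ never has to be bounded at rate $N^{-1}$. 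To repair your version you would either need to reproduce that sign-based bookkeeping or add $\xi=O(N^{-1})$ as a hypothesis, which would weaken the lemma.
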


\begin{lemma}
\label{Lemma 2} If conditions \textbf{C1 }and\textbf{ C2 }hold, $\xi=O\left(
N^{-1}\right)  $, then the MI has an asymptotic lower bound for integer $N$,
\begin{equation}
I\geq I_{G}+O\left(  N^{-1}\right)  \text{.} \label{Lma2}%
\end{equation}
Moreover, if condition \textbf{C1 } holds but Eqs. (\ref{C1.b1}) and
(\ref{C1.b2}) are replaced by (\ref{Lma1.1a}) and (\ref{Lma1.1b}), and
inequality (\ref{C2.b}) in \textbf{C2} also holds for $\eta>0$, then the MI
has the following asymptotic lower bound for integer $N$,
\begin{equation}
I\geq I_{G}+o\left(  1\right)  \text{.} \label{Lma2a}%
\end{equation}

\end{lemma}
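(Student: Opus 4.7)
The plan is to convert $I=H(X)+\langle L(\mathbf{r}|\mathbf{x})\rangle_{\mathbf{r},\mathbf{x}}-\langle\ln p(\mathbf{r})\rangle_{\mathbf{r}}$ into a lower bound by producing a tight \emph{upper} bound on $p(\mathbf{r})=\int e^{L(\mathbf{r}|\hat{\mathbf{x}})}d\hat{\mathbf{x}}$ via a Laplace expansion about the true $\mathbf{x}$. I split the integral into the neighborhood piece on $\mathcal{X}_{\hat{\omega}}(\mathbf{x})$ and a tail piece on $\bar{\mathcal{X}}_{\hat{\omega}}(\mathbf{x})$. The tail is handled directly by (\ref{C2.b}): it controls the integrated posterior mass on $\bar{\mathcal{X}}_{\hat{\omega}}(\mathbf{x})$ by $\epsilon\,p(\mathbf{x}|\mathbf{r})/\det(\mathbf{G}(\mathbf{x}))^{1/2}$ except on an event of $p(\mathbf{r}|\mathbf{x})$-probability $O(N^{-\eta})$, so it contributes at most a multiplicative factor $1+O(\epsilon)$ to the main Laplace term, whose logarithm enters the final estimate as $O(N^{-1})$ once $\eta>1$.

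On $\mathcal{X}_{\hat{\omega}}(\mathbf{x})$, I Taylor expand
\[
L(\mathbf{r}|\hat{\mathbf{x}})=L(\mathbf{r}|\mathbf{x})+L'(\mathbf{r}|\mathbf{x})^{T}(\hat{\mathbf{x}}-\mathbf{x})+\tfrac{1}{2}(\hat{\mathbf{x}}-\mathbf{x})^{T}L''(\mathbf{r}|\tilde{\mathbf{x}})(\hat{\mathbf{x}}-\mathbf{x})
\]
with $\tilde{\mathbf{x}}$ on the segment from $\mathbf{x}$ to $\hat{\mathbf{x}}$, use (\ref{C1.c}) to replace $L''(\mathbf{r}|\tilde{\mathbf{x}})$ by $L''(\mathbf{r}|\mathbf{x})$ with uniformly controlled error, and decompose $-L''(\mathbf{r}|\mathbf{x})=\mathbf{G}(\mathbf{x})-\boldsymbol{\Delta}(\mathbf{r},\mathbf{x})$ with $\boldsymbol{\Delta}=l''(\mathbf{r}|\mathbf{x})-\langle l''(\mathbf{r}|\mathbf{x})\rangle_{\mathbf{r}|\mathbf{x}}$, whose scaled second moment is $O(N^{-1})$ by (\ref{C1.b2}). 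Completing the square against the random Hessian and then expanding back to $\mathbf{G}(\mathbf{x})$ produces an upper bound of the form $e^{L(\mathbf{r}|\mathbf{x})}(2\pi)^{K/2}\det(\mathbf{G}(\mathbf{x}))^{-1/2}\exp\bigl(\tfrac{1}{2}L'^{T}\mathbf{G}^{-1}L'+E(\mathbf{r},\mathbf{x})\bigr)$, where $E$ collects the cubic Taylor remainder controlled by (\ref{C1.b1}) and (\ref{C1.c}), a determinant-expansion piece of size $O(\|\boldsymbol{\Delta}/N\|^{2})$, and the quadratic cross-term $L'^{T}\mathbf{G}^{-1}\boldsymbol{\Delta}\mathbf{G}^{-1}L'$ arising from the linearization of the random Hessian. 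Taking logs, averaging over $\mathbf{r}$ and $\mathbf{x}$, using $\langle l'\rangle_{\mathbf{r}|\mathbf{x}}=0$ and $\langle l'l'^{T}\rangle_{\mathbf{r}|\mathbf{x}}=\mathbf{J}(\mathbf{x})$, and simplifying the $q'^{T}\mathbf{G}^{-1}q'$ part of $\langle L'^{T}\mathbf{G}^{-1}L'\rangle$ by integration by parts in $\mathbf{x}$, collapses the leading contribution to $I_{G}+O(N^{-1})$.

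The main obstacle is the random-Hessian cross-term $\langle L'^{T}\mathbf{G}^{-1}\boldsymbol{\Delta}\mathbf{G}^{-1}L'\rangle$, which couples fluctuations of $l''$ with the score $l'$ and is not bounded by any of the marginal moment conditions in \textbf{C1}; the hypothesis $\xi=O(N^{-1})$ is precisely what is needed to control it (via Cauchy--Schwarz against $\langle\|\mathbf{G}^{-1/2}l'\|^{2}\rangle=\mathrm{tr}(\mathbf{G}^{-1}\mathbf{J})=O(1)$), and this is also why the lower bound is strictly harder than Lemma~\ref{Lemma 1}, which can be obtained by a one-sided Jensen argument without ever exposing this cross-term. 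The $o(1)$ variant follows by rerunning the same plan with (\ref{Lma1.1a}) and (\ref{Lma1.1b}) replacing (\ref{C1.b1}) and (\ref{C1.b2}): (\ref{Lma1.1a}) supplies the uniform integrability of the squared score needed for a dominated-convergence argument on the cubic remainder, (\ref{Lma1.1b}) forces the determinant-expansion error to vanish in the limit, and $\eta>0$ in (\ref{C2.b}) suffices because the tail contribution now only needs to be $o(1)$ rather than $O(N^{-1})$.
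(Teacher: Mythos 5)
Your plan follows the paper's proof essentially step for step: the paper likewise upper-bounds $\left\langle \ln\left(p(\mathbf{r})/(p(\mathbf{r}|\mathbf{x})p(\mathbf{x}))\right)\right\rangle_{\mathbf{r},\,\mathbf{x}}$ by splitting the integral defining $p(\mathbf{r})$ into the neighborhood ${{\mathcal{X}}}_{\hat{\omega}}(\mathbf{x})$ and its complement, handling the tail with (\ref{C2.b}) plus Jensen's inequality, Taylor-expanding $L(\mathbf{r}|\mathbf{\hat{x}})$ with the Hessian fluctuation $\mathbf{B}_{0}$ controlled by (\ref{C1.b2}) and (\ref{C2.a}), and invoking $\xi=O(N^{-1})$ exactly where you do, namely to bound the score--Hessian cross-term $\langle \tilde{\mathbf{v}}^{T}\mathbf{B}_{0}^{2}\tilde{\mathbf{v}}\rangle_{\mathbf{r}|\mathbf{x}}$ in (\ref{A.Lma2P.5D}). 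The only presentational difference is that the paper keeps the square completed against the deterministic $(1-\epsilon)\mathbf{I}_{K}$ and isolates the event $\bar{\Omega}_{\epsilon}$ where $\mathbf{B}_{0}$ is not dominated by $\epsilon\mathbf{I}_{K}$ (which your completion of the square against the random Hessian would also need in order to guarantee positive-definiteness), but the resulting error terms and the roles of $\eta>1$ versus $\eta>0$ are the same.
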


\begin{theorem}
\label{Theorem 1}If conditions \textbf{C1 }and\textbf{ C2} hold, $\xi=O\left(
N^{-1}\right)  $, then the MI has the following asymptotic equality for
integer $N$,
\begin{equation}
{I=I_{G}}+O\left(  N^{-1}\right)  \text{.} \label{Thm1}%
\end{equation}
For more relaxed conditions, suppose condition \textbf{C1 }holds\textbf{\ }but
Eqs. (\ref{C1.b1}) and (\ref{C1.b2}) are replaced by (\ref{Lma1.1a}) and
(\ref{Lma1.1b}), and inequality (\ref{C2.b}) in \textbf{C2} also holds for
$\eta>0$, then the MI has an asymptotic equality for integer $N$,
\begin{equation}
I=I_{G}+o\left(  1\right)  \text{.} \label{Thm1.1}%
\end{equation}

\end{theorem}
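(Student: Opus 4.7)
The plan is to derive Theorem \ref{Theorem 1} directly by sandwiching $I$ between the asymptotic upper bound of Lemma \ref{Lemma 1} and the asymptotic lower bound of Lemma \ref{Lemma 2}. All the real analytic content lives in those two lemmas (whose proofs are deferred to the Appendix); the theorem itself amounts to combining the two one-sided estimates with matching orders.

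For the first statement $I = I_{G} + O(N^{-1})$, I would first invoke Lemma \ref{Lemma 1} under condition \textbf{C1} alone to obtain $I \leq I_{G} + O(N^{-1})$, and then invoke Lemma \ref{Lemma 2} under the full set of hypotheses (\textbf{C1}, \textbf{C2}, and $\xi = O(N^{-1})$) to obtain the matching $I \geq I_{G} + O(N^{-1})$. Taking the two bounds together yields the claimed equality. Since the theorem's hypotheses are precisely the union of the hypotheses of the two lemmas, no extra verification is needed; in particular, the bound $\xi = O(N^{-1})$ is used only on the lower side.

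For the relaxed statement $I = I_{G} + o(1)$, I would again pair the two lemmas but use their $o(1)$ variants: Lemma \ref{Lemma 1} applies once (\ref{C1.b1}) and (\ref{C1.b2}) are replaced by (\ref{Lma1.1a}) and (\ref{Lma1.1b}), while Lemma \ref{Lemma 2} additionally requires (\ref{C2.b}) with the weaker exponent $\eta > 0$ in place of $\eta > 1$. Both lemmas then apply simultaneously, and the $o(1)$ equality drops out of the sandwich.

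The hard part is not the present step at all, since Theorem \ref{Theorem 1} is essentially a corollary of the two lemmas. The genuine obstacles are in the lemma proofs themselves: controlling the Laplace-type approximation of the posterior $p(\mathbf{\hat{x}}|\mathbf{r})$ around $\mathbf{x}$ with $\mathbf{G}(\mathbf{x})$ playing the role of the effective precision matrix, bounding the tail contribution from $\bar{\mathcal{X}}_{\hat{\omega}}(\mathbf{x})$ uniformly in $\mathbf{r}$ by means of (\ref{C2.b}), and absorbing the fluctuation term $\xi$ that arises from replacing $l^{\prime\prime}(\mathbf{r}|\mathbf{x})$ by its conditional mean $-\mathbf{J}(\mathbf{x})$. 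Provided the Appendix delivers those two one-sided estimates with the stated orders, the present theorem follows in two lines.
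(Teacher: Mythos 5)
Your proposal is correct and matches the paper's proof exactly: the paper derives (\ref{Thm1}) by directly combining \textbf{Lemma \ref{Lemma 1}} and \textbf{Lemma \ref{Lemma 2}}, and handles (\ref{Thm1.1}) the same way with the relaxed hypotheses. Your observation that the theorem is essentially a two-line corollary of the sandwich, with all the analytic work deferred to the lemma proofs, is precisely how the paper presents it.
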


\begin{theorem}
\label{Theorem 1a}Suppose $\mathbf{J}(\mathbf{x})$ and $\mathbf{G}%
(\mathbf{x})$\ are symmetric and positive-definite. Let
\begin{align}
&  \varsigma=\left\langle {\mathrm{Tr}}\left(  \boldsymbol{\Psi}%
(\mathbf{x})\right)  \right\rangle _{\mathbf{x}}\text{,}\label{Thm1a.1}\\
&  \boldsymbol{\Psi}(\mathbf{x})=\mathbf{J}^{-1/2}(\mathbf{x})\mathbf{P}%
(\mathbf{x})\mathbf{J}^{-1/2}(\mathbf{x})\text{,} \label{Thm1a.1a}%
\end{align}
then%
\begin{equation}
I_{G}\leq I_{F}+\frac{\varsigma}{2}\text{,} \label{Thm1a.2}%
\end{equation}
where ${\mathrm{Tr}}\left(  \cdot\right)  $\ indicating matrix trace;
moreover, if $\mathbf{P}(\mathbf{x})$ is positive-semidefinite, then
\begin{equation}
0\leq I_{G}-I_{F}\leq\frac{\varsigma}{2}\text{.} \label{Thm1a.3}%
\end{equation}
On the other hand, if
\begin{equation}
\varsigma_{1}=\left\langle \left\Vert \boldsymbol{\Psi}(\mathbf{x})\right\Vert
\right\rangle _{\mathbf{x}}=O(N^{-\beta}) \label{Thm1a.3a}%
\end{equation}
for some $\beta>0$, then
\begin{equation}
I_{G}=I_{F}+O(N^{-\beta})\text{.} \label{Thm1a.4}%
\end{equation}

\end{theorem}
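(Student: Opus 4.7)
The plan is to reduce the entire theorem to a single expectation of $\ln\det(\mathbf{I}_K + \boldsymbol{\Psi}(\mathbf{x}))$ and then apply the scalar inequality $\ln(1+x) \leq x$ eigenvalue-by-eigenvalue. By the symmetric factorization
\begin{equation*}
\det\bigl(\mathbf{J}(\mathbf{x}) + \mathbf{P}(\mathbf{x})\bigr) = \det\bigl(\mathbf{J}(\mathbf{x})\bigr) \cdot \det\bigl(\mathbf{I}_K + \mathbf{J}^{-1/2}(\mathbf{x})\mathbf{P}(\mathbf{x})\mathbf{J}^{-1/2}(\mathbf{x})\bigr),
\end{equation*}
combined with the definitions \eqref{IF}--\eqref{IG}, the $H(X)$ and $(2\pi e)^K$ contributions cancel between $I_G$ and $I_F$, leaving
\begin{equation*}
I_G - I_F = \tfrac{1}{2}\bigl\langle \ln\det\bigl(\mathbf{I}_K + \boldsymbol{\Psi}(\mathbf{x})\bigr)\bigr\rangle_{\mathbf{x}}.
\end{equation*}
The matrix $\boldsymbol{\Psi}(\mathbf{x})$ is symmetric (since $\mathbf{J}^{-1/2}$ and $\mathbf{P}$ are), and $\mathbf{I}_K + \boldsymbol{\Psi} = \mathbf{J}^{-1/2}\mathbf{G}\mathbf{J}^{-1/2}$ inherits positive-definiteness from the hypothesis on $\mathbf{G}$, so every eigenvalue $\lambda_i$ of $\boldsymbol{\Psi}$ satisfies $\lambda_i > -1$.

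For \eqref{Thm1a.2}, the concavity-based estimate $\ln(1+\lambda_i) \leq \lambda_i$ applied across the spectrum yields $\ln\det(\mathbf{I}_K + \boldsymbol{\Psi}) \leq \mathrm{Tr}(\boldsymbol{\Psi})$ pointwise in $\mathbf{x}$, and taking expectation gives the stated bound. For \eqref{Thm1a.3}, if additionally $\mathbf{P}(\mathbf{x}) \succeq 0$, then $\boldsymbol{\Psi} = \mathbf{J}^{-1/2}\mathbf{P}\mathbf{J}^{-1/2} \succeq 0$, so all $\lambda_i \geq 0$ and $\ln(1+\lambda_i) \geq 0$; combined with \eqref{Thm1a.2}, this delivers the two-sided estimate. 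For the asymptotic equality \eqref{Thm1a.4}, the key observation is that the Frobenius norm dominates the spectral radius, so $|\lambda_i(\boldsymbol{\Psi}(\mathbf{x}))| \leq \|\boldsymbol{\Psi}(\mathbf{x})\|$. On the event $\{\|\boldsymbol{\Psi}\| \leq 1/2\}$, the Taylor bound $|\ln(1+x)| \leq 2|x|$ yields
\begin{equation*}
\bigl|\ln\det\bigl(\mathbf{I}_K + \boldsymbol{\Psi}(\mathbf{x})\bigr)\bigr| \leq 2\sum_{i=1}^{K}|\lambda_i| \leq 2\sqrt{K}\,\|\boldsymbol{\Psi}(\mathbf{x})\|.
\end{equation*}
Integrating over $\mathbf{x}$ and using a Markov-type estimate on the complementary tail event (whose measure vanishes at rate $O(N^{-\beta})$ because $\varsigma_1 = \langle\|\boldsymbol{\Psi}\|\rangle_{\mathbf{x}} = O(N^{-\beta})$) produces $|I_G - I_F| = O(\varsigma_1) = O(N^{-\beta})$.

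The main obstacle is this final scaling step: the pointwise Taylor estimate for $\ln\det$ degenerates whenever an eigenvalue of $\boldsymbol{\Psi}$ approaches $-1$, so a purely pointwise bound is insufficient and one must separate the bulk from the tail of the $\mathbf{x}$-distribution. The fact that $\varsigma_1$ is an \emph{expectation} is precisely what rescues the argument, since any $\mathbf{x}$-set on which $\|\boldsymbol{\Psi}\|$ is large carries mass of order at most $O(N^{-\beta})$ and the associated tail integral is absorbed into the error. The bounds \eqref{Thm1a.2} and \eqref{Thm1a.3}, by contrast, are direct pointwise consequences of the concavity of $\ln(1+x)$ on $(-1,\infty)$ and require no delicate estimates.
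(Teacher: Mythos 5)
Your derivation of $I_G-I_F=\tfrac12\langle\ln\det(\mathbf{I}_K+\boldsymbol{\Psi}(\mathbf{x}))\rangle_{\mathbf{x}}$, the use of $\ln(1+\lambda)\le\lambda$ across the spectrum for (\ref{Thm1a.2}), and the nonnegativity of the eigenvalues of $\boldsymbol{\Psi}$ when $\mathbf{P}(\mathbf{x})\succeq 0$ for (\ref{Thm1a.3}) are exactly the paper's argument (Eqs.~\ref{A.Thm1P.0}--\ref{A.Thm1P.5}). For (\ref{Thm1a.4}) you diverge: the paper first shows $|\varsigma|\le\sqrt{K}\,\varsigma_1$ via the power-mean inequality on the eigenvalues, then lower-bounds $\ln\det(\mathbf{I}_K+\boldsymbol{\Psi})$ by $\sum_k\ln(1+\lambda_k^-)$ with $\lambda_k^-=\min(0,\lambda_k)$ and expands this in a power series, whereas you split $\mathcal{X}$ into the event $\{\|\boldsymbol{\Psi}\|\le 1/2\}$ and its complement and invoke Markov's inequality on the tail.

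The gap is in your tail estimate. Markov gives $\mathbb{P}\{\|\boldsymbol{\Psi}\|>1/2\}\le 2\varsigma_1=O(N^{-\beta})$, and the \emph{upper} bound of the tail contribution is indeed controlled because $\ln\det(\mathbf{I}_K+\boldsymbol{\Psi})\le\mathrm{Tr}(\boldsymbol{\Psi})\le\sqrt{K}\,\|\boldsymbol{\Psi}\|$ integrates to at most $\sqrt{K}\,\varsigma_1$. But the \emph{lower} bound is not ``absorbed into the error'': on the tail event the integrand is unbounded below, since an eigenvalue of $\boldsymbol{\Psi}(\mathbf{x})$ may sit at $-1+e^{-N^{2}}$ (so $\ln(1+\lambda_K)\approx -N^{2}$) while $\|\boldsymbol{\Psi}(\mathbf{x})\|$ remains of order one, entirely consistent with $\varsigma_1=O(N^{-\beta})$. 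A set of measure $O(N^{-\beta})$ carrying an integrand of size $-N^{2}$ contributes $-N^{2-\beta}$, which destroys the claimed rate. Small measure times an unbounded integrand is not small; you need either a pointwise bound such as $\|\boldsymbol{\Psi}(\mathbf{x})\|\le c<1$ for all $\mathbf{x}$ (under which $|\ln(1+\lambda_k)|\le|\lambda_k|/(1-c)$ and no event splitting is needed at all), or some integrability control on $\ln\det(\mathbf{I}_K+\boldsymbol{\Psi})$ near the boundary of positive-definiteness. To be fair, the paper's own treatment of this step (Eq.~\ref{A.Thm1P.6}, which asserts that the expectation of the power series $\sum_m\frac{-1}{m}\sum_k(-\lambda_k^-)^m$ is $O(N^{-\beta})$ without controlling the sum over $m$) is open to the same objection, so you have not missed an idea that the paper actually supplies; but your write-up states the unjustified step as if it were established, and it is precisely the step you yourself flag as the main obstacle.
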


\begin{remark}
\label{Remark 2}In general, we only need to assume that $p(\mathbf{x})$ and
$p(\mathbf{r}|\mathbf{x})$ are piecewise twice continuously differentiable for
$\mathbf{x}\in{{\mathcal{X}}}$. In this case, \textbf{Lemma \ref{Lemma 1}%
}\textit{,\ }\textbf{Lemma \ref{Lemma 2}} and \textbf{Theorem \ref{Theorem 1}
}can still be established. For more general cases, such as discrete or
continuous inputs, we have also derived a general approximation formula for MI
from which we can easily derive formula for $I_{G}$ and which will be
discussed in separate paper.\qed
\end{remark}

\subsection{Approximations of Mutual Information in Neural Populations with
Finite Size}

\label{Sec:2.2a}In the preceding section we have provided several bounds,
including both lower and upper bounds, and asymptotic relationships for the
true MI in the large $N$ (network size) limit. In the following, we will
discuss effective approximations to the true MI in the case of finite $N$.
Here we only consider the case of continuous inputs and will discuss the case
of discrete inputs in another paper.

\textbf{Theorem \ref{Theorem 1}} tells us that under suitable conditions, we
can use $I_{G}$\ to approximate $I$ for a large but finite $N$ (e.g. $N\gg
K$); that is
\begin{equation}
I\simeq I_{G}\text{.} \label{I=IG}%
\end{equation}
Moreover, by \textbf{Theorem \ref{Theorem 1a},} we know that if $\varsigma
\approx0$ with positive-semidefinite $\mathbf{P}(\mathbf{x})$ or
$\varsigma_{1}\approx0$\ holds (see Eqs. \ref{Thm1a.1} and \ref{Thm1a.3a}),
then by (\ref{Thm1a.3}), (\ref{Thm1a.4}) and (\ref{I=IG}) we have
\begin{equation}
I\simeq I_{G}\simeq I_{F}\text{.} \label{I=IF}%
\end{equation}

Define%
\begin{align}
&  \mathbf{\tilde{G}}(\mathbf{x})=\mathbf{J}(\mathbf{x})+\mathbf{P}\left(
\mathbf{x}\right)  +\mathbf{Q}\left(  \mathbf{x}\right)  \text{,}\label{GQ}\\
&  \tilde{I}_{G}=\dfrac{1}{2}\left\langle \ln\left(  \det\left(
\dfrac{\mathbf{\tilde{G}}(\mathbf{x})}{2\pi e}\right)  \right)  \right\rangle
_{\mathbf{x}}+H(X)\text{,} \label{IG1}%
\end{align}
where $\mathbf{\tilde{G}}(\mathbf{x})$ is positive-definite, $\mathbf{Q}%
\left(  \mathbf{x}\right)  $\ is a symmetric matrix depending on $\mathbf{x}$
and ${\left\Vert \mathbf{Q}\left(  \mathbf{x}\right)  \right\Vert }=O(1)$.
Suppose ${\left\Vert \mathbf{\tilde{G}}^{-1}\left(  \mathbf{x}\right)
\right\Vert }=O\left(  N^{-1}\right)  $, if we replace $I_{G}$\ by $\tilde
{I}_{G}$ in \textbf{Theorem \ref{Theorem 1}},\ then we can prove equations
(\ref{Thm1}) and (\ref{Thm1.1}) in a manner similar to the proof of
\textbf{Theorem \ref{Theorem 1}}. Considering a special case where $\left\Vert
\mathbf{P}(\mathbf{x})\right\Vert \rightarrow{0}$, $\det\left(  \mathbf{J}%
(\mathbf{x})\right)  ={O}\left(  1\right)  $ (e.g. $\mathrm{rank}\left(
\mathbf{J}(\mathbf{x})\right)  <K$) and ${\left\Vert \mathbf{G}^{-1}\left(
\mathbf{x}\right)  \right\Vert }\neq O\left(  N^{-1}\right)  $, then we can no
longer use the asymptotic formulas in \textbf{Theorem \ref{Theorem 1}}.
However, if we substitute $\mathbf{\tilde{G}}(\mathbf{x})$\ for $\mathbf{G}%
(\mathbf{x})$ by choosing an appropriate $\mathbf{Q}\left(  \mathbf{x}\right)
$ such that $\mathbf{\tilde{G}}(\mathbf{x})$ is positive-definite and
${\left\Vert \mathbf{\tilde{G}}^{-1}\left(  \mathbf{x}\right)  \right\Vert
}=O\left(  N^{-1}\right)  $, then we can use (\ref{Thm1}) or (\ref{Thm1.1}) as
the asymptotic formulas.

If we assume $\mathbf{G}(\mathbf{x})$\ and $\mathbf{\tilde{G}}(\mathbf{x}%
)$\ are positive-definite and
\begin{equation}
\zeta=\left\langle \left\Vert \mathbf{Q}(\mathbf{x})\mathbf{\tilde{G}}%
^{-1}\left(  \mathbf{x}\right)  \right\Vert \right\rangle _{\mathbf{x}%
}=O(N^{-\beta})\text{, }\beta>0\text{,} \label{GErr_0}%
\end{equation}
then similar to the proof of \textbf{Theorem \ref{Theorem 1a},} we have
\begin{align}
&  {\left\langle \ln\left(  \det\left(  \mathbf{G}(\mathbf{x})\right)
\right)  \right\rangle _{\mathbf{x}}}\nonumber\\
&  ={\left\langle \ln\left(  \det\left(  \mathbf{\tilde{G}}(\mathbf{x}%
)\right)  \right)  \right\rangle _{\mathbf{x}}+\left\langle \ln\left(
\det\left(  \mathbf{I}_{K}-\mathbf{Q}(\mathbf{x})\mathbf{\tilde{G}}%
^{-1}\left(  \mathbf{x}\right)  \right)  \right)  \right\rangle _{\mathbf{x}}%
}\nonumber\\
&  ={\left\langle \ln\left(  \det\left(  \mathbf{\tilde{G}}(\mathbf{x}%
)\right)  \right)  \right\rangle _{\mathbf{x}}}+O(N^{-\beta}) \label{GErr}%
\end{align}
and%
\[
\tilde{I}_{G}={I_{G}}+O(N^{-\beta})\text{.}%
\]
For large $N$, we usually have $\tilde{I}_{G}\simeq I_{G}$.

It is more convenient to redefine the following quantities:
\begin{align}
&  \mathbf{Q}\left(  \mathbf{x}\right)  ={\mathbf{P}}_{+}-\mathbf{P}\left(
\mathbf{x}\right)  \text{,}\label{Q_x}\\
&  {\mathbf{P}}_{+}={\left\langle \dfrac{\partial\ln p(\mathbf{x})}%
{\partial\mathbf{x}}\dfrac{\partial\ln p(\mathbf{x})}{\partial\mathbf{x}^{T}%
}\right\rangle _{\mathbf{x}}}\text{,}\label{P0}\\
&  \mathbf{G}_{+}\left(  {\mathbf{x}}\right)  =\mathbf{\tilde{G}}%
(\mathbf{x})=\mathbf{J}(\mathbf{x})+{\mathbf{P}}_{+}\text{,} \label{G0}%
\end{align}
and%
\begin{equation}
I_{G_{+}}=\tilde{I}_{G}=\dfrac{1}{2}\left\langle \ln\left(  \det\left(
\dfrac{\mathbf{G}_{+}(\mathbf{x})}{2\pi e}\right)  \right)  \right\rangle
_{\mathbf{x}}+H(X)\text{.} \label{IG0}%
\end{equation}
Notice that if $p(\mathbf{x})$ is twice differentiable for $\mathbf{x}$ and
\begin{equation}
\int_{{\mathcal{X}}}\frac{\partial^{2}p(\mathbf{x})}{\partial\mathbf{x}%
\partial\mathbf{x}^{T}}d\mathbf{x}=\mathbf{0}\text{,} \label{ipxx}%
\end{equation}
then
\begin{equation}
\mathbf{P}_{+}=\left\langle \mathbf{P}\left(  \mathbf{x}\right)  \right\rangle
_{\mathbf{x}}=\left\langle \frac{1}{p(\mathbf{x})}\frac{\partial
^{2}p(\mathbf{x})}{\partial\mathbf{x}\partial\mathbf{x}^{T}}\right\rangle
_{\mathbf{x}}-\left\langle \frac{\partial^{2}\ln p(\mathbf{x})}{\partial
\mathbf{x}\partial\mathbf{x}^{T}}\right\rangle _{\mathbf{x}}\text{.}
\label{PPx}%
\end{equation}
For example, if $p(\mathbf{x})$ is a normal distribution, $p(\mathbf{x}%
)={\mathcal{N}\left(  {\boldsymbol{\mu}}\text{,\thinspace}\boldsymbol{\Sigma
}\right)  }$, then%
\begin{equation}
\mathbf{P}\left(  \mathbf{x}\right)  ={\mathbf{P}}_{+}=\boldsymbol{\Sigma
}^{-1}\text{.} \label{PN}%
\end{equation}
Similar to the proof of \textbf{Theorem \ref{Theorem 1a}}, we can prove that
\begin{equation}
0\leq I_{G_{+}}-I_{F}\leq\frac{\varsigma_{+}}{2}\text{,} \label{IG+-IF}%
\end{equation}
where%
\begin{equation}
\varsigma_{+}=\left\langle {\mathrm{Tr}}\left(  \mathbf{P_{+}J}^{-1}%
(\mathbf{x})\right)  \right\rangle _{\mathbf{x}}\text{.} \label{var+}%
\end{equation}

We find that $I_{G}$\ is often a good approximation of MI $I$ even for
relatively small $N$. However, we cannot guarantee that $\mathbf{P}%
(\mathbf{x})$ is always positive-semidefinite in Eqs. (\ref{Gx}), and as a
consequence, it may happen that $\det\left(  \mathbf{G}(\mathbf{x})\right)
$\ is very small for small $N$, $\mathbf{G}(\mathbf{x})$ is not
positive-definite and $\ln\left(  \det\left(  \mathbf{G}(\mathbf{x})\right)
\right)  $ is not a real number. In this case, $I_{G}$\ is not a good
approximation to $I$ but $I_{G_{+}}$\ is still a good approximation.
Generally, if $\mathbf{P}(\mathbf{x})$ is always positive-semidefinite, then
$I_{G}$ or $I_{G_{+}}$\ is a better approximation\ than $I_{F}$, especially
when $p(\mathbf{x})$ be close to a normal distribution.

In the following we will give an example of 1-D inputs. High-dimensional
inputs will be discussed in section \ref{Sec:3.1}.

\subsubsection{A Numerical Comparison for 1-D Stimuli}

Considering the Poisson neuron model (see Eq. \ref{PoissNeuron}\ in section
\ref{Sec:4.1.2} for details), the tuning curve of the \textit{n}-th neuron,
$f\left(  x\text{\textrm{;\thinspace}}\theta_{n}\right)  $, takes the form of
circular normal or von Mises distribution
\begin{equation}
f\left(  x\text{\textrm{;\thinspace}}\theta_{n}\right)  =A\exp\left(  -\left(
\tfrac{T}{2\pi\sigma_{f}}\right)  ^{2}\left(  1-\cos\left(  \tfrac{2\pi}%
{T}\left(  x-\theta_{n}\right)  \right)  \right)  \right)  \text{,}%
\label{Exm_fx}%
\end{equation}
where $x\in\left[  -T/2\text{,\thinspace}T/2\right)  $, $\theta_{n}\in\left[
-T_{\theta}/2\text{,\thinspace}T_{\theta}/2\right]  $, $n\in\left\{
1\text{,\thinspace}2\text{,\thinspace}\cdots\text{,\thinspace}N\right\}  $,
with $T=\pi$, $T_{\theta}=1$, $\sigma_{f}=0.5$ and $A=20$, and the centers
$\theta_{1}$,\thinspace$\theta_{2}$,\thinspace$\cdots$,\thinspace$\theta_{N}$
of the $N$ neurons are uniformly distributed on interval $\left[  -T_{\theta
}/2\text{,\thinspace}T_{\theta}/2\right]  $, i.e., $\theta_{n}=\left(
n-1\right)  d_{\theta}-T_{\theta}/2$, with $d_{\theta}=T_{\theta}/(N-1)\ $and
$N\geq2$. Suppose the distribution of $1$-D continuous input $x$ ($K=1$)
$p(x)$\ has the form\
\begin{equation}
p(x)=Z^{-1}\exp\left(  -\left(  \tfrac{T}{2\pi\sigma_{p}}\right)  ^{2}\left(
1-\cos\left(  \tfrac{2\pi}{T}x\right)  \right)  \right)  \text{,}%
\label{Exm_px}%
\end{equation}
where $\sigma_{p}$ is a constant set to $\pi/4$, and $Z$ is the normalization
constant. Figure~1A shows graphs of the input distribution $p(x)$ and the
tuning curves $f\left(  x\text{\textrm{;\thinspace}}\theta\right)  $ with
different centers $\theta=-\pi/4$, $0$, $\pi/4$.

To evaluate the precision of the approximation formulas, we use Monte Carlo
(MC) simulation to approximate MI $I$. For MC simulation, we first sample an
input $x_{j}$ by the distribution $p(x)$, then generate the neural response
$\mathbf{r}_{j}$ by the conditional distribution $p(\mathbf{r}_{j}|x_{j})$,
where\ $j=1$,\thinspace$2$,\thinspace$\cdots$,\thinspace$j_{\mathrm{\max}}$.
The value of MI by MC simulation is calculated by
\begin{equation}
I_{MC}^{\ast}=\frac{1}{j_{\mathrm{\max}}}\sum\limits_{j=1}^{j_{\mathrm{\max}}%
}\ln\left(  \frac{p(\mathbf{r}_{j}|x_{j})}{p(\mathbf{r}_{j})}\right)  \text{,}
\label{IMC*}%
\end{equation}
where $p(\mathbf{r}_{j})$ is given by%
\begin{equation}
p(\mathbf{r}_{j})=\sum\limits_{m=1}^{M}p(\mathbf{r}_{j}|x_{m})p(x_{m})\text{,}
\label{prj}%
\end{equation}
and $x_{m}=\left(  m-1\right)  T/M-T/2$ for $m\in\left\{  1\text{,\thinspace
}2\text{,\thinspace}\cdots\text{,\thinspace}M\right\}  $.

To evaluate the accuracy of MC simulation, we compute the standard deviation
\begin{equation}
I_{std}=\sqrt{\frac{1}{i_{\mathrm{\max}}}\sum\limits_{i=1}^{i_{\mathrm{\max}}%
}\left(  I_{MC}^{i}-I_{MC}\right)  ^{2}}\text{,} \label{Istd}%
\end{equation}
where
\begin{align}
I_{MC}^{i}  &  =\frac{1}{j_{\mathrm{\max}}}\sum\limits_{j=1}^{j_{\mathrm{\max
}}}\ln\left(  \frac{p(\mathbf{r}_{\Gamma_{j,i}}|x_{\Gamma_{j,i}}%
)}{p(\mathbf{r}_{\Gamma_{j,i}})}\right)  \text{,}\label{IMCi}\\
I_{MC}  &  =\frac{1}{i_{\mathrm{\max}}}\sum\limits_{i=1}^{i_{\mathrm{\max}}%
}I_{MC}^{i}\text{,} \label{IMC}%
\end{align}
and $\Gamma_{j,i}\in\left\{  1\text{,\thinspace}2\text{,\thinspace}%
\cdots\text{,\thinspace}j_{\mathrm{\max}}\right\}  $ is the $\left(
j,i\right)  $-th entry of the matrix $\boldsymbol{\Gamma}\in\mathcal{%
\mathbb{N}
}^{j_{\mathrm{\max}}\times i_{\mathrm{\max}}}$ with samples taken randomly
from the integer set $\{1$,\thinspace$2$,\thinspace$\cdots$,\thinspace
$j_{\mathrm{\max}}\}$ by a uniform distribution. Here we set $j_{\mathrm{\max
}}=5\times10^{5}$, $i_{\mathrm{\max}}=100$ and $M=10^{3}$.

For different $N\in\{2$,$\,3$,$\,4$,$\,6$,$\,10$,$\,14$,$\,20$,$\,30$%
,$\,50$,$\,100$,$\,200$,$\,400$,$\,700$,$\,1000\}$, we compare $I_{MC}$\ with
$I_{G}$, $I_{G_{+}}$ and $I_{F}$, which are illustrated in Figure 1B--D. Here
we define the relative error of approximation, e.g., for $I_{G}$, as
\begin{equation}
DI_{G}=\frac{I_{G}-I_{MC}}{I_{MC}}\text{,} \label{RLError}%
\end{equation}
and the relative standard deviation
\begin{equation}
DI_{std}=\frac{I_{std}}{I_{MC}}\text{.} \label{DIstd}%
\end{equation}
Figure 1B shows how the values of $I_{MC}$, $I_{G}$, $I_{G_{+}}$ and $I_{F}$
change with neuron number $N$, and Figure 1C and 1D show their relative errors
and the absolute values of the relative errors with respect to $I_{MC}$. From
Figure 1B--D we can see that the values of $I_{G}$, $I_{G_{+}}$ and $I_{F}$
are all very close to one another and the absolute values of their relative
errors are all very small. The absolute values\ are less than $1\%$ when
$N\geq10$ and less than $0.1\%$ when $N\geq100$. However, for the
high-dimensional inputs, there will be a big difference between $I_{G}$,
$I_{G_{+}}$ and $I_{F}$ in many cases (see section \ref{Sec:3.1} for more details).

\begin{figure}[ptbh]
\centering
\includegraphics[width= .96\columnwidth]{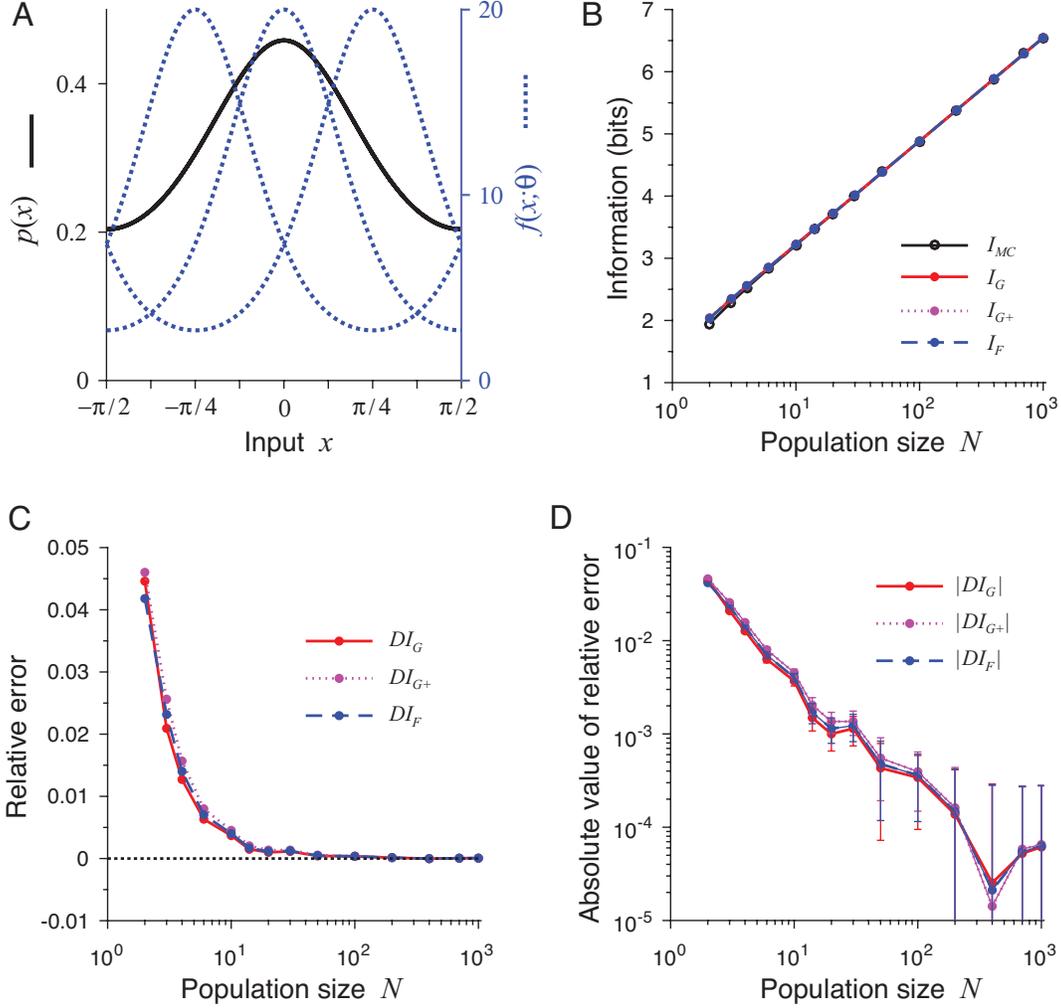}
\caption{A comparison of approximations $I_{MC}$, $I_{G}$,
$I_{G_{+}}$ and $I_{F}$ for one-dimensional input stimuli. All of them were
almost equally good, even for small population size $N$. (A) The stimulus
distribution $p(x)$ and tuning curves $f\left(  x\text{\textrm{;\thinspace}%
}\theta\right)  $ with different centers $\theta=-\pi/4$, $0$, $\pi/4$. (B)
The values of $I_{MC}$, $I_{G}$, $I_{G_{+}}$ and $I_{F}$ all increase with
neuron number $N$. (C) The relative errors $DI_{G}$, $DI_{G_{+}}$ and $DI_{F}$
for the results in panel B. (D) The absolute values of the relative errors
$\left\vert DI_{G}\right\vert $, $\left\vert DI_{G_{+}}\right\vert $, and
$\left\vert DI_{F}\right\vert $, with error bars showing standard deviations
of repeated trials. }%
\label{Fig1}%
\end{figure}

\section{Statistical Estimators and Neural Population Decoding}

\label{Sec:3}Given the neural response $\mathbf{r}$ elicited by the input
$\mathbf{x}$, we may infer or estimate the input $\mathbf{x}$ from the
response. This procedure is sometimes referred to as decoding from the
response. We need to choose an efficient estimator, or a function
$\mathbf{\hat{x}}=\mathbf{\hat{x}}(\mathbf{r})$ that maps the response
$\mathbf{r}$ to an estimate $\mathbf{\hat{x}}$ of the true stimulus
$\mathbf{x}$. The Maximum Likelihood (ML) estimator defined by
\begin{equation}
\mathbf{\hat{x}}(\mathbf{r})={\arg\max_{\mathbf{x}}\ }p(\mathbf{r}%
|\mathbf{x})={\arg\max_{\mathbf{x}}\ }l(\mathbf{r}|\mathbf{x}) \label{MLx}%
\end{equation}
is known to be efficient in large $N$ limit. According to the Cram\'{e}r-Rao
lower bound \citep{Rao(1945-information)}, we have the following relationship
between the covariance matrix of any unbiased estimator, $\boldsymbol{\Sigma
}_{\mathbf{\hat{x}}}$, and the FI matrix $\mathbf{J}\left(  \mathbf{x}\right)
$,%
\begin{equation}
\boldsymbol{\Sigma}_{\mathbf{\hat{x}}}=\left\langle \left(  \hat{\mathbf{x}%
}(\mathbf{r})-\mathbf{x}\right)  \left(  \mathbf{\hat{x}}(\mathbf{r}%
)-\mathbf{x}\right)  ^{T}\right\rangle _{\mathbf{r}|\mathbf{x}}\geq
\mathbf{J}^{-1}(\mathbf{x})\text{,} \label{CramerRao}%
\end{equation}
where $\mathbf{\hat{x}}(\mathbf{r})$\ is an unbiased estimation of
$\mathbf{x}$ from the response $\mathbf{r}$, and $\mathbf{A}\geq\mathbf{B}$
means that matrix $\mathbf{A}-\mathbf{B}$\ is positive-semidefinite. Thus
\begin{align}
I_{F}  &  =\frac{1}{2}\left\langle \ln\left(  \det\left(  \frac{\mathbf{J}%
(\mathbf{x})}{2\pi e}\right)  \right)  \right\rangle _{\mathbf{x}}+H\left(
X\right) \nonumber\\
&  \geq\frac{1}{2}\left\langle \ln\left(  \det\left(  \frac{\boldsymbol{\Sigma
}_{\mathbf{\hat{x}}}^{-1}}{2\pi e}\right)  \right)  \right\rangle
_{\mathbf{x}}+H\left(  X\right)  =I_{var}\text{.} \label{Fisher}%
\end{align}
On the other hand, the MI between $X$\ and $\hat{X}$\ is given by%
\begin{equation}
\hat{I}=H(\hat{X}\mathbf{)}-\left\langle H(\hat{X}\mathbf{|}X\mathbf{)}%
\right\rangle _{\mathbf{\hat{x},x}}\text{,} \label{I^}%
\end{equation}
where $H(\hat{X}\mathbf{)}$ is the entropy of random variable $\hat{X}$ and
$H(\hat{X}\mathbf{|}X\mathbf{)}$ is its conditional entropy of random variable
$\hat{X}$ given $X$. Since the maximum entropy probability distribution is
Gaussian, $H(\hat{X}\mathbf{|}X)$\ satisfies%
\begin{equation}
H(\hat{X}\mathbf{|}X\mathbf{)}\leq\frac{1}{2}\ln\left(  \det\left(  2\pi
e\boldsymbol{\Sigma}_{\mathbf{\hat{x}}}\right)  \right)  \text{.}
\label{H(x^|x)}%
\end{equation}
Therefore, from (\ref{I^}) and (\ref{H(x^|x)}), we get%
\begin{equation}
\hat{I}\geq\frac{1}{2}\left\langle \ln\left(  \det\left(  \dfrac
{\boldsymbol{\Sigma}_{\mathbf{\hat{x}}}^{-1}}{2\pi e}\right)  \right)
\right\rangle _{\mathbf{x}}+H(\hat{X}\mathbf{)}=\hat{I}_{var}\text{.}
\label{I^>Ix^}%
\end{equation}
The data processing inequality \citep{Cover(2006-BK-elements)} states that
post-processing cannot increase information, so that we have
\begin{equation}
I\geq\hat{I}\geq\hat{I}_{var}\text{.} \label{DPI}%
\end{equation}

Here we can not directly obtain $I\geq I_{F}$ as in \cite{Brunel(1998-mutual)}
when $H(\hat{X})=H\left(  X\right)  $ and $I_{var}=\hat{I}_{var}$. The
simulation results in Figure 1 also show that $I_{F}$ is not a lower bound of
$I$.

For biased estimators, the van Trees' Bayesian Cram\'{e}r-Rao bound
\citep{VanTrees(2007-BK-bayesian)} provides a lower bound:%
\begin{equation}
\left\langle \boldsymbol{\Sigma}_{\mathbf{\hat{x}}}\right\rangle _{\mathbf{x}%
}=\left\langle \left\langle (\mathbf{\hat{x}}(\mathbf{r})-\mathbf{x}%
)(\mathbf{\hat{x}}(\mathbf{r})-\mathbf{x})^{T}\right\rangle _{\mathbf{r}%
|\mathbf{x}}\right\rangle _{\mathbf{x}}\geq\left(  \left\langle \mathbf{J}%
(\mathbf{x})\right\rangle _{\mathbf{x}}+\mathbf{P}_{+}\right)  ^{-1}%
=\left\langle \mathbf{G}_{+}(\mathbf{x})\right\rangle _{\mathbf{x}}%
^{-1}\text{.} \label{vanTrees}%
\end{equation}
It follows from (\ref{IG0}), (\ref{I^>Ix^}) and (\ref{vanTrees}) that%
\begin{align}
I_{G_{+}}  &  \leq\frac{1}{2}\ln\left(  \det\left(  \frac{\left\langle
\mathbf{G}_{+}(\mathbf{x})\right\rangle _{\mathbf{x}}}{2\pi e}\right)
\right)  +H(X)=I_{VT}\text{,}\label{vt1}\\
I_{VT}  &  \geq\frac{1}{2}\ln\left(  \det\left(  \frac{\left\langle
\boldsymbol{\Sigma}_{\mathbf{\hat{x}}}\right\rangle _{\mathbf{x}}^{-1}}{2\pi
e}\right)  \right)  +H(X)=\tilde{I}_{var}\text{,}\label{vt2}\\
I_{var}  &  \geq\tilde{I}_{var}\text{.} \label{vt3}%
\end{align}

We may also regard decoding as Bayesian inference. By Bayes' rule,
\begin{equation}
p(\mathbf{x}|\mathbf{r})=\frac{p(\mathbf{r}|\mathbf{x})p(\mathbf{x}%
)}{p(\mathbf{r})}\text{.} \label{MAP.1}%
\end{equation}
According to the Bayesian decision theory, if we know the response
$\mathbf{r}$, from the prior $p(\mathbf{x})$ and the likelihood $p(\mathbf{r}%
|\mathbf{x})$, we can infer an estimation of the true stimulus $\mathbf{x}$,
$\mathbf{\hat{x}}(\mathbf{r})$, for example,
\begin{equation}
\mathbf{\hat{x}}(\mathbf{r})={\arg\max_{\mathbf{x}}\ }p(\mathbf{x}%
|\mathbf{r})={\arg\max_{\mathbf{x}}\ }L(\mathbf{r}|\mathbf{x})\text{,}
\label{MAP.2}%
\end{equation}
which is also called Maximum A Posteriori (MAP) estimation.

Consider a loss function $\varphi(\mathbf{\hat{x}}(\mathbf{r})|\mathbf{x})$
for estimation,
\begin{equation}
\varphi(\mathbf{\hat{x}}(\mathbf{r})|\mathbf{x})=-\ln p(\mathbf{x}%
|\mathbf{r})\text{,} \label{MAP.3}%
\end{equation}
which is minimized when $p(\mathbf{x}|\mathbf{r})$ reaches its maximum. Now
the conditional risk is
\begin{equation}
R(\mathbf{\hat{x}}(\mathbf{r})|\mathbf{r})=\left\langle \varphi(\mathbf{\hat
{x}}(\mathbf{r})|\mathbf{x})\right\rangle _{\mathbf{x}|\mathbf{r}}\text{,}
\label{MAP.4}%
\end{equation}
and the overall risk is
\begin{equation}
R_{o}=\left\langle R(\mathbf{\hat{x}}(\mathbf{r})|\mathbf{r})\right\rangle
_{\mathbf{r}}=\left\langle \left\langle \varphi(\mathbf{\hat{x}}%
(\mathbf{r})|\mathbf{x})\right\rangle _{\mathbf{x}|\mathbf{r}}\right\rangle
_{\mathbf{r}}=-\left\langle \ln p(\mathbf{x}|\mathbf{r})\right\rangle
_{\mathbf{x}\text{,\thinspace}\mathbf{r}}\text{.} \label{MAP.5}%
\end{equation}
Then it follows from (\ref{MI1}) and (\ref{MAP.5}) that
\begin{equation}
I=\left\langle \ln p(\mathbf{x}|\mathbf{r})\right\rangle _{\mathbf{r}%
\text{,\thinspace}\mathbf{x}}+H(X)=-R_{o}+H(X)\text{.} \label{MAP.6}%
\end{equation}
Comparing (\ref{IG}), (\ref{I=IG}) and (\ref{MAP.6}), we find
\begin{equation}
R_{o}\simeq-\frac{1}{2}\left\langle \ln\left(  \det\left(  \frac
{\mathbf{G}(\mathbf{x})}{2\pi e}\right)  \right)  \right\rangle _{\mathbf{x}%
}\text{.} \label{MAP.7}%
\end{equation}
Hence, maximizing MI $I$ (or $I_{G}$) means minimizing the overall risk
$R_{o}$ for a determinate $H(X)$. Therefore, we can get the optimal Bayesian
inference via optimizing MI $I$ (or $I_{G}$).

By the Cram\'{e}r-Rao lower bound, we know that the inverse of FI matrix
$\mathbf{J}^{-1}(\mathbf{x})$ reflects the accuracy of decoding (see Eq.
\ref{CramerRao}). $\mathbf{P}(\mathbf{x})$ provides some knowledge about the
prior distribution $p(\mathbf{x})$; for example, $\mathbf{P}^{-1}\left(
\mathbf{x}\right)  $ is the covariance matrix of input $\mathbf{x}$ when
$p(\mathbf{x})$ is a normal distribution. $\left\Vert \mathbf{P}%
(\mathbf{x})\right\Vert $ is small for a flat prior (poor prior) and large for
a sharp prior (good prior). Hence, if the prior $p(\mathbf{x})$ is flat or
poor and the knowledge about model is rich, then the MI $I$ is governed by the
knowledge of model, which results in a small $\varsigma_{1}$ (Eq.
\ref{Thm1a.3a}) and $I\simeq I_{G}\simeq I_{F}$. Otherwise, the prior
knowledge has a great influence on MI $I$, which results in a large
$\varsigma_{1}$ and $I\simeq I_{G} \not \simeq I_{F}$.

\section{Variable Transformation and Dimensionality Reduction in Neural
Population Coding}

\label{Sec:3.0} For low-dimensional input $\mathbf{x}$ and large $N$, both
$I_{G}$ are $I_{F}$ are good approximations of MI $I$, but for
high-dimensional input $\mathbf{x}$, a large value of $\varsigma_{1}$ may lead
to a large error of $I_{F}$, in which case $I_{G}$ (or $I_{G_{+}}$) is a
better approximation. It is difficult to directly apply the approximation
formula $I\simeq I_{G}$ when we do not have an explicit expression of
$p\left(  \mathbf{x}\right)  $ or $\mathbf{P}\left(  \mathbf{x}\right)  $. For
many applications, we do not need to know the exact value of $I_{G}$ and only
care about the value of $\left\langle \ln\left(  \det\left(  \mathbf{G}%
(\mathbf{x})\right)  \right)  \right\rangle _{\mathbf{x}}$ (see section
\ref{Sec:4}). From (\ref{IG}), (\ref{I_Gau1}) and (\ref{PN}), we know that if
$p\left(  \mathbf{x}\right)  $ is close to a normal distribution, we can
easily approximate $\mathbf{P}\left(  \mathbf{x}\right)  $\ and $H(X)$ ot
obtain $\left\langle \ln\left(  \det\left(  \mathbf{G}(\mathbf{x})\right)
\right)  \right\rangle _{\mathbf{x}}$ and $I_{G}$. When $p\left(
\mathbf{x}\right)  $ is not a normal distribution, we can employ a technique
of variable transformation to make it closer to a normal distribution, as
discussed below.

\subsection{Variable Transformation}

\label{Sec:3.1}Suppose $\mathbf{T}:{{\mathcal{X}}}\rightarrow{{\mathcal{\tilde
{X}}}}$ is an invertible and differentiable mapping:
\begin{equation}
\mathbf{\tilde{x}}=\mathbf{T}(\mathbf{x})=\left(  T_{1}(\mathbf{x}%
)\text{,\thinspace}T_{2}(\mathbf{x})\text{,\thinspace}\cdots\text{,\thinspace
}T_{K}(\mathbf{x})\right)  ^{T}\text{,} \label{x}%
\end{equation}
$\mathbf{x}=\mathbf{T}^{-1}(\mathbf{\tilde{x}})$ and $\mathbf{\tilde{x}}%
\in{{\mathcal{\tilde{X}}}}\subseteq%
\mathbb{R}
^{K}$. Let $p(\mathbf{\tilde{x}})$ denotes the p.d.f. of random variable
$\tilde{X}$ and
\begin{equation}
p(\mathbf{r}|\mathbf{\tilde{x}})=\left.  p(\mathbf{r}|\mathbf{x})\right\vert
_{\mathbf{x}=\mathbf{T}^{-1}(\mathbf{\tilde{x}})}\text{.} \label{prxT}%
\end{equation}
Then we have the following conclusions, the proofs of which are given in Appendix.

\begin{theorem}
\label{Theorem 2} The MI is equivariant under the invertible transformations.
More specifically, for the above invertible transformation $\mathbf{T}$, the
MI $I(X;R)$ in (\ref{MI}) is equal to
\begin{equation}
I(\tilde{X};R)=\left\langle \ln\frac{p(\mathbf{r}|\mathbf{\tilde{x}}%
)}{p(\mathbf{r})}\right\rangle _{\mathbf{r}\text{\textbf{,}}\mathbf{\,\tilde
{x}}}\text{.} \label{IT}%
\end{equation}
Furthermore, suppose $p(\mathbf{\tilde{x}})$ and $p(\mathbf{r}|\tilde
{\mathbf{x}})$ fulfill the conditions \textbf{C1, C2} and\textbf{ }%
$\xi=O\left(  N^{-1}\right)  $, then we have
\begin{align}
I(\tilde{X};R)  &  =\tilde{I}_{G}+O\left(  N^{-1}\right)  \text{,}%
\label{ITP}\\
\tilde{I}_{G}  &  =\frac{1}{2}\left\langle \ln\left(  \det\left(
\frac{\mathbf{G}(\mathbf{\tilde{x}})}{2\pi e}\right)  \right)  \right\rangle
_{\mathbf{\tilde{x}}}+H({\tilde{X}})\nonumber\\
&  =\frac{1}{2}\left\langle \ln\left(  \det\left(  \frac{\mathbf{G}%
(\mathbf{x})}{2\pi e}\right)  \right)  \right\rangle _{\mathbf{x}}%
+H({X})\nonumber\\
&  =I_{G}\text{,} \label{Iap}%
\end{align}
where $H({\tilde{X}})$ is the entropy of random variable ${\tilde{X}}$ and
satisfies
\begin{equation}
H({\tilde{X}})=-\left\langle \ln p(\mathbf{\tilde{x}})\right\rangle
_{\mathbf{\tilde{x}}}={H(X)}+\left\langle {\ln\left\vert \det\left(
D\mathbf{T}(\mathbf{x})\right)  \right\vert }\right\rangle _{\mathbf{x}%
}\text{,} \label{Thm2.2}%
\end{equation}
and $D\mathbf{T}(\mathbf{x})$ denotes the Jacobian matrix of $\mathbf{T}%
(\mathbf{x})$,
\begin{equation}
\left(  D\mathbf{T}(\mathbf{x})\right)  _{i\text{,\thinspace}j}=\frac{\partial
T_{i}(\mathbf{x})}{\partial x_{j}}\text{, }\quad\forall i\text{,\thinspace
}j=1\text{,\thinspace}2\text{,\thinspace}\cdots\text{,\thinspace}K\text{.}
\label{Thm2.3}%
\end{equation}

\end{theorem}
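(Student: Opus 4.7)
The plan is to establish the three assertions of the theorem in order: the exact invariance $I(\tilde X;R) = I(X;R)$, the asymptotic relation $I(\tilde X;R) = \tilde I_G + O(N^{-1})$, and the identity $\tilde I_G = I_G$.

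For the first assertion, I would apply the change-of-variables formula directly to the MI integral (\ref{MI}). Because $\mathbf{T}$ is invertible and differentiable, densities transform as $p(\mathbf{\tilde x}) = p(\mathbf x)\,|\det D\mathbf T(\mathbf x)|^{-1}$; the conditional $p(\mathbf r|\mathbf{\tilde x}) = p(\mathbf r|\mathbf x)$ holds by (\ref{prxT}); and the marginal $p(\mathbf r)$ is the same whether computed via (\ref{pr}) in $\mathbf x$ or $\mathbf{\tilde x}$, since the Jacobian $|\det D\mathbf T|$ produced by $d\mathbf{\tilde x} = |\det D\mathbf T|\,d\mathbf x$ cancels the inverse Jacobian in $p(\mathbf{\tilde x})$. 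Substituting these relations into $I(\tilde X;R)$ and performing $\mathbf{\tilde x} = \mathbf T(\mathbf x)$ then recovers (\ref{MI}) exactly.

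For the second assertion, the hypotheses of the theorem state that $p(\mathbf{\tilde x})$ and $p(\mathbf r|\mathbf{\tilde x})$ satisfy \textbf{C1}, \textbf{C2}, and $\xi = O(N^{-1})$ in the tilde coordinates. Then \textbf{Theorem \ref{Theorem 1}} applies verbatim to the transformed problem and yields $I(\tilde X;R) = \tilde I_G + O(N^{-1})$.

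The heart of the proof is the third assertion. The entropy formula (\ref{Thm2.2}) is immediate: substituting $\mathbf{\tilde x} = \mathbf T(\mathbf x)$ into $H(\tilde X) = -\langle \ln p(\mathbf{\tilde x})\rangle_{\mathbf{\tilde x}}$ and using the density transformation yields $H(\tilde X) = H(X) + \langle \ln|\det D\mathbf T(\mathbf x)|\rangle_{\mathbf x}$. For the determinant term, I would use the chain rule $l'(\mathbf r|\mathbf{\tilde x}) = (D\mathbf T(\mathbf x))^{-T}\,l'(\mathbf r|\mathbf x)$, which gives $\mathbf J(\mathbf{\tilde x}) = (D\mathbf T)^{-T}\mathbf J(\mathbf x)(D\mathbf T)^{-1}$, and then transform $\mathbf P(\mathbf{\tilde x})$ starting from $\ln p(\mathbf{\tilde x}) = \ln p(\mathbf x) - \ln|\det D\mathbf T(\mathbf x)|$. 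The goal is to establish $\det \mathbf G(\mathbf{\tilde x}) = \det \mathbf G(\mathbf x)\,|\det D\mathbf T(\mathbf x)|^{-2}$, so that after taking logarithms and expectations the $\langle \ln|\det D\mathbf T|\rangle_{\mathbf x}$ pieces from $H(\tilde X) - H(X)$ cancel the $-2\langle \ln|\det D\mathbf T|\rangle_{\mathbf x}$ pieces from $\langle \ln\det \mathbf G(\mathbf{\tilde x})\rangle_{\mathbf{\tilde x}}$, leaving $\tilde I_G = \tfrac12\langle \ln\det(\mathbf G(\mathbf x)/(2\pi e))\rangle_{\mathbf x} + H(X) = I_G$.

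The main obstacle is that the Hessian of a scalar function does not transform tensorially under a nonlinear map: in transforming $-l''$ and $-q''$ to the tilde coordinates one picks up extra first-derivative terms multiplied by the Hessian of $\mathbf T^{-1}$. For $\mathbf J(\mathbf{\tilde x}) = -\langle l''(\mathbf r|\mathbf{\tilde x})\rangle_{\mathbf r|\mathbf{\tilde x}}$ these anomalous terms vanish after averaging because of the score identity $\langle l'(\mathbf r|\mathbf x)\rangle_{\mathbf r|\mathbf x} = 0$, so the clean tensor law for $\mathbf J$ holds pointwise. For $\mathbf P(\mathbf{\tilde x})$ no such pointwise cancellation is available, but under \textbf{C1} the scale of $\mathbf P$ is $O(1)$ while $\mathbf J$ is $O(N)$, and the same for the curvature terms coming from $\ln|\det D\mathbf T|$; consequently their contribution to $\ln\det \mathbf G(\mathbf{\tilde x})$ is suppressed by $N^{-1}$ and is absorbed into the same $O(N^{-1})$ bookkeeping already carried by \textbf{Theorem \ref{Theorem 1}}. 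The chain of equalities in (\ref{Iap}) is therefore to be read in the asymptotic sense consistent with (\ref{ITP}): $\tilde I_G$ and $I_G$ both equal the exactly invariant quantity $I(X;R) = I(\tilde X;R)$ up to $O(N^{-1})$.
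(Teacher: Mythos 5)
Your proposal follows essentially the same route as the paper's proof: a change of variables in the MI integral to obtain (\ref{IT}), an appeal to \textbf{Theorem \ref{Theorem 1}} in the tilde coordinates to obtain (\ref{ITP}), and the congruence transformation of $\mathbf{G}$ together with the entropy shift $\left\langle \ln\left\vert \det\left(D\mathbf{T}(\mathbf{x})\right)\right\vert\right\rangle_{\mathbf{x}}$ to obtain (\ref{Iap}). The one place you go beyond the paper deserves comment: the paper's proof simply asserts $\mathbf{G}(\mathbf{x}) = D\mathbf{T}(\mathbf{x})^{T}\mathbf{G}(\mathbf{\tilde{x}})D\mathbf{T}(\mathbf{x})$ as an exact pointwise identity, which is true without qualification only when $\mathbf{T}$ is affine; for a genuinely nonlinear $\mathbf{T}$ the Hessian defining $\mathbf{P}$ acquires exactly the first-derivative terms (contracted with the Hessians of the components of $\mathbf{T}$) and the $\partial^{2}\ln\left\vert\det D\mathbf{T}\right\vert$ term that you identify. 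Your resolution --- that $\mathbf{J}$ obeys the clean tensor law exactly (via the score identity, or equivalently via the outer-product form in (\ref{Jx})), while the $O(1)$ anomaly in $\mathbf{P}$ perturbs $\ln\det\mathbf{G}$ only at order $N^{-1}$ and is therefore absorbed into the error term already present in (\ref{ITP}) --- is a more careful reading than the paper supplies, and it correctly flags that the chain of equalities in (\ref{Iap}) is exact for affine transformations but should be understood asymptotically otherwise. No gap in your argument; if anything, you have patched a small one in the original.
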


\begin{corollary}
\label{Corollary 2} Suppose $p(\mathbf{r}|\mathbf{x})$ is a normal
distribution,
\begin{equation}
p(\mathbf{r}|\mathbf{x})=\mathcal{N}\left(  \mathbf{A}^{T}\mathbf{y}%
\text{,\thinspace}\mathbf{I}_{N}\right)  \text{,} \label{Cly2.1}%
\end{equation}
where $\mathbf{y}=\mathbf{f}\left(  \mathbf{B}^{T}\mathbf{x}\right)  =\left(
y_{1}\text{,\thinspace}y_{2}\text{,\thinspace}\cdots\text{,\thinspace}%
y_{K}\right)  ^{T}$, $y_{k}=f_{k}(\mathbf{b}_{k}^{T}\mathbf{x})$ for
$k=1$,\thinspace$2$,\thinspace$\cdots$,\thinspace$K$, $\mathbf{A}$ is a
deterministic $K\times N$\ matrix, $\mathbf{B}=\left[  \mathbf{b}%
_{1}\text{,\thinspace}\mathbf{b}_{2}\text{,\thinspace}\cdots\text{,\thinspace
}\mathbf{b}_{K}\right]  $ is a deterministic invertible matrix and $f_{k}$ is
an invertible and differentiable function. If $Y$ has also a normal
distribution, $p(\mathbf{y})={\mathcal{N}}\left(  \boldsymbol{\mu}%
_{\mathbf{f}}\text{,\thinspace}\boldsymbol{\Sigma}_{\mathbf{f}}\right)  $,
then
\begin{align}
I_{G}  &  =I_{G_{+}}=I(X\text{;\thinspace}R)=I(Y\text{;\thinspace
}R)\nonumber\\
&  =\dfrac{1}{2}\ln\left(  \det\left(  \dfrac{1}{2\pi e}\left(  \mathbf{AA}%
^{T}+\boldsymbol{\Sigma}_{\mathbf{f}}^{-1}\right)  \right)  \right)
+H(Y)\nonumber\\
&  =\dfrac{1}{2}\left\langle \ln\left(  \det\left(  \dfrac{1}{2\pi e}\left(
\mathbf{J}(\mathbf{x})+\mathbf{P}(\mathbf{x})\right)  \right)  \right)
\right\rangle _{\mathbf{x}}+H(X)\text{,} \label{Cly2.4}%
\end{align}
where
\begin{align}
&  H(Y)=\dfrac{1}{2}\ln\left(  \det\left(  2\pi e\boldsymbol{\Sigma
}_{\mathbf{f}}\right)  \right)  =H(X)+\left\langle \ln\left\vert \det\left(
\mathbf{D}(\mathbf{x})\right)  \right\vert \right\rangle _{\mathbf{x}}%
\text{,}\label{Cly2.5}\\
&  {\mathbf{D}(\mathbf{x})=\left(  f_{1}^{\prime}(\mathbf{b}_{1}^{T}%
\mathbf{x})\mathbf{b}_{1}\text{,\thinspace}f_{2}^{\prime}(\mathbf{b}_{2}%
^{T}\mathbf{x})\mathbf{b}_{2}\text{,\thinspace}\cdots\text{,\thinspace}%
f_{K}^{\prime}(\mathbf{b}_{K}^{T}\mathbf{x})\mathbf{b}_{K}\right)  ^{T}%
}\text{,}\label{Cly2.5b}\\
&  {f_{k}^{\prime}(\mathbf{b}_{k}^{T}\mathbf{x})=\left.  \dfrac{\partial
f_{k}(y_{k})}{\partial y_{k}}\right\vert _{y_{k}=\mathbf{b}_{k}^{T}\mathbf{x}%
}}\text{, }\quad\forall k=1\text{,\thinspace}2\text{,\thinspace}%
\cdots\text{,\thinspace}K\text{.} \label{Cly2.5c}%
\end{align}

\end{corollary}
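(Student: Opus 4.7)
The plan is to reduce everything to the transformed variable $\mathbf{y} = \mathbf{f}(\mathbf{B}^T\mathbf{x})$, in which both the likelihood $p(\mathbf{r}|\mathbf{y}) = \mathcal{N}(\mathbf{A}^T\mathbf{y},\mathbf{I}_N)$ and the prior $p(\mathbf{y})$ are genuinely Gaussian, compute all relevant quantities in closed form there, and then transfer the results back to $\mathbf{x}$-coordinates by means of the invariance statements already established in \textbf{Theorem \ref{Theorem 2}}. The mapping $\mathbf{T}(\mathbf{x}) = \mathbf{f}(\mathbf{B}^T\mathbf{x})$ is invertible and differentiable because $\mathbf{B}$ is invertible and each $f_k$ is invertible, and a straightforward chain-rule computation shows that its Jacobian $D\mathbf{T}(\mathbf{x})$ is exactly the matrix $\mathbf{D}(\mathbf{x})$ defined in (\ref{Cly2.5b}). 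The first assertion of \textbf{Theorem \ref{Theorem 2}} then immediately gives $I(X;R) = I(Y;R)$.

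For $I(Y;R)$ I would invoke the closed-form formula (\ref{I_Gau}) for a linear-Gaussian channel with Gaussian prior, which yields $I(Y;R) = \tfrac{1}{2}\ln\det\bigl(\boldsymbol{\Sigma}_\mathbf{f}^{1/2}\mathbf{AA}^T\boldsymbol{\Sigma}_\mathbf{f}^{1/2} + \mathbf{I}_K\bigr)$. Factoring this determinant as $\det\boldsymbol{\Sigma}_\mathbf{f}\cdot\det(\mathbf{AA}^T+\boldsymbol{\Sigma}_\mathbf{f}^{-1})$ and combining with the Gaussian differential entropy $H(Y) = \tfrac{1}{2}\ln\det(2\pi e\boldsymbol{\Sigma}_\mathbf{f})$ rearranges the value to $\tfrac{1}{2}\ln\det\bigl((2\pi e)^{-1}(\mathbf{AA}^T+\boldsymbol{\Sigma}_\mathbf{f}^{-1})\bigr) + H(Y)$, the second expression in (\ref{Cly2.4}).

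To identify this common value with $I_G$ and $I_{G_+}$, I would work first in the $\mathbf{y}$-representation where the calculation is immediate: the Gaussian score of $p(\mathbf{r}|\mathbf{y})$ gives $\mathbf{J}(\mathbf{y}) = \mathbf{AA}^T$, the Hessian of the Gaussian log-prior gives $\mathbf{P}(\mathbf{y}) = \boldsymbol{\Sigma}_\mathbf{f}^{-1}$, and since $p(\mathbf{y})$ is genuinely Gaussian the expectation $\mathbf{P}_+ = \langle\nabla\ln p(\mathbf{y})\nabla^T\ln p(\mathbf{y})\rangle_\mathbf{y} = \boldsymbol{\Sigma}_\mathbf{f}^{-1}$ coincides with $\mathbf{P}(\mathbf{y})$ pointwise. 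Consequently $\mathbf{G}(\mathbf{y}) = \mathbf{G}_+(\mathbf{y}) = \mathbf{AA}^T+\boldsymbol{\Sigma}_\mathbf{f}^{-1}$ is constant in $\mathbf{y}$, and the $\mathbf{y}$-coordinate versions of both $I_G$ and $I_{G_+}$ literally equal the explicit expression derived above. Passage to the $\mathbf{x}$-coordinate formula in the last line of (\ref{Cly2.4}) then follows from the second half of \textbf{Theorem \ref{Theorem 2}} applied with $\tilde{\mathbf{x}} = \mathbf{y}$, together with the standard differential-entropy change-of-variable $H(Y) = H(X) + \langle\ln|\det\mathbf{D}(\mathbf{x})|\rangle_\mathbf{x}$.

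The main obstacle is conceptual rather than computational: pointwise the chain rule gives the clean identity $\mathbf{J}(\mathbf{x}) = \mathbf{D}(\mathbf{x})^T\mathbf{AA}^T\mathbf{D}(\mathbf{x})$, but the transformed Hessian $\mathbf{P}(\mathbf{x}) = -\partial^2\ln p(\mathbf{x})/\partial\mathbf{x}\partial\mathbf{x}^T$ picks up additional terms involving the second derivatives of $\mathbf{T}$ and the score of $p(\mathbf{y})$, and likewise $\mathbf{P}_+$ and $\mathbf{P}(\mathbf{x})$ do not agree pointwise because $p(\mathbf{x})$ is a pulled-back Gaussian rather than a Gaussian. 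The cancellation of these extra contributions is exactly what is built into the integrated equivariance $I_G(\mathbf{x}) = I_G(\mathbf{y})$ supplied by \textbf{Theorem \ref{Theorem 2}}, so the cleanest route is to carry out the identification in $\mathbf{y}$-coordinates and use that integrated identity, rather than attempt a pointwise match of $\mathbf{G}(\mathbf{x})$ with $\mathbf{D}(\mathbf{x})^T(\mathbf{AA}^T+\boldsymbol{\Sigma}_\mathbf{f}^{-1})\mathbf{D}(\mathbf{x})$.
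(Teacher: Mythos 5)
Your proposal is correct and follows the same overall route as the paper's proof: both arguments rest on \textbf{Theorem \ref{Theorem 2}} together with the closed-form Gaussian mutual information (\ref{I_Gau}) and the change-of-variables identity for the differential entropy $H(Y)$. The one place where you genuinely diverge is in how the final, $\mathbf{x}$-coordinate line of (\ref{Cly2.4}) is justified. The paper argues pointwise: it writes $\mathbf{J}(\mathbf{x})=\mathbf{D}(\mathbf{x})^{T}\mathbf{A}\mathbf{A}^{T}\mathbf{D}(\mathbf{x})$ and $\mathbf{P}(\mathbf{x})=\mathbf{D}(\mathbf{x})^{T}\boldsymbol{\Sigma}_{\mathbf{f}}^{-1}\mathbf{D}(\mathbf{x})$ (Eqs. \ref{A.Cly2P.8} and \ref{A.Cly2P.9}) and then takes $\ln\det$ and averages over $\mathbf{x}$. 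You instead carry out the whole identification in $\mathbf{y}$-coordinates, where $\mathbf{G}(\mathbf{y})=\mathbf{G}_{+}(\mathbf{y})=\mathbf{A}\mathbf{A}^{T}+\boldsymbol{\Sigma}_{\mathbf{f}}^{-1}$ is constant, and transfer back with the integrated equality (\ref{Iap}). Your caution about the pointwise route is well placed: since $\ln p(\mathbf{x})=\ln p(\mathbf{y})+\ln\left\vert\det\mathbf{D}(\mathbf{x})\right\vert$, the Hessian $\mathbf{P}(\mathbf{x})$ contains, beyond $\mathbf{D}(\mathbf{x})^{T}\boldsymbol{\Sigma}_{\mathbf{f}}^{-1}\mathbf{D}(\mathbf{x})$, terms proportional to $f_{k}^{\prime\prime}$ acting on the score of $p(\mathbf{y})$ and the Hessian of $\sum_{k}\ln\left\vert f_{k}^{\prime}(\mathbf{b}_{k}^{T}\mathbf{x})\right\vert$, so (\ref{A.Cly2P.9}) as written is exact only for affine $f_{k}$; these extra contributions must cancel in the aggregate rather than pointwise. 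Leaning on the integrated statement of \textbf{Theorem \ref{Theorem 2}} is therefore the cleaner step, with the caveat that the paper's own proof of (\ref{Iap}) uses the same pointwise factorization, so your argument inherits that dependence rather than eliminating it.
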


\begin{remark}
From {Corollary \ref{Corollary 2}} and Eq. (\ref{PN}) we know that the
approximation accuracy for $I_{G}\simeq I(X;R)$\ is improved when we employ an
invertible transformation on\ the input random variable $X$\textbf{ } to make
the new random variable $Y$ closer to a normal distribution (see section
\ref{Sec:3.2}).\qed

\end{remark}

Consider the eigendecompositons of $\mathbf{AA}^{T}$ and $\boldsymbol{\Sigma
}_{\mathbf{f}}$\ as given by%
\begin{align}
&  \mathbf{AA}^{T}=\mathbf{U}_{\mathbf{A}}\boldsymbol{\hat{\Sigma}}%
\mathbf{U}_{\mathbf{A}}^{T}\text{,}\label{AA'}\\
&  \boldsymbol{\Sigma}_{\mathbf{f}}=\mathbf{U}_{\mathbf{f}}\boldsymbol{\tilde
{\Sigma}}\mathbf{U}_{\mathbf{f}}^{T}\text{,} \label{SIGf}%
\end{align}
where $\mathbf{U}_{\mathbf{A}}$ and $\mathbf{U}_{\mathbf{f}}$ are $K\times K$
orthogonal matrices; $\boldsymbol{\hat{\Sigma}}=\mathrm{diag}\left(
\hat{\sigma}_{1}^{2}\text{,\thinspace}\hat{\sigma}_{2}^{2}\text{,\thinspace
}\cdots\text{,\thinspace}\hat{\sigma}_{K}^{2}\right)  $ and
$\boldsymbol{\tilde{\Sigma}}=\mathrm{diag}\left(  \tilde{\sigma}_{1}%
^{2}\text{,\thinspace}\tilde{\sigma}_{2}^{2}\text{,\thinspace}\cdots
\text{,\thinspace}\tilde{\sigma}_{K}^{2}\right)  $ are $K\times K$ eigenvalue
matrices, $\hat{\sigma}_{1}\geq\hat{\sigma}_{2}\geq\cdots\geq\hat{\sigma}%
_{K}>0$ and $\tilde{\sigma}_{1}\geq\tilde{\sigma}_{2}\geq\cdots\geq
\tilde{\sigma}_{K}>0$. Then by (\ref{IF}) and (\ref{Cly2.4}) we have
\begin{align}
I_{G}  &  =I_{G_{+}}=I(X;R)=I(Y;R)\nonumber\\
&  =\dfrac{1}{2}\ln\left(  \det\left(  \dfrac{1}{2\pi e}\left(  \mathbf{U}%
_{\mathbf{A}}\boldsymbol{\hat{\Sigma}}\mathbf{U}_{\mathbf{A}}^{T}%
+\mathbf{U}_{\mathbf{f}}\boldsymbol{\tilde{\Sigma}}^{-1}\mathbf{U}%
_{\mathbf{f}}^{T}\right)  \right)  \right)  +H(Y)\text{,}\label{IG_1}\\
I_{F}  &  =\dfrac{1}{2}\ln\left(  \det\left(  \dfrac{\boldsymbol{\hat{\Sigma}%
}}{2\pi e}\right)  \right)  +H(Y)\text{,} \label{IF_1}%
\end{align}
and%
\begin{equation}
I_{F}-I_{G}=-\dfrac{1}{2}\ln\left(  \det\left(  \mathbf{I}_{K}%
+\boldsymbol{\hat{\Sigma}}^{-1/2}\mathbf{U}_{\mathbf{A}}^{T}\mathbf{U}%
_{\mathbf{f}}\boldsymbol{\tilde{\Sigma}}^{-1}\mathbf{U}_{\mathbf{f}}%
^{T}\mathbf{U}_{\mathbf{A}}\boldsymbol{\hat{\Sigma}}^{-1/2}\right)  \right)
\text{.} \label{Ixr-IF}%
\end{equation}
Now consider two special cases. If $\boldsymbol{\tilde{\Sigma}}=\mathbf{I}%
_{K}$, then by (\ref{Ixr-IF}) we get%
\begin{equation}
I_{F}-I_{G}=-\dfrac{1}{2}\sum_{k=1}^{K}\ln\left(  1+\hat{\sigma}_{k}%
^{-2}\right)  \text{.} \label{IG-IF_1}%
\end{equation}
If $\mathbf{U}_{\mathbf{A}}=\mathbf{U}_{\mathbf{f}}$, then
\begin{equation}
I_{F}-I_{G}=-\dfrac{1}{2}\sum_{k=1}^{K}\ln\left(  1+\hat{\sigma}_{k}%
^{-2}\tilde{\sigma}_{k}^{-2}\right)  \text{.} \label{IG-IF_2}%
\end{equation}
Here $\mathbf{J}(\mathbf{x})=\mathbf{U}_{\mathbf{A}}\boldsymbol{\hat{\Sigma}%
}\mathbf{U}_{\mathbf{A}}^{T}$, $\mathbf{P}^{-1}(\mathbf{x})=\mathbf{U}%
_{\mathbf{f}}\boldsymbol{\tilde{\Sigma}}\mathbf{U}_{\mathbf{f}}^{T}$. The FI
matrix $\mathbf{J}(\mathbf{x})$ and $\mathbf{P}^{-1}(\mathbf{x})$\ become
degenerate when $\hat{\sigma}_{K}^{2}\rightarrow0$ and $\tilde{\sigma}_{K}%
^{2}\rightarrow0$.

From (\ref{IG-IF_1}) and (\ref{IG-IF_2}) we see that if either $\mathbf{J}%
(\mathbf{x})$ or $\mathbf{P}^{-1}(\mathbf{x})$\ becomes degenerate, then
$(I_{F}-I_{G})\rightarrow-\infty$. This may happen for high-dimensional
stimuli. For a specific example, consider a random matrix $\mathbf{A}$ defined
as follows. Here we first generate $K\times N$\ elements $A_{k,n}$, ($k=1$,
$2$, $\cdots$, $K$; $n=1$, $2$, $\cdots$, $N$) from a normal distribution
${\mathcal{N}}\left(  0\text{,\thinspace}1\right)  $. Then each column of
matrix $\mathbf{A}$ is normalized by $A_{k,n}\leftarrow A_{k,n}{\Big /}\sqrt{%
{\textstyle\sum_{k=1}^{K}}
A_{k,n}^{2}}$. We randomly sample $M$ (set to $2\times10^{4}$) image patches
with size $w\times w$\ from Olshausen's nature image dataset
\citep{Olshausen(1996-emergence)} as the inputs. Each input image patch was
centered by subtracting its mean, i.e., $\mathbf{x}_{m}\leftarrow
\mathbf{x}_{m}-\frac{1}{K}\sum_{k=1}^{K}x_{k,m}$, then let $\mathbf{x}%
_{m}\leftarrow\mathbf{x}_{m}-\frac{1}{M}\sum_{m^{\prime}=1}^{M}\mathbf{x}%
_{m^{\prime}}$\ for $\forall m\in\{1$, $2$, $\cdots$, $M\}$. Define matrix
$\mathbf{X}=[\mathbf{x}_{1}$,\thinspace$\mathbf{x}_{2}$,\thinspace$\cdots
$,\thinspace$\mathbf{x}_{M}]$ and compute eigendecomposition%
\begin{equation}
\frac{1}{M}\mathbf{XX}^{T}=\mathbf{U}_{\mathbf{x}}\boldsymbol{\check{\Sigma}%
}\mathbf{U}_{\mathbf{x}}^{T}\text{,} \label{XX}%
\end{equation}
where $\mathbf{U}_{\mathbf{x}}$ is a $K\times K$ orthogonal matrix and
$\boldsymbol{\check{\Sigma}}=\mathrm{diag}\left(  \check{\sigma}_{1}%
^{2}\text{,\thinspace}\check{\sigma}_{2}^{2}\text{,\thinspace}\cdots
\text{,\thinspace}\check{\sigma}_{K}^{2}\right)  $ is a $K\times K$ eigenvalue
matrix with $\check{\sigma}_{1}\geq\check{\sigma}_{2}\geq\cdots\geq
\check{\sigma}_{K}>0$. Define%
\begin{equation}
\mathbf{y}=\mathbf{U}_{\mathbf{x}}^{T}\mathbf{x}\text{,} \label{y}%
\end{equation}
then%
\begin{equation}
\frac{1}{M}\sum_{m=1}^{M}\mathbf{y}_{m}\mathbf{y}_{m}^{T}=\boldsymbol{\check
{\Sigma}}\text{.} \label{3.1_1}%
\end{equation}

The distribution of random variable $Y$ can be approximated by a normal
distribution (see section \ref{Sec:3.2} for more details). When $p(\mathbf{y}%
)={\mathcal{N}\left(  \boldsymbol{\check{\mu}},\boldsymbol{\check{\Sigma}%
}\right)  }$, we have
\begin{align}
I_{G}  &  =I_{G_{+}}=I(X;R)=I(Y;R)\text{,}\label{3.1_2a}\\
I_{G}  &  =\dfrac{1}{2}\ln\left(  \det\left(  \dfrac{1}{2\pi e}\left(
\mathbf{AA}^{T}+\boldsymbol{\check{\Sigma}}^{-1}\right)  \right)  \right)
+H(Y)\nonumber\\
&  =\dfrac{1}{2}\ln\left(  \det\left(  \dfrac{1}{2\pi e}\left(
\boldsymbol{\check{\Sigma}}^{1/2}\mathbf{AA}^{T}\boldsymbol{\check{\Sigma}%
}^{1/2}+\mathbf{I}_{K}\right)  \right)  \right)  \text{,}\label{3.1_2b}\\
I_{F}  &  =\dfrac{1}{2}\ln\left(  \det\left(  \dfrac{\mathbf{AA}^{T}}{2\pi
e}\right)  \right)  +H(Y)\text{.} \label{3.1_2c}%
\end{align}
The error of approximation $I_{F}$\ is given by
\begin{align}
dI_{F}  &  =I_{F}-I(X;R) =I_{F}-I_{G}\nonumber\\
&  =-\dfrac{1}{2}\ln\left(  \det\left(  \mathbf{I}_{K}+(\mathbf{AA}^{T}%
)^{-1}\boldsymbol{\check{\Sigma}}^{-1}\right)  \right)  \text{,} \label{3.1_3}%
\end{align}
and the relative error for $I_{F}$ is
\begin{equation}
DI_{F}=\frac{dI_{F}}{I_{G}}\text{.} \label{3.1_4}%
\end{equation}

Figure 2A shows how the values of $I_{G}$\ and $I_{F}$\ vary with the input
dimension $K=w\times w$ and the number of neurons $N$ (with $w=2$, $4$, $6$,
$\cdots$, $30$ and $N=10^{4}$, $2\times10^{4}$, $5\times10^{4}$, $10^{5}$).
The relative error $DI_{F}$\ is shown in Figure~2B. The absolute value of the
relative error tends to decrease with $N$ but may grow quite large as $K$
increases. In Figure~2B, the largest absolute value of relative error
$\left\vert DI_{F}\right\vert $ is greater than $5000\%$, which occurs when
$K=900$ and $N=10^{4}$. Even the smallest $\left\vert DI_{F}\right\vert $ is
still greater than $80\%$, which occurs when $K=100$ and $N=10^{5}$. In this
example, $I_{F}$ is a bad approximation of MI $I$ whereas $I_{G}$ and
$I_{G_{+}}$ are strictly equal to the true MI $I$ across all parameters.

\begin{figure}[ptbh]
\centering
\includegraphics[width=0.93\columnwidth]{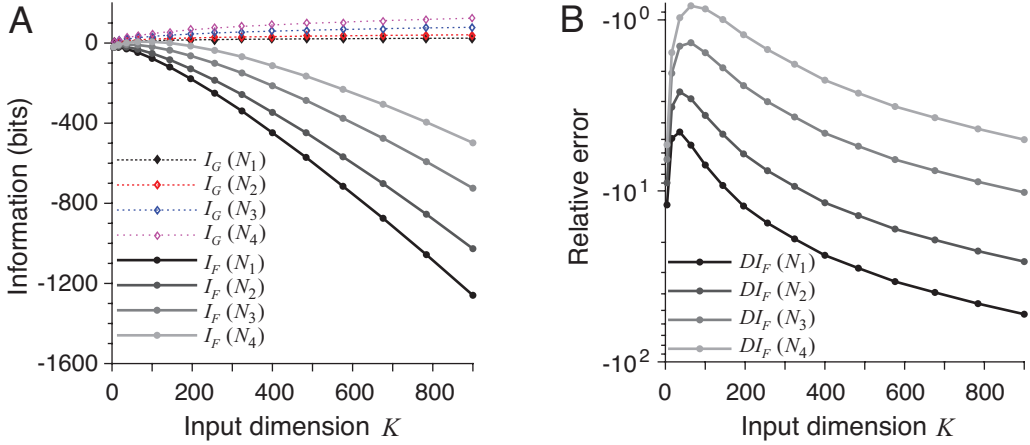}
\caption{A comparison of approximations $I_{G}$ and $I_{F}$ for
different input dimensions. Here $I_{G}$ is always equal to the true MI with
$I_{G}=I_{G_{+}}=I(X;R)$, whereas $I_{F}$ always has nonzero errors. (A) The
value $I_{G}$\ and $I_{F}$ vary with input dimension $K=w^{2}$ with $w=2$,
$4$, $6$, $\cdots$, $30$, and the number of neurons $N=N_{i}$ with
$N_{1}=10^{4}$, $N_{2}=2\times10^{4}$, $N_{3}=5\times10^{4}$, $N_{4}=10^{5}$.
(B) The relative error $DI_{F}$ changes with input dimension $K$ for different
$N$.}%
\label{Fig2}%
\end{figure}

\subsection{Dimensionality Reduction for Asymptotic Approximations}

Suppose $\mathbf{x}=(x_{1},\cdots,x_{K})^{T}$ is partitioned into two sets of
components, $\mathbf{x}=(\mathbf{x}_{1}^{T}$\textrm{,\thinspace}%
$\mathbf{x}_{2}^{T})^{T}$ with
\begin{align}
{\mathbf{x}_{1}}  &  {=({x}_{1}}\text{,\thinspace}{{x}}_{2}\text{,\thinspace
}\cdots\text{,\thinspace}{{x}}_{K_{1}}{)^{T}}\text{,}\label{x1x2}\\
{\mathbf{x}_{2}}  &  {=({x}_{K_{1}+1}}\text{,\thinspace}{{x}}_{K_{1}%
+2}\text{,\thinspace}\cdots\text{,\thinspace}{{x}_{K})^{T}}\text{,}
\label{x1x2b}%
\end{align}
where $\mathbf{x}_{1}\in{{\mathcal{X}}}_{1}\subseteq%
\mathbb{R}
^{K_{1}}$, $\mathbf{x}_{2}\in{{\mathcal{X}}}_{2}\subseteq%
\mathbb{R}
^{K_{2}}$, $K_{1}+K_{2}=K$, $K\geq2$, $K_{1}\geq1$ and $K_{2}\geq1$. Then, by
Fubini's theorem, the MI $I$ in (\ref{MI}) can be written as
\begin{equation}
I=\int_{{{\mathcal{X}}}_{2}}\int_{{{\mathcal{X}}}_{1}}\int_{{\mathcal{R}}%
}p(\mathbf{r}|\mathbf{x}_{1}\text{,\thinspace}\mathbf{x}_{2})p(\mathbf{x}%
_{1}\text{,\thinspace}\mathbf{x}_{2})\ln\frac{p(\mathbf{r}|\mathbf{x}%
_{1}\text{,\thinspace}\mathbf{x}_{2})}{p(\mathbf{r})}d\mathbf{r}\,
d\mathbf{x}_{1}d\mathbf{x}_{2}\text{,} \label{I12}%
\end{equation}
where $p(\mathbf{x}_{1}$\textrm{,\thinspace}$\mathbf{x}_{2})=p(\mathbf{x})$
and $p(\mathbf{r}|\mathbf{x}_{1}$\textrm{,\thinspace}$\mathbf{x}%
_{2})=p(\mathbf{r}|\mathbf{x})$.

First define
\begin{subequations}
\begin{align}
&  \mathbf{G}\left(  {\mathbf{x}}\right)  ={\left(
\begin{array}
[c]{cc}%
\mathbf{G}{_{1\text{,\thinspace}1}}\left(  {\mathbf{x}}\right)  &
\mathbf{G}{_{1\text{,\thinspace}2}}\left(  {\mathbf{x}}\right) \\
\mathbf{G}{_{2\text{,\thinspace}1}}\left(  {\mathbf{x}}\right)  &
{\mathbf{G}_{2\text{,\thinspace}2}}\left(  {\mathbf{x}}\right)
\end{array}
\right)  }\text{,}\label{3.9a}\\
&  \mathbf{G}{_{i\text{,\thinspace}j}\left(  {\mathbf{x}}\right)
=\mathbf{J}_{i\text{,\thinspace}j}\left(  {\mathbf{x}}\right)  +\mathbf{P}%
_{i\text{,\thinspace}j}}\left(  {\mathbf{x}}\right)  \text{,} \label{3.9b}%
\end{align}
where $i\mathrm{,\,}j\in\left\{  1\mathrm{,\,}2\right\}  $, and
\end{subequations}
\begin{subequations}
\begin{align}
&  {\mathbf{J}_{i\text{,\thinspace}j}\left(  {\mathbf{x}}\right)
=\left\langle \dfrac{\partial\ln p(\mathbf{r}|\mathbf{x})}{\partial
\mathbf{x}_{i}}\dfrac{\partial\ln p(\mathbf{r}|\mathbf{x})}{\partial
\mathbf{x}_{j}^{T}}\right\rangle _{\mathbf{r}|\mathbf{x}}}\text{,}%
\label{3.10a}\\
&  {\mathbf{P}_{i\text{,\thinspace}j}\left(  {\mathbf{x}}\right)
=-\dfrac{\partial^{2}\ln p(\mathbf{x})}{\partial\mathbf{x}_{i}\partial
\mathbf{x}_{j}^{T}}}\text{.} \label{3.10b}%
\end{align}
Then we have the following results and their proofs are given in Appendix.
\end{subequations}
\begin{theorem}
\label{Theorem 3}Suppose matrices $\mathbf{G}\left(  \mathbf{x}\right)  $,
$\mathbf{G}_{1\text{,\thinspace}1}\left(  {\mathbf{x}}\right)  $ and
$\mathbf{G}_{2\text{,\thinspace}2}\left(  {\mathbf{x}}\right)  $ are
positive-definite. If the matrix $\mathbf{A}_{\mathbf{x}}\in%
\mathbb{R}
^{K\times K}$ satisfies
\begin{equation}
\left\vert \mathrm{Tr}\left(  \left\langle \mathbf{A}_{\mathbf{x}%
}\right\rangle _{\mathbf{x}}\right)  \right\vert \ll1 \label{Thm3.1}%
\end{equation}
with
\begin{align}
&  \mathbf{A}_{\mathbf{x}}=\mathbf{G}_{2\text{,\thinspace}2}^{-1/2}\left(
{\mathbf{x}}\right)  \mathbf{G}_{2\text{,\thinspace}1}\left(  {\mathbf{x}%
}\right)  \mathbf{G}_{1\text{,\thinspace}1}^{-1}\left(  {\mathbf{x}}\right)
\mathbf{G}_{1\text{,\thinspace}2}\left(  {\mathbf{x}}\right)  \mathbf{G}%
_{2\text{,\thinspace}2}^{-1/2}\text{,} \label{Cx}%
\end{align}
then we have
\begin{equation}
I_{G}\simeq{I}_{G_{1}}\text{,} \label{Thm3.2}%
\end{equation}
with strict equality if and only if
\begin{equation}
\mathbf{G}_{2\text{,\thinspace}1}\left(  {\mathbf{x}}\right)  \mathbf{G}%
_{1\text{,\thinspace}1}^{-1}\left(  {\mathbf{x}}\right)  \mathbf{G}%
_{1\text{,\thinspace}2}\left(  {\mathbf{x}}\right)  =\mathbf{0}\text{,}
\label{Thm3.3}%
\end{equation}
where%
\begin{equation}
I_{G_{1}}=\dfrac{1}{2}\left\langle \ln\left(  \det\left(  \dfrac
{\mathbf{G}_{1\text{,\thinspace}1}\left(  {\mathbf{x}}\right)  }{2\pi
e}\right)  \right)  \right\rangle _{\mathbf{x}}+{\dfrac{1}{2}}\left\langle
{\ln\left(  \det\left(  \dfrac{\mathbf{G}_{2\text{,\thinspace}2}\left(
{\mathbf{x}}\right)  }{2\pi e}\right)  \right)  }\right\rangle _{\mathbf{x}%
}+{H(X)}\text{.} \label{IC}%
\end{equation}

\end{theorem}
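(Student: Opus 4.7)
The plan is to reduce both sides to a single block-matrix determinant identity and then expand. First I would apply the Schur complement identity to the symmetric positive-definite matrix $\mathbf{G}(\mathbf{x})$,
\begin{equation*}
\det(\mathbf{G}(\mathbf{x})) = \det(\mathbf{G}_{1,1}(\mathbf{x}))\,\det\bigl(\mathbf{G}_{2,2}(\mathbf{x}) - \mathbf{G}_{2,1}(\mathbf{x})\mathbf{G}_{1,1}^{-1}(\mathbf{x})\mathbf{G}_{1,2}(\mathbf{x})\bigr),
\end{equation*}
then factor $\mathbf{G}_{2,2}^{1/2}(\mathbf{x})$ symmetrically out of the Schur complement to rewrite the right-hand side as $\det(\mathbf{G}_{1,1}(\mathbf{x}))\det(\mathbf{G}_{2,2}(\mathbf{x}))\det(\mathbf{I}_{K_2}-\mathbf{A}_{\mathbf{x}})$. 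Because $K=K_1+K_2$, the $-\tfrac{K}{2}\ln(2\pi e)$ term hidden inside $I_G$ in (\ref{IG}) splits exactly into the two $-\tfrac{K_i}{2}\ln(2\pi e)$ pieces appearing in $I_{G_1}$ in (\ref{IC}), which yields the exact identity
\begin{equation*}
I_G - I_{G_1} = \tfrac{1}{2}\bigl\langle \ln\det(\mathbf{I}_{K_2} - \mathbf{A}_{\mathbf{x}})\bigr\rangle_{\mathbf{x}}.
\end{equation*}

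Second, I would pin down the spectral structure of $\mathbf{A}_{\mathbf{x}}$. Using symmetry $\mathbf{G}_{2,1}=\mathbf{G}_{1,2}^T$, write $\mathbf{A}_{\mathbf{x}} = \mathbf{C}^T\mathbf{G}_{1,1}^{-1}(\mathbf{x})\mathbf{C}$ with $\mathbf{C}=\mathbf{G}_{1,2}(\mathbf{x})\mathbf{G}_{2,2}^{-1/2}(\mathbf{x})$; this shows $\mathbf{A}_{\mathbf{x}}$ is symmetric positive semidefinite. Positive-definiteness of $\mathbf{G}(\mathbf{x})$ forces its Schur complement $\mathbf{G}_{2,2}^{1/2}(\mathbf{I}_{K_2}-\mathbf{A}_{\mathbf{x}})\mathbf{G}_{2,2}^{1/2}$ to be positive definite as well, so every eigenvalue of $\mathbf{A}_{\mathbf{x}}$ lies in $[0,1)$. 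Consequently $\ln\det(\mathbf{I}_{K_2}-\mathbf{A}_{\mathbf{x}})\leq 0$ pointwise in $\mathbf{x}$, with equality if and only if $\mathbf{A}_{\mathbf{x}}=\mathbf{0}$; since $\mathbf{G}_{2,2}^{-1/2}(\mathbf{x})$ is invertible, the latter is equivalent to $\mathbf{G}_{2,1}(\mathbf{x})\mathbf{G}_{1,1}^{-1}(\mathbf{x})\mathbf{G}_{1,2}(\mathbf{x})=\mathbf{0}$. This disposes of the strict-equality clause (\ref{Thm3.3}).

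Third, for the approximate equality (\ref{Thm3.2}) I would Taylor-expand
\begin{equation*}
\ln\det(\mathbf{I}_{K_2}-\mathbf{A}_{\mathbf{x}}) = -\mathrm{Tr}(\mathbf{A}_{\mathbf{x}}) - \sum_{k\geq 2}\tfrac{1}{k}\mathrm{Tr}(\mathbf{A}_{\mathbf{x}}^k),
\end{equation*}
which converges because $\lambda_{\max}(\mathbf{A}_{\mathbf{x}})<1$. Averaging and using linearity of the trace gives $I_G - I_{G_1} = -\tfrac{1}{2}\mathrm{Tr}(\langle\mathbf{A}_{\mathbf{x}}\rangle_{\mathbf{x}}) + (\text{tail})$. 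Because $\mathbf{A}_{\mathbf{x}}$ is PSD with $\lambda_{\max}(\mathbf{A}_{\mathbf{x}})\leq \mathrm{Tr}(\mathbf{A}_{\mathbf{x}})$, the pointwise estimate $\mathrm{Tr}(\mathbf{A}_{\mathbf{x}}^k)\leq \mathrm{Tr}(\mathbf{A}_{\mathbf{x}})^k$ shows that each tail term is dominated by a higher power of the leading term, so assumption (\ref{Thm3.1}) makes the whole expression negligible and $I_G\simeq I_{G_1}$.

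The step I expect to be most delicate is controlling that tail rigorously: hypothesis (\ref{Thm3.1}) bounds $\mathrm{Tr}(\langle\mathbf{A}_{\mathbf{x}}\rangle_{\mathbf{x}})=\langle\mathrm{Tr}(\mathbf{A}_{\mathbf{x}})\rangle_{\mathbf{x}}$ but not the higher moments $\langle\mathrm{Tr}(\mathbf{A}_{\mathbf{x}}^k)\rangle_{\mathbf{x}}$ that appear in the series (Jensen only gives the wrong direction). I would read the ``$\simeq$'' in the statement as a first-order Taylor assertion; to upgrade it to a quantitative $O(\cdot)$ bound one needs an additional uniform-in-$\mathbf{x}$ estimate $\lambda_{\max}(\mathbf{A}_{\mathbf{x}})\leq c<1$, under which the remainder becomes $O(\mathrm{Tr}(\langle\mathbf{A}_{\mathbf{x}}\rangle_{\mathbf{x}}))$ and the approximation is sharp.
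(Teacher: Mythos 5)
Your proposal follows essentially the same route as the paper's proof: the Schur-complement factorization $\det(\mathbf{G}(\mathbf{x}))=\det(\mathbf{G}_{1,1}(\mathbf{x}))\det(\mathbf{G}_{2,2}(\mathbf{x}))\det(\mathbf{I}_{K_2}-\mathbf{A}_{\mathbf{x}})$, the observation that positive-definiteness of the three blocks forces $\mathbf{A}_{\mathbf{x}}$ to be positive-semidefinite with eigenvalues in $[0,1)$, the series expansion of $\ln\det(\mathbf{I}_{K_2}-\mathbf{A}_{\mathbf{x}})$, and the same equality characterization. The one point of divergence is the tail bound: the paper uses $\lambda_k^m\leq\lambda_k$ for $\lambda_k\in[0,1)$ to get $\langle\mathrm{Tr}(\mathbf{A}_{\mathbf{x}}^m)\rangle_{\mathbf{x}}\leq\langle\mathrm{Tr}(\mathbf{A}_{\mathbf{x}})\rangle_{\mathbf{x}}\ll 1$ term by term, which sidesteps the higher-moment worry you raise about $\mathrm{Tr}(\mathbf{A}_{\mathbf{x}})^k$ --- although, as you correctly observe, neither version rigorously controls the harmonically weighted sum over $m$ without a uniform spectral bound $\lambda_{\max}(\mathbf{A}_{\mathbf{x}})\leq c<1$, so your closing caveat is a fair reading of what ``$\simeq$'' actually delivers here.
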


\begin{theorem}
\label{Theorem 4}Suppose matrices $\mathbf{G}\left(  \mathbf{x}\right)  $,
$\mathbf{G}_{1\text{,\thinspace}1}\left(  {\mathbf{x}}\right)  $ and
$\mathbf{P}_{2\text{,\thinspace}2}\left(  {\mathbf{x}}\right)  $ are
positive-definite. If the matrix $\mathbf{B}_{\mathbf{x}}\in%
\mathbb{R}
^{K_{2}\times K_{2}}$ is positive-semidefinite and satisfies
\begin{equation}
0\leq{\mathrm{Tr}}\left(  \left\langle \mathbf{B}_{\mathbf{x}}\right\rangle
_{\mathbf{x}}\right)  \ll1 \label{Thm4.2}%
\end{equation}
with
\begin{align}
&  \mathbf{B}_{\mathbf{x}}=\mathbf{P}_{2\text{,\thinspace}2}^{-1/2}\left(
{\mathbf{x}}\right)  \mathbf{C}_{\mathbf{x}}\mathbf{P}_{2\text{,\thinspace}%
2}^{-1/2}\left(  {\mathbf{x}}\right)  \text{,}\label{Ax}\\
&  \mathbf{C}_{\mathbf{x}}=\mathbf{J}_{2\text{,\thinspace}2}\left(
{\mathbf{x}}\right)  -\mathbf{G}_{2\text{,\thinspace}1}\left(  {\mathbf{x}%
}\right)  \mathbf{G}_{1\text{,\thinspace}1}^{-1}\left(  {\mathbf{x}}\right)
\mathbf{G}_{1\text{,\thinspace}2}\left(  {\mathbf{x}}\right)  \text{,}
\label{Bx}%
\end{align}
then we have
\begin{equation}
I_{G}\simeq I_{G_{2}}\text{,} \label{Thm4.3}%
\end{equation}
with strict equality if and only if
\begin{equation}
\mathbf{C}_{\mathbf{x}}=\mathbf{0}\text{,} \label{Thm4.4}%
\end{equation}
where%
\begin{equation}
I_{G_{2}}={\dfrac{1}{2}}\left\langle {\ln\left(  \det\left(  \dfrac
{\mathbf{G}_{1\text{,\thinspace}1}\left(  {\mathbf{x}}\right)  }{2\pi
e}\right)  \right)  }\right\rangle _{\mathbf{x}}+{\dfrac{1}{2}}\left\langle
{\ln\left(  \det\left(  \dfrac{\mathbf{P}_{2\text{,\thinspace}2}\left(
{\mathbf{x}}\right)  }{2\pi e}\right)  \right)  }\right\rangle _{\mathbf{x}%
}+{H(X)}\text{.} \label{Id}%
\end{equation}

\end{theorem}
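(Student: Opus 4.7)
The plan is to reduce $\ln\det\mathbf{G}(\mathbf{x})$ by a Schur complement, recognize the resulting block as a small perturbation of $\mathbf{P}_{2,2}(\mathbf{x})$, and then use the classical inequality $0\le\ln\det(\mathbf{I}+\mathbf{B})\le\mathrm{Tr}(\mathbf{B})$ valid for positive-semidefinite $\mathbf{B}$. Since $\mathbf{G}(\mathbf{x})$ is assumed positive-definite with invertible block $\mathbf{G}_{1,1}(\mathbf{x})$, the Schur complement identity gives
\begin{equation*}
\det\mathbf{G}(\mathbf{x})=\det\mathbf{G}_{1,1}(\mathbf{x})\cdot\det\bigl(\mathbf{G}_{2,2}(\mathbf{x})-\mathbf{G}_{2,1}(\mathbf{x})\mathbf{G}_{1,1}^{-1}(\mathbf{x})\mathbf{G}_{1,2}(\mathbf{x})\bigr).
\end{equation*}
Using $\mathbf{G}_{2,2}(\mathbf{x})=\mathbf{J}_{2,2}(\mathbf{x})+\mathbf{P}_{2,2}(\mathbf{x})$ and the definition of $\mathbf{C}_{\mathbf{x}}$ in (\ref{Bx}), the Schur complement equals $\mathbf{P}_{2,2}(\mathbf{x})+\mathbf{C}_{\mathbf{x}}$.

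Next I would factor $\mathbf{P}_{2,2}^{1/2}(\mathbf{x})$ out on both sides. Since $\mathbf{P}_{2,2}(\mathbf{x})$ is positive-definite by hypothesis, we obtain
\begin{equation*}
\ln\det\bigl(\mathbf{P}_{2,2}(\mathbf{x})+\mathbf{C}_{\mathbf{x}}\bigr)=\ln\det\mathbf{P}_{2,2}(\mathbf{x})+\ln\det(\mathbf{I}_{K_2}+\mathbf{B}_{\mathbf{x}}),
\end{equation*}
where $\mathbf{B}_{\mathbf{x}}$ is defined in (\ref{Ax}). Combining this with the Schur identity and comparing with (\ref{IG}) and (\ref{Id}) yields the key exact identity
\begin{equation*}
I_{G}-I_{G_{2}}=\tfrac{1}{2}\bigl\langle\ln\det(\mathbf{I}_{K_2}+\mathbf{B}_{\mathbf{x}})\bigr\rangle_{\mathbf{x}}.
\end{equation*}

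By the assumption that $\mathbf{B}_{\mathbf{x}}$ is positive-semidefinite, applying $\ln(1+\lambda)\le\lambda$ to each eigenvalue $\lambda\ge0$ of $\mathbf{B}_{\mathbf{x}}$ and summing gives $0\le\ln\det(\mathbf{I}_{K_2}+\mathbf{B}_{\mathbf{x}})\le\mathrm{Tr}(\mathbf{B}_{\mathbf{x}})$. Taking expectation over $\mathbf{x}$ and invoking the hypothesis (\ref{Thm4.2}),
\begin{equation*}
0\le I_{G}-I_{G_{2}}\le\tfrac{1}{2}\,\mathrm{Tr}\bigl(\langle\mathbf{B}_{\mathbf{x}}\rangle_{\mathbf{x}}\bigr)\ll 1,
\end{equation*}
which establishes $I_{G}\simeq I_{G_{2}}$. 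For the equality case, note that $\ln(1+\lambda)=0$ on $[0,\infty)$ only at $\lambda=0$, so $\ln\det(\mathbf{I}_{K_2}+\mathbf{B}_{\mathbf{x}})=0$ forces all eigenvalues of $\mathbf{B}_{\mathbf{x}}$ to vanish, hence $\mathbf{B}_{\mathbf{x}}=\mathbf{0}$, which is equivalent to $\mathbf{C}_{\mathbf{x}}=\mathbf{0}$ by the invertibility of $\mathbf{P}_{2,2}^{1/2}(\mathbf{x})$. Conversely, $\mathbf{C}_{\mathbf{x}}=\mathbf{0}$ trivially yields equality.

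There is no serious obstacle in this argument; the proof is essentially a matrix-algebra manipulation paralleling the proof of Theorem~\ref{Theorem 1a}. The only point that requires a bit of care is justifying the Schur factorization, which requires that $\mathbf{G}_{1,1}(\mathbf{x})$ be invertible (covered by the positive-definiteness hypothesis) and that $\mathbf{P}_{2,2}(\mathbf{x})$ be invertible (also stated). The positive-semidefiniteness of $\mathbf{B}_{\mathbf{x}}$, which is an assumption rather than an automatic consequence, is what allows the one-sided upper bound by the trace and drives the approximation; this is the step where the hypothesis on $\mathbf{B}_{\mathbf{x}}$ must be used in full strength.
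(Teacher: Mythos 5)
Your proposal is correct and follows essentially the same route as the paper's proof: the Schur-complement factorization of $\det\mathbf{G}(\mathbf{x})$ into $\det\mathbf{G}_{1,1}(\mathbf{x})\det\mathbf{P}_{2,2}(\mathbf{x})\det(\mathbf{I}_{K_2}+\mathbf{B}_{\mathbf{x}})$, followed by the eigenvalue bound $0\le\ln\det(\mathbf{I}_{K_2}+\mathbf{B}_{\mathbf{x}})\le\mathrm{Tr}(\mathbf{B}_{\mathbf{x}})$ and the characterization of equality via vanishing eigenvalues, is exactly the argument given in the appendix.
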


\begin{corollary}
\label{Corollary 3}If the random variables $X_{1}\ $and $X_{2}\ $are
independent so that $p(\mathbf{x})=p(\mathbf{x}_{1})p(\mathbf{x}_{2})$,
$p(\mathbf{x}_{2})=\mathcal{N}\left(  {\boldsymbol{\mu}}_{2}\text{,\thinspace
}\boldsymbol{\Sigma}_{\mathbf{x}_{2}}\right)  $ is a normal distribution, and
$\mathbf{G}\left(  \mathbf{x}\right)  $, $\mathbf{G}_{1\text{,\thinspace}%
1}\left(  {\mathbf{x}}\right)  $, $\mathbf{P}_{1\text{,\thinspace}1}\left(
{\mathbf{x}}\right)  $\ and $\mathbf{P}_{2\text{,\thinspace}2}\left(
{\mathbf{x}}\right)  $ are all positive-definite and satisfy (\ref{Thm4.2}),
then we have
\begin{align}
I_{G}  &  \simeq I_{G_{1}^{\prime}}\text{,}\label{Cly3.1}\\
I_{G_{1}^{\prime}}  &  =\frac{1}{2}\left\langle \ln\left(  \det\left(
\frac{\mathbf{G}_{1\text{,\thinspace}1}\left(  {\mathbf{x}}\right)  }{2\pi
e}\right)  \right)  \right\rangle _{\mathbf{x}}+H(X_{1})\text{,} \label{Id1}%
\end{align}
with strict equality if and only if
\begin{equation}
\mathbf{C}_{\mathbf{x}}=\mathbf{J}_{2\text{,\thinspace}2}\left(  {\mathbf{x}%
}\right)  -\mathbf{J}_{2\text{,\thinspace}1}\left(  {\mathbf{x}}\right)
\mathbf{G}_{1\text{,\thinspace}1}^{-1}\left(  {\mathbf{x}}\right)
\mathbf{J}_{1\text{,\thinspace}2}\left(  {\mathbf{x}}\right)  =\mathbf{0}%
\text{,} \label{Cly3.2}%
\end{equation}
where
\begin{subequations}
\begin{align}
&  H(X_{1})=-\left\langle \ln p(\mathbf{x}_{1})\right\rangle _{\mathbf{x}_{1}%
}\text{,}\label{Cly3.3a}\\
&  \mathbf{G}_{1\text{,\thinspace}1}\left(  {\mathbf{x}}\right)
=\mathbf{J}_{1\text{,\thinspace}1}\left(  {\mathbf{x}}\right)  +\mathbf{P}%
_{1\text{,\thinspace}1}\left(  {\mathbf{x}}\right)  \text{,}\label{Cly3.3b}\\
&  \mathbf{P}_{1\text{,\thinspace}1}\left(  {\mathbf{x}}\right)
=-\frac{\partial^{2}\ln p(\mathbf{x}_{1})}{\partial\mathbf{x}_{1}%
\partial\mathbf{x}_{1}^{T}}\text{.} \label{Cly3.3c}%
\end{align}

\end{subequations}
\end{corollary}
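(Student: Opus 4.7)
The plan is to derive Corollary \ref{Corollary 3} as a direct specialization of Theorem \ref{Theorem 4}, by using the two additional hypotheses exactly once each: independence of $X_1$ and $X_2$ makes the prior Hessian block-diagonal, while normality of $X_2$ makes its $(2,2)$-block constant in $\mathbf{x}$. Together these two facts collapse $I_{G_2}$ in (\ref{Id}) to the simpler $I_{G_1'}$.

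First I would use the independence assumption to write $\ln p(\mathbf{x}) = \ln p(\mathbf{x}_1) + \ln p(\mathbf{x}_2)$. Differentiating twice, the mixed partials $\partial^2 \ln p(\mathbf{x})/(\partial\mathbf{x}_1 \partial\mathbf{x}_2^T)$ vanish because each summand depends on only one block, so $\mathbf{P}_{1,2}(\mathbf{x}) = \mathbf{P}_{2,1}(\mathbf{x}) = \mathbf{0}$. Combined with (\ref{3.9b}) this yields $\mathbf{G}_{1,2}(\mathbf{x}) = \mathbf{J}_{1,2}(\mathbf{x})$ and $\mathbf{G}_{2,1}(\mathbf{x}) = \mathbf{J}_{2,1}(\mathbf{x})$, so the matrix $\mathbf{C}_{\mathbf{x}}$ of (\ref{Bx}) reduces to
\[
\mathbf{C}_{\mathbf{x}} = \mathbf{J}_{2,2}(\mathbf{x}) - \mathbf{J}_{2,1}(\mathbf{x})\,\mathbf{G}_{1,1}^{-1}(\mathbf{x})\,\mathbf{J}_{1,2}(\mathbf{x}),
\]
which is precisely the matrix featuring in the strict-equality condition (\ref{Cly3.2}).

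Next I would invoke Theorem \ref{Theorem 4}. Its positive-definiteness hypotheses on $\mathbf{G}$, $\mathbf{G}_{1,1}$ and $\mathbf{P}_{2,2}$, as well as the trace bound (\ref{Thm4.2}), are explicitly included in the hypotheses of Corollary \ref{Corollary 3}, so the theorem delivers $I_G \simeq I_{G_2}$ with strict equality if and only if $\mathbf{C}_{\mathbf{x}} = \mathbf{0}$. To reduce $I_{G_2}$ to $I_{G_1'}$, I would apply (\ref{PN}) to the block variable $\mathbf{x}_2$: because $p(\mathbf{x}_2) = \mathcal{N}(\boldsymbol{\mu}_2, \boldsymbol{\Sigma}_{\mathbf{x}_2})$, we obtain $\mathbf{P}_{2,2}(\mathbf{x}) = \boldsymbol{\Sigma}_{\mathbf{x}_2}^{-1}$, which is constant in $\mathbf{x}$, so
\[
\tfrac{1}{2}\left\langle \ln\det\!\left(\tfrac{\mathbf{P}_{2,2}(\mathbf{x})}{2\pi e}\right)\right\rangle_{\mathbf{x}} = -\tfrac{1}{2}\ln\det\!\left(2\pi e\,\boldsymbol{\Sigma}_{\mathbf{x}_2}\right) = -H(X_2).
\]
Combined with the additivity of differential entropy under independence, $H(X) = H(X_1) + H(X_2)$, this $-H(X_2)$ term cancels against $H(X_2)$ inside $H(X)$ in (\ref{Id}), leaving exactly $I_{G_1'}$ as defined in (\ref{Id1}). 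This establishes (\ref{Cly3.1}), and the iff clause (\ref{Cly3.2}) is inherited from that in Theorem \ref{Theorem 4} through the reduction of $\mathbf{C}_{\mathbf{x}}$ above.

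The only mild obstacle is bookkeeping: the corollary lists $\mathbf{P}_{1,1}$ among its positive-definite matrices while Theorem \ref{Theorem 4} does not mention it. This is harmless because $\mathbf{P}_{1,1}$ enters only via $\mathbf{G}_{1,1} = \mathbf{J}_{1,1} + \mathbf{P}_{1,1}$, whose positive-definiteness is independently assumed; the extra hypothesis on $\mathbf{P}_{1,1}$ is simply spare information that ensures $\mathbf{G}_{1,1}$ remains invertible even when $\mathbf{J}_{1,1}$ degenerates. No new analytic estimate beyond those already packaged into Theorem \ref{Theorem 4} is required.
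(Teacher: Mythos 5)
Your route is the same as the paper's: specialize \textbf{Theorem \ref{Theorem 4}}, using independence to make $\mathbf{P}_{1,2}(\mathbf{x})=\mathbf{P}_{2,1}(\mathbf{x})=\mathbf{0}$ (so that $\mathbf{C}_{\mathbf{x}}$ takes the form (\ref{Cly3.2})) and normality of $X_{2}$ to get $\mathbf{P}_{2,2}(\mathbf{x})=\boldsymbol{\Sigma}_{\mathbf{x}_{2}}^{-1}$, whose $\ln\det$ term equals $-H(X_{2})$ and cancels against $H(X_{2})$ inside $H(X)=H(X_{1})+H(X_{2})$, collapsing $I_{G_{2}}$ of (\ref{Id}) to $I_{G_{1}^{\prime}}$. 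That part is correct and matches the paper step for step. There is, however, one concrete gap: \textbf{Theorem \ref{Theorem 4}} requires the matrix $\mathbf{B}_{\mathbf{x}}$ to be positive-semidefinite, and this is \emph{not} among the corollary's stated hypotheses. Condition (\ref{Thm4.2}) only supplies $0\leq\mathrm{Tr}\left(\left\langle\mathbf{B}_{\mathbf{x}}\right\rangle_{\mathbf{x}}\right)\ll1$, and a matrix can have a small nonnegative trace without being positive-semidefinite. Positive-semidefiniteness is what guarantees that every eigenvalue of $\mathbf{I}_{K_{2}}+\mathbf{B}_{\mathbf{x}}$ is at least one, which underlies both the chain $0\leq\left\langle\ln\left(\det(\mathbf{I}_{K_{2}}+\mathbf{B}_{\mathbf{x}})\right)\right\rangle_{\mathbf{x}}\leq\mathrm{Tr}\left(\left\langle\mathbf{B}_{\mathbf{x}}\right\rangle_{\mathbf{x}}\right)$ in the proof of the theorem and the ``iff'' direction (that $\ln\left(\det(\mathbf{I}_{K_{2}}+\mathbf{B}_{\mathbf{x}})\right)=0$ forces $\mathbf{B}_{\mathbf{x}}=\mathbf{0}$). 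You treat this as automatic; the paper explicitly establishes that $\mathbf{C}_{\mathbf{x}}$ and $\mathbf{B}_{\mathbf{x}}$ are positive-semidefinite by an argument analogous to (\ref{3.24})--(\ref{3.25}), exhibiting $\mathbf{C}_{\mathbf{x}}$ as the covariance $\left\langle\boldsymbol{\rho}\boldsymbol{\rho}^{T}\right\rangle_{\mathbf{r}|\mathbf{x}}$ of a residual score vector.

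This is also where your remark that the positive-definiteness of $\mathbf{P}_{1,1}(\mathbf{x})$ is ``spare information'' goes astray: it is precisely what makes the (\ref{3.25})-style construction work, since the auxiliary vector $\mathbf{a}(\mathbf{r})$ there must have covariance $\mathbf{P}_{1,1}(\mathbf{x})$ (the role played by $\mathbf{I}_{K_{1}}$ in (\ref{3.25B})) so that the subtracted term reproduces $\mathbf{J}_{2,1}\mathbf{G}_{1,1}^{-1}\mathbf{J}_{1,2}$ with $\mathbf{G}_{1,1}=\mathbf{J}_{1,1}+\mathbf{P}_{1,1}$. Equivalently, $\mathbf{P}_{1,1}(\mathbf{x})>0$ gives $\mathbf{G}_{1,1}^{-1}(\mathbf{x})\leq\mathbf{J}_{1,1}^{-1}(\mathbf{x})$ and one can then invoke the Schur complement of the positive-semidefinite matrix $\mathbf{J}(\mathbf{x})$. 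Once you add this verification, your argument is complete and coincides with the paper's proof.
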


\begin{remark}
\label{Remark 4}Sometimes we are concerned only with calculating the
determinant of matrix $\mathbf{G}(\mathbf{x})$ with a given $p(\mathbf{x})$.
\textbf{Theorem \ref{Theorem 3}\textit{ }}and\textbf{\textit{ }Theorem
\ref{Theorem 4}\textit{ }}provide a dimensionality reduction method for
computing $\mathbf{G}\left(  \mathbf{x}\right)  $ or $\det\left(
\mathbf{G}\left(  \mathbf{x}\right)  \right)  $, by which we only need to
compute $\mathbf{G}_{1\text{,\thinspace}1}\left(  {\mathbf{x}}\right)  $ and
$\mathbf{G}_{2\text{,\thinspace}2}\left(  {\mathbf{x}}\right)  $ separately.
To apply the approximation (\ref{Thm3.2}), we do not need to strictly require
$\left\vert {\mathrm{Tr}}\left(  \left\langle \mathbf{A}_{\mathbf{x}%
}\right\rangle _{\mathbf{x}}\right)  \right\vert \ll1$; instead we only need
to require%
\begin{equation}
\left\vert {\mathrm{Tr}}\left(  \left\langle \mathbf{A}_{\mathbf{x}%
}\right\rangle _{\mathbf{x}}\right)  \right\vert \ll\left\vert \left\langle
\ln\left(  \det\left(  \mathbf{G}_{1\text{,\thinspace}1}\left(  {\mathbf{x}%
}\right)  \right)  \det\left(  \mathbf{G}_{2\text{,\thinspace}2}\left(
{\mathbf{x}}\right)  \right)  \right)  \right\rangle _{\mathbf{x}}\right\vert
\text{.} \label{Rem4.1}%
\end{equation}
Similarly, the inequality $\left\vert {\mathrm{Tr}}\left(  \left\langle
\mathbf{B}_{\mathbf{x}}\right\rangle _{\mathbf{x}}\right)  \right\vert \ll1$
can be substituted by
\begin{equation}
\left\vert {\mathrm{Tr}}\left(  \left\langle \mathbf{B}_{\mathbf{x}%
}\right\rangle _{\mathbf{x}}\right)  \right\vert \ll\left\vert \left\langle
\ln\left(  \det\left(  \mathbf{G}_{1\text{,\thinspace}1}\left(  {\mathbf{x}%
}\right)  \right)  \det\left(  \mathbf{P}_{2\text{,\thinspace}2}\left(
{\mathbf{x}}\right)  \right)  \right)  \right\rangle _{\mathbf{x}}\right\vert
\text{.} \label{Rem4.2}%
\end{equation}

By (\ref{Cly3.1}) and the second mean value theorem for integrals, we get%
\begin{equation}
I_{G_{1}^{\prime}}=\frac{1}{2}\left\langle \ln\left(  \det\left(
\frac{\mathbf{G}_{1\text{,\thinspace}1}\left(  \mathbf{x}_{1}\text{,\thinspace
}\mathbf{\ddot{x}}_{2}\right)  }{2\pi e}\right)  \right)  \right\rangle
_{\mathbf{x}_{1}}+H(X_{1}) \label{Id1a}%
\end{equation}
for some fixed $\mathbf{\ddot{x}}_{2}\in{{\mathcal{X}}}_{2}$. When $\left\Vert
\boldsymbol{\Sigma}_{\mathbf{x}_{2}}\right\Vert $ is small, $\mathbf{\ddot{x}%
}_{2}$ should be close to the mean: $\mathbf{\ddot{x}}_{2}\approx
{\boldsymbol{\mu}}_{2}$. It follows from {Theorem \ref{Theorem 1}} and
{Corollary \ref{Corollary 3}} that the approximate relationship $I\simeq
I_{G_{1}^{\prime}}$ holds. However, Eq. (\ref{Id1a})\ implies that
$I_{G_{1}^{\prime}}$ is determined only by the first component $\mathbf{x}%
_{1}$. Hence, there is little impact on information transfer by the minor
component (i.e. $\mathbf{x}_{2}$) for the high-dimensional input $\mathbf{x}$.
In other words, the information transfer is mainly determined\ by the first
component $\mathbf{x}_{1}$ and we can omit the minor component $\mathbf{x}%
_{2}$.
\qed

\end{remark}

\subsection{Further Discussion}

\label{Sec:3.2}Suppose $\mathbf{x}$ is a zero-mean vector, and if it is not,
then let $\mathbf{x}\leftarrow\mathbf{x}-\left\langle \mathbf{x}\right\rangle
_{\mathbf{x}}\text{.}
$ The covariance matrix of $\mathbf{x}$ is given by
\begin{equation}
\boldsymbol{\Sigma}_{\mathbf{x}}=\left\langle \mathbf{xx}^{T}\right\rangle
_{\mathbf{x}}=\mathbf{U}\boldsymbol{\Sigma}\mathbf{U}^{T}\text{,} \label{Sigx}%
\end{equation}
where $\mathbf{U}$ is a $K\times K$ orthogonal matrix whose \textit{k}-th
column is the eigenvector $\mathbf{u}_{k}$ of $\boldsymbol{\Sigma}%
$\textbf{$_{\mathbf{x}}$}, and $\boldsymbol{\Sigma}$ is diagonal matrix whose
diagonal elements are the corresponding eigenvalues, i.e., $\boldsymbol{\Sigma
} =\mathrm{diag}\left(  \sigma_{1}^{2}\text{,\thinspace}\sigma_{2}%
^{2}\text{,\thinspace}\cdots\text{,\thinspace}\sigma_{K}^{2}\right)  $
with $\sigma_{1} \geq\sigma_{2}\geq\cdots\geq\sigma_{K}>0$.
With the whitening transformation,
\begin{equation}
\mathbf{\tilde{x}}=\boldsymbol{\Sigma}^{-1/2}\mathbf{U}^{T}\mathbf{x}%
\text{,\thinspace} \label{xtilde}%
\end{equation}
the covariance matrix of $\mathbf{\tilde{x}}$ becomes an identity matrix:
\begin{equation}
\boldsymbol{\Sigma}_{\mathbf{\tilde{x}}}=\left\langle \mathbf{\tilde{x}%
\tilde{x}}^{T}\right\rangle _{\mathbf{\tilde{x}}}=\boldsymbol{\Sigma}%
^{-1/2}\mathbf{U}^{T}\left\langle \mathbf{xx}^{T}\right\rangle _{\mathbf{x}%
}\mathbf{U}\boldsymbol{\Sigma}^{-1/2}=\mathbf{I}_{K}\text{.} \label{Sigxtilde}%
\end{equation}

By the central limit theorem, the distribution of random variable $\tilde{X}$
should be closer to a normal distribution than the distribution of the
original random variable $X$; that is, $p(\mathbf{\tilde{x}})\simeq
{\mathcal{N}\left(  \mathbf{0},\mathbf{I}_{K}\right)  }$. Using Laplace's
method asymptotic expansion \citep{MacKay(2003-BK-information)}, we get
\begin{align}
\mathbf{P}(\mathbf{\tilde{x}})  &  =-\dfrac{\partial^{2}\ln p(\mathbf{\tilde
{x}})}{\partial\mathbf{\tilde{x}}\partial\mathbf{\tilde{x}}^{T}}%
\simeq\boldsymbol{\Sigma}_{\mathbf{\tilde{x}}}^{-1}=\mathbf{I}_{K}%
\text{,}\label{3.14}\\
\mathbf{P}_{+}  &  =\left\langle \mathbf{P}(\mathbf{\tilde{x}})\right\rangle
_{\mathbf{\tilde{x}}}\simeq\boldsymbol{\Sigma}_{\mathbf{\tilde{x}}}%
^{-1}=\mathbf{I}_{K}\text{.} \label{3.14b}%
\end{align}
In principal component analysis (PCA), the dataset is modeled by a
multivariate gaussian. By a PCA-like whitening transformation (\ref{xtilde})
we can use the approximation (\ref{3.14b}) with Laplace's method, which only
requires that the peak be close to the mean and the random variable $\tilde
{X}$\ does not need to be an exact Gaussian distribution.

By \textbf{Theorem \ref{Theorem 2}}, we have
\begin{equation}
I\left(  \tilde{X}{\text{;\thinspace}}R\right)  \simeq I_{G}=\frac{1}%
{2}\left\langle \ln\left(  \det\left(  \frac{\mathbf{G}(\mathbf{\tilde{x}}%
)}{2\pi e}\right)  \right)  \right\rangle _{\mathbf{\tilde{x}}}+H(\tilde
{X})\text{,} \label{3.15}%
\end{equation}
where
\begin{align}
\mathbf{G}{(}\mathbf{\tilde{x}}{)}  &  =\mathbf{J}{(}\mathbf{\tilde{x}}%
{)}+\mathbf{I}_{K}\text{,}\label{3.16}\\
\mathbf{J}{(}\mathbf{\tilde{x}}{)}  &  ={\left\langle \dfrac{\partial\ln
p(\mathbf{r}|\mathbf{\tilde{x}})}{\partial\mathbf{\tilde{x}}}\dfrac
{\partial\ln p(\mathbf{r}|\mathbf{\tilde{x}})}{\partial\mathbf{\tilde{x}}^{T}%
}\right\rangle _{\mathbf{r}|\mathbf{\tilde{x}}}}\label{3.16b}\\
&  =\boldsymbol{\Sigma}{^{1/2}\mathbf{U}^{T}\left\langle \dfrac{\partial\ln
p(\mathbf{r}|\mathbf{x})}{\partial\mathbf{x}}\dfrac{\partial\ln p(\mathbf{r}%
|\mathbf{x})}{\partial\mathbf{x}^{T}}\right\rangle _{\mathbf{r}|\mathbf{x}%
}\mathbf{U}}\boldsymbol{\Sigma}{^{1/2}}\label{3.16c}\\
&  =\boldsymbol{\Sigma}{^{1/2}\mathbf{U}^{T}}\mathbf{J}{(\mathbf{x}%
)\mathbf{U}}\boldsymbol{\Sigma}{^{1/2}}\text{,}\label{3.16d}\\
{H(\tilde{X})}  &  ={-}\left\langle \ln p(\mathbf{\tilde{x}})\right\rangle
_{\mathbf{\tilde{x}}}={H(X)}-{\dfrac{1}{2}\ln\left(  \det(\boldsymbol{\Sigma
})\right)  }\text{.} \label{3.16e}%
\end{align}

Given a $K\times K$ orthogonal matrix $\mathbf{B}\in%
\mathbb{R}
^{K\times K}$, {\normalsize we} define
\begin{equation}
\mathbf{y}=\mathbf{B}^{T}\mathbf{\tilde{x}}\text{.} \label{3.17}%
\end{equation}
Then it follows from (\ref{3.15})--(\ref{3.17}) that
\begin{equation}
I\left(  Y{\text{;\thinspace}}R\right)  \simeq I_{G}=\frac{1}{2}\left\langle
\ln\left(  \det\left(  \frac{\mathbf{G}(\mathbf{y})}{2\pi e}\right)  \right)
\right\rangle _{\mathbf{y}}+H\left(  Y\right)  \text{,} \label{3.18}%
\end{equation}
where
\begin{align}
\mathbf{G}{(\mathbf{y})}  &  =\mathbf{J}{(\mathbf{y})}+\mathbf{I}_{K}%
\text{,}\label{3.19}\\
\mathbf{J}{(\mathbf{y})}  &  ={\mathbf{B}^{T}}\mathbf{J}{(}\mathbf{\tilde{x}%
}{)}\mathbf{B}\text{,}\label{3.19b}\\
{H(}Y{)}  &  ={H(\tilde{X})}\text{.} \label{3.19c}%
\end{align}
Suppose $\mathbf{y}$ is partitioned into two sets of components,
$\mathbf{y}=(\mathbf{y}_{1}^{T}$\textrm{,\thinspace}$\mathbf{y}_{2}^{T})^{T}$
and
\begin{align}
{\mathbf{y}_{1}}  &  ={(}y_{1}\text{,\thinspace}y_{2}\text{,\thinspace}%
\cdots\text{,\thinspace}y_{K_{1}}{)^{T}}\text{,}\label{3.20}\\
{\mathbf{y}_{2}}  &  ={(}y_{K_{1}+1}\text{,\thinspace}y_{K_{1}+2}%
\text{,\thinspace}\cdots\text{,\thinspace}y_{K}{)^{T}}\text{,} \label{3.20b}%
\end{align}
where $K_{1}+K_{2}=K$, $K\geq2$, $K_{1}\geq1$ and $K_{2}\geq1$. Let
\begin{equation}
\mathbf{G}(\mathbf{y})=\left(
\begin{array}
[c]{cc}%
\mathbf{J}{_{1\text{,\thinspace}1}(\mathbf{y})}+\mathbf{I}_{K_{1}} &
\mathbf{J}{_{1\text{,\thinspace}2}(\mathbf{y})}\\
\mathbf{J}{_{2\text{,\thinspace}1}(\mathbf{y})} & \mathbf{J}%
{_{2\text{,\thinspace}2}(\mathbf{y})}+\mathbf{I}_{K_{2}}%
\end{array}
\right)  \text{,} \label{3.21}%
\end{equation}
where
\begin{equation}
\mathbf{J}_{i\text{,\thinspace}j}(\mathbf{y})=\left\langle \frac{\partial\ln
p(\mathbf{r}|\mathbf{y})}{\partial\mathbf{y}_{i}}\frac{\partial\ln
p(\mathbf{r}|\mathbf{y})}{\partial\mathbf{y}_{j}^{T}}\right\rangle
_{\mathbf{r}|\mathbf{y}}\text{, } \quad\forall i\mathrm{,\,}j=1\mathrm{,\,}%
2\text{.} \label{3.22}%
\end{equation}

When $K\gg1$, suppose we can find an orthogonal matrix $\mathbf{B}$ and
$K_{1}$ that satisfy the condition (\ref{Thm4.2}) in \textbf{Theorem
\ref{Theorem 4} }or\textbf{ }condition (\ref{Rem4.2}), i.e.
\begin{align}
0  &  \leq\left\langle {\mathrm{Tr}}\left(  \mathbf{B}_{\mathbf{y}}\right)
\right\rangle _{\mathbf{y}}\ll\gamma\text{,}\label{3.23}\\
\mathbf{B}_{\mathbf{y}}  &  =\mathbf{J}_{2\text{,\thinspace}2}(\mathbf{y}%
)-\mathbf{J}_{2\text{,\thinspace}1}(\mathbf{y})\left(  \mathbf{J}%
_{1\text{,\thinspace}1}(\mathbf{y})+\mathbf{I}_{K_{1}}\right)  ^{-1}%
\mathbf{J}_{1\text{,\thinspace}2}(\mathbf{y})\text{,}\label{3.23b}\\
\gamma &  =\left\langle \ln\left(  \det\left(  \mathbf{J}{_{1\text{,\thinspace
}1}(\mathbf{y})}+\mathbf{I}_{K_{1}}\right)  \right)  \right\rangle
_{\mathbf{y}}. \label{3.23c}%
\end{align}
Here matrix $\mathbf{B}_{\mathbf{y}}$ is positive-semidefinite because
\begin{equation}
\mathbf{J}_{2\text{,\thinspace}2}(\mathbf{y})-\mathbf{J}_{2\text{,\thinspace
}1}(\mathbf{y})\left(  \mathbf{J}_{1\text{,\thinspace}1}(\mathbf{y}%
)+\mathbf{I}_{K_{1}}\right)  ^{-1}\mathbf{J}_{1\text{,\thinspace}2}%
(\mathbf{y})=\left\langle \boldsymbol{\rho}\mathbf{(\mathbf{r}|\mathbf{y}%
)}\boldsymbol{\rho}(\mathbf{r}|\mathbf{y})^{T}\right\rangle _{\mathbf{r}%
|\mathbf{y}}\text{,} \label{3.24}%
\end{equation}
where
\begin{equation}
\boldsymbol{\rho}(\mathbf{r}|\mathbf{y})=\dfrac{\partial\ln p(\mathbf{r}%
|\mathbf{y})}{\partial\mathbf{y}_{2}}-\mathbf{J}_{2\text{,\thinspace}%
1}(\mathbf{y})\left(  \mathbf{J}_{1\text{,\thinspace}1}(\mathbf{y}%
)+\mathbf{I}_{K_{1}}\right)  ^{-1}\left(  \dfrac{\partial\ln p(\mathbf{r}%
|\mathbf{y})}{\partial\mathbf{y}_{1}}+\mathbf{a}\left(  \mathbf{r}\right)
\right)  \label{3.25}%
\end{equation}
and $\mathbf{a}\left(  \mathbf{r}\right)  $ is a $K_{1}$-dimensional random
vector that satisfies%
\begin{align}
\left\langle \dfrac{\partial\ln p(\mathbf{r}|\mathbf{y})}{\partial
\mathbf{y}_{2}}\mathbf{a}\left(  \mathbf{r}\right)  ^{T}\right\rangle
_{\mathbf{r}|\mathbf{y}}  &  =\left\langle \dfrac{\partial\ln p(\mathbf{r}%
|\mathbf{y})}{\partial\mathbf{y}_{2}}\right\rangle _{\mathbf{r}|\mathbf{y}%
}\left\langle \mathbf{a}\left(  \mathbf{r}\right)  ^{T}\right\rangle
_{\mathbf{r}|\mathbf{y}}=\mathbf{0}\text{,}\label{3.25A}\\
\left\langle \mathbf{a}\left(  \mathbf{r}\right)  \mathbf{a}\left(
\mathbf{r}\right)  ^{T}\right\rangle _{\mathbf{r}|\mathbf{y}}  &
=\mathbf{I}_{K_{1}}\text{.} \label{3.25B}%
\end{align}

Assuming that $\mathbf{J}_{1\text{,\thinspace}1}(\mathbf{y})$ is
positive-definite, $\left\Vert \mathbf{J}_{1\text{,\thinspace}1}%
^{-1}(\mathbf{y})\right\Vert =O\left(  N^{-1}\right)  $ and $\left\Vert
\mathbf{J}_{1\text{,\thinspace}2}(\mathbf{y})\right\Vert =\left\Vert
\mathbf{J}_{2\text{,\thinspace}1}(\mathbf{y})\right\Vert =O\left(  N\right)
$, we have
\begin{equation}
\left(  \mathbf{J}_{1\text{,\thinspace}1}(\mathbf{y})+\mathbf{I}_{K_{1}%
}\right)  ^{-1}=\mathbf{J}_{1\text{,\thinspace}1}^{-1}(\mathbf{y}%
)-\mathbf{J}_{1\text{,\thinspace}1}^{-2}(\mathbf{y})+O\left(  \mathbf{J}%
_{1\text{,\thinspace}1}^{-3}(\mathbf{y})\right)  \label{3.25a}%
\end{equation}
and%
\begin{align}
\mathrm{Tr}\left(  \mathbf{C}_{\mathbf{x}}\right)   &  =\mathrm{Tr}\left(
\mathbf{J}_{2\text{,\thinspace}2}(\mathbf{y})-\mathbf{J}_{2\text{,\thinspace
}1}(\mathbf{y})\mathbf{J}_{1\text{,\thinspace}1}^{-1}(\mathbf{y}%
)\mathbf{J}_{1\text{,\thinspace}2}(\mathbf{y})\right) \nonumber\\
&  +\mathrm{Tr}\left(  \mathbf{J}_{2\text{,\thinspace}1}(\mathbf{y}%
)\mathbf{J}_{1\text{,\thinspace}1}^{-2}(\mathbf{y})\mathbf{J}%
_{1\text{,\thinspace}2}(\mathbf{y})\right)  +O\left(  N^{-1}\right)  \text{.}
\label{3.25b}%
\end{align}
Hence, if
\begin{align}
\left\vert \mathrm{Tr}\left(  \mathbf{J}_{2\text{,\thinspace}2}(\mathbf{y}%
)-\mathbf{J}_{2\text{,\thinspace}1}(\mathbf{y})\mathbf{J}_{1\text{,\thinspace
}1}^{-1}(\mathbf{y})\mathbf{J}_{1\text{,\thinspace}2}(\mathbf{y})\right)
\right\vert  &  \ll\gamma\text{,}\label{3.25c}\\
\left\vert \mathrm{Tr}\left(  \mathbf{J}_{2\text{,\thinspace}1}(\mathbf{y}%
)\mathbf{J}_{1\text{,\thinspace}1}^{-2}(\mathbf{y})\mathbf{J}%
_{1\text{,\thinspace}2}(\mathbf{y})\right)  \right\vert  &  \ll\gamma\text{,}
\label{3.25d}%
\end{align}
then (\ref{3.23}) holds. Notice that the matrix $\left(  \mathbf{J}%
_{2\text{,\thinspace}2}(\mathbf{y})-\mathbf{J}_{2\text{,\thinspace}%
1}(\mathbf{y})\mathbf{J}_{1\text{,\thinspace}1}^{-1}(\mathbf{y})\mathbf{J}%
_{1\text{,\thinspace}2}(\mathbf{y})\right)  $ is positive-semidefinite which
is similar to (\ref{3.24}) and $0\leq\mathrm{Tr}\left(  \mathbf{J}%
_{2\text{,\thinspace}1}(\mathbf{y})\mathbf{J}_{1\text{,\thinspace}1}%
^{-1}(\mathbf{y})\mathbf{J}_{1\text{,\thinspace}2}(\mathbf{y})\right)
\leq\mathrm{Tr}\left(  \mathbf{J}_{2\text{,\thinspace}2}(\mathbf{y})\right)
$. Hence, if
\begin{equation}
\mathrm{Tr}\left(  \mathbf{J}_{2\text{,\thinspace}2}(\mathbf{y})\right)
\ll\gamma\text{,} \label{3.26}%
\end{equation}
then (\ref{3.25c}) and (\ref{3.25d}) hold and (\ref{3.23}) holds.

\section{Optimization of Information Transfer in Neural Population Coding}

\label{Sec:4}

\subsection{Population Density Distribution of Parameters in Neural
Populations}

\label{Sec:4.1.2}If $p(\mathbf{r}|\mathbf{x})$ is conditional independent, we
can write
\begin{equation}
p(\mathbf{r}|\mathbf{x})=\prod_{n=1}^{N}p(r_{n}|\mathbf{x}\text{;\thinspace
}\boldsymbol{\theta}_{n})\text{,} \label{4.1}%
\end{equation}
where $\boldsymbol{\theta}_{n}\in%
\mathbb{R}
^{\tilde{K}}$ denotes a $\tilde{K}$-dimensional vector for parameters of the
\textit{n}-th neuron, and $p(r_{n}|\mathbf{x}$;\thinspace$\boldsymbol{\theta
}_{n})$ is the conditional p.d.f.\ of the output $r_{n}$ given $\mathbf{x}$.
With the definition in (\ref{Jx}), we have following proposition.

\begin{proposition}
\label{Proposition 1}If $p(\mathbf{r}|\mathbf{x})$ is conditional independent
as in Eq. (\ref{4.1}), we have
\begin{equation}
\mathbf{J}(\mathbf{x})=N\int_{{\Theta}}p({\boldsymbol{\theta}})\mathbf{S}%
(\mathbf{x}\text{;\thinspace}{\boldsymbol{\theta}})d{\boldsymbol{\theta}%
}\text{,} \label{Ppn1.1}%
\end{equation}
where
\begin{equation}
\mathbf{S}(\mathbf{x}\text{;\thinspace}\boldsymbol{\theta})=\int%
_{\mathfrak{R}}p(r|\mathbf{x}\text{;\thinspace}\boldsymbol{\theta}%
)\frac{\partial\ln p(r|\mathbf{x}\text{;\thinspace}\boldsymbol{\theta}%
)}{\partial\mathbf{x}}\frac{\partial\ln p(r|\mathbf{x}\text{;\thinspace
}\boldsymbol{\theta})}{\partial\mathbf{x}^{T}}dr\text{,} \label{Ppn1.2}%
\end{equation}
$r\in\mathfrak{R}\subseteq%
\mathbb{R}
$, $\boldsymbol{\theta}\in\Theta\subseteq%
\mathbb{R}
^{\tilde{K}}$, and $p(\boldsymbol{\theta})$ is the population density function
of parameter vector ${\boldsymbol{\theta}}$:
\begin{equation}
p(\boldsymbol{\theta})=\frac{1}{N}\sum_{n=1}^{N}\delta(\boldsymbol{\theta
}-\boldsymbol{\theta}_{n})\text{,} \label{Ppn1.3}%
\end{equation}
with $\delta(\cdot)$\ being the Dirac delta function.
\end{proposition}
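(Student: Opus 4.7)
The plan is to compute $\mathbf{J}(\mathbf{x})$ directly from the definition (\ref{Jx}) by expanding the log-likelihood as a sum, using the vanishing of the expected score to kill cross terms, and then recognizing the remaining sum as the integral against the empirical density $p(\boldsymbol{\theta})$.

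First, I would take the logarithm of (\ref{4.1}) to obtain $l(\mathbf{r}|\mathbf{x})=\sum_{n=1}^{N}\ln p(r_{n}|\mathbf{x};\boldsymbol{\theta}_{n})$, and differentiate with respect to $\mathbf{x}$:
\begin{equation*}
l^{\prime}(\mathbf{r}|\mathbf{x})=\sum_{n=1}^{N}\frac{\partial\ln p(r_{n}|\mathbf{x};\boldsymbol{\theta}_{n})}{\partial\mathbf{x}}.
\end{equation*}
Substituting this into (\ref{Jx}) gives a double sum over $n$ and $m$ of outer products. The next step is to split this sum into diagonal ($n=m$) and off-diagonal ($n\neq m$) contributions.

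For the off-diagonal terms, conditional independence factorizes the expectation over $\mathbf{r}$ into expectations over $r_n$ and $r_m$ separately, and then the identity
\begin{equation*}
\left\langle \frac{\partial\ln p(r_{n}|\mathbf{x};\boldsymbol{\theta}_{n})}{\partial\mathbf{x}}\right\rangle _{r_{n}|\mathbf{x}}=\frac{\partial}{\partial\mathbf{x}}\int_{\mathfrak{R}}p(r_{n}|\mathbf{x};\boldsymbol{\theta}_{n})\,dr_{n}=\mathbf{0}
\end{equation*}
(the standard vanishing of the expected score, cf.\ (\ref{RemS1.1a})) makes every cross term vanish. Only the diagonal $n=m$ terms survive, and each of these equals $\mathbf{S}(\mathbf{x};\boldsymbol{\theta}_{n})$ by the definition (\ref{Ppn1.2}). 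Therefore $\mathbf{J}(\mathbf{x})=\sum_{n=1}^{N}\mathbf{S}(\mathbf{x};\boldsymbol{\theta}_{n})$.

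Finally, I would invoke the sifting property of the Dirac delta: using (\ref{Ppn1.3}),
\begin{equation*}
N\int_{\Theta}p(\boldsymbol{\theta})\mathbf{S}(\mathbf{x};\boldsymbol{\theta})\,d\boldsymbol{\theta}=\sum_{n=1}^{N}\int_{\Theta}\delta(\boldsymbol{\theta}-\boldsymbol{\theta}_{n})\mathbf{S}(\mathbf{x};\boldsymbol{\theta})\,d\boldsymbol{\theta}=\sum_{n=1}^{N}\mathbf{S}(\mathbf{x};\boldsymbol{\theta}_{n}),
\end{equation*}
which matches the expression obtained above and completes the proof. The argument is essentially routine; there is no serious obstacle, as the only substantive ingredient is the vanishing-score identity, which is already invoked elsewhere in the paper (e.g.\ (\ref{RemS1.1a})). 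The only mild care needed is to confirm that the differentiation under the integral sign is legitimate, which follows from the smoothness assumption on $p(r_{n}|\mathbf{x};\boldsymbol{\theta}_{n})$ implicit in the use of Fisher information throughout.
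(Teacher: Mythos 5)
Your proposal is correct and follows essentially the same route as the paper: the paper's proof passes directly from the definition of $\mathbf{J}(\mathbf{x})$ to the single sum $\sum_{n}\mathbf{S}(\mathbf{x};\boldsymbol{\theta}_{n})$ (implicitly using the additivity of Fisher information under conditional independence, i.e.\ exactly the vanishing-score cancellation of cross terms you spell out) and then rewrites that sum via the Dirac-delta sifting property. You merely make the off-diagonal cancellation explicit, which the paper leaves implicit.
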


\begin{proof}%
\begin{align}
\mathbf{J}{(\mathbf{x})}  &  ={\int_{\mathcal{R}}p(\mathbf{r}|\mathbf{x}%
)\dfrac{\partial\ln p(\mathbf{r}|\mathbf{x})}{\partial\mathbf{x}}%
\dfrac{\partial\ln p(\mathbf{r}|\mathbf{x})}{\partial\mathbf{x}^{T}}%
d}\mathbf{r}\nonumber\\
&  ={\sum_{n=1}^{N}\int_{\mathfrak{R}}p(}r{_{n}|\mathbf{x}\text{;\thinspace
}{\boldsymbol{\theta}}_{n})\dfrac{\partial\ln p(r_{n}|\mathbf{x}%
\text{;\thinspace}{\boldsymbol{\theta}}_{n})}{\partial\mathbf{x}}%
\dfrac{\partial\ln p(r_{n}|\mathbf{x}\text{;\thinspace}{\boldsymbol{\theta}%
}_{n})}{\partial\mathbf{x}^{T}}dr_{n}}\nonumber\\
&  ={\int_{\Theta}\sum_{n=1}^{N}\delta({\boldsymbol{\theta}}%
-{\boldsymbol{\theta}}_{n})\left(  \int_{\mathfrak{R}}p(r|\mathbf{x}%
\text{;\thinspace}{\boldsymbol{\theta}})\dfrac{\partial\ln p(r|\mathbf{x}%
\text{;\thinspace}{\boldsymbol{\theta}})}{\partial\mathbf{x}}\dfrac
{\partial\ln p(r|\mathbf{x}\text{;\thinspace}{\boldsymbol{\theta}})}%
{\partial\mathbf{x}^{T}}dr\right)  d}{\boldsymbol{\theta}}\nonumber\\
&  =N{\int_{{\Theta}}p({\boldsymbol{\theta}})\mathbf{S}(\mathbf{x}%
\text{;\thinspace}{\boldsymbol{\theta}})d}{\boldsymbol{\theta}}\text{.}
\label{Ppn1.4}%
\end{align}

\end{proof}

\begin{remark}
\label{Remark 5.1} \textbf{Proposition \ref{Proposition 1}} shows that
$\mathbf{J}(\mathbf{x})$ can be regarded as a function of the population
density of parameters, $p({\boldsymbol{\theta}})$. If the p.d.f.\ of the input
$p(\mathbf{x})$ is given, we can find an appropriate $p({\boldsymbol{\theta}%
})$ to maximize MI $I$. \qed

\end{remark}

For neuron model with Poisson spikes, we have
\begin{align}
&  {p(\mathbf{r}|\mathbf{x})=\prod_{n=1}^{N}p(}r{_{n}|\mathbf{x}%
\text{;\thinspace}{\boldsymbol{\theta}}_{n})}\text{,}\label{PoissNeuron.a}\\
&  {p(r_{n}|\mathbf{x}\text{;\thinspace}{\boldsymbol{\theta}}_{n}%
)=\dfrac{f(\mathbf{x}\text{;\thinspace}{\boldsymbol{\theta}}_{n})^{r_{n}}%
}{r_{n}!}\exp\left(  -f(\mathbf{x}\text{;\thinspace}{\boldsymbol{\theta}}%
_{n})\right)  }\text{,} \label{PoissNeuron}%
\end{align}
where $f(\mathbf{x}$;\thinspace${\boldsymbol{\theta}}_{n})$ is the tuning
curve of the \textit{n-}th neuron, $n=1$,\thinspace$2$,\thinspace$\cdots
$,\thinspace$N$. Now we have
\begin{align}
{\mathbf{S}(\mathbf{x}\text{;\thinspace}{\boldsymbol{\theta}})}  &
={\int_{\mathfrak{R}}p(}r{|\mathbf{x}\text{;\thinspace}{\boldsymbol{\theta}%
})\dfrac{\partial\ln p(r|\mathbf{x}\text{;\thinspace}{\boldsymbol{\theta}}%
)}{\partial\mathbf{x}}\dfrac{\partial\ln p(r|\mathbf{x}\text{;\thinspace
}{\boldsymbol{\theta}})}{\partial\mathbf{x}^{T}}d}r\nonumber\\
&  ={\dfrac{1}{f(\mathbf{x}\text{;\thinspace}{\boldsymbol{\theta}})}%
\dfrac{\partial f(\mathbf{x}\text{;\thinspace}{\boldsymbol{\theta}})}%
{\partial\mathbf{x}}\dfrac{\partial f(\mathbf{x}\text{;\thinspace
}{\boldsymbol{\theta}})}{\partial\mathbf{x}^{T}}}\nonumber\\
&  ={\dfrac{\partial g(\mathbf{x}\text{;\thinspace}{\boldsymbol{\theta}}%
)}{\partial\mathbf{x}}\dfrac{\partial g(\mathbf{x}\text{;\thinspace
}{\boldsymbol{\theta}})}{\partial\mathbf{x}^{T}}}\text{,}\label{PoissNeuron.1}%
\\
{g(\mathbf{x}\text{;\thinspace}{\boldsymbol{\theta}})}  &  =2\sqrt
{f(\mathbf{x}\text{;\thinspace}{\boldsymbol{\theta}})}\text{.}
\label{PoissNeuron.2}%
\end{align}
Similarly, for neuron response model with Gaussian noise, we have
\begin{align}
&  {p(\mathbf{r}|\mathbf{x})=\prod_{n=1}^{N}p(}r{_{n}|\mathbf{x}%
\text{;\thinspace}{\boldsymbol{\theta}}_{n})}\text{,}\label{GaussNeuron.a}\\
&  {p(r_{n}|\mathbf{x}\text{;\thinspace}{\boldsymbol{\theta}}_{n})=\dfrac
{1}{\sigma\sqrt{2\pi}}\exp\left(  -\dfrac{\left(  r_{n}-f(\mathbf{x}%
\text{;\thinspace}{\boldsymbol{\theta}}_{n})\right)  ^{2}}{2\sigma^{2}%
}\right)  }\text{,} \label{GaussNeuron}%
\end{align}
where $\sigma$ is a constant standard deviation. Now we get
\begin{equation}
{\mathbf{S}(\mathbf{x}\text{;\thinspace}{\boldsymbol{\theta}})}={\dfrac
{1}{\sigma^{2}}\dfrac{\partial f(\mathbf{x}\text{;\thinspace}%
{\boldsymbol{\theta}})}{\partial\mathbf{x}}\dfrac{\partial f(\mathbf{x}%
\text{;\thinspace}{\boldsymbol{\theta}})}{\partial\mathbf{x}^{T}}}\text{.}
\label{GaussNeuron.1}%
\end{equation}

\subsection{Optimal Population Distribution for Neural Population Coding}

\label{Sec:4.1}Suppose $p(\mathbf{x})$ and $p(\mathbf{r}|\mathbf{x})$ fulfill
conditions\textbf{\ C1} and \textbf{C2 }and Eq.\textbf{ (\ref{4.1})}.
Following the discussion of section \ref{Sec:2.2}, we define the following
objective for maximizing MI $I$,
\begin{equation}
\text{\textsf{maximize}}\mathrm{\;\;}I_{G}[p({\boldsymbol{\theta}})]=\frac
{1}{2}\left\langle \ln\left(  \det\left(  \frac{\mathbf{G}(\mathbf{x})}{2\pi
e}\right)  \right)  \right\rangle _{\mathbf{x}}+H(X)\text{,} \label{maxIa}%
\end{equation}
or equivalently,
\begin{equation}
\text{\textsf{minimize}}\mathrm{\;\;}Q_{G}[p({\boldsymbol{\theta}})]=-\frac
{1}{2}\left\langle \ln\left(  \det\left(  \mathbf{G}(\mathbf{x})\right)
\right)  \right\rangle _{\mathbf{x}}\text{,} \label{maxQa}%
\end{equation}
where
\begin{align}
&  \mathbf{G}{(\mathbf{x})}=\mathbf{J}{(\mathbf{x})+\mathbf{P}}\left(
\mathbf{x}\right)  \text{,}\label{GJSP}\\
&  \mathbf{J}{(\mathbf{x})=N\int_{{{\Theta}}}p({\boldsymbol{\theta}%
})\mathbf{S}(\mathbf{x}\text{;\thinspace}{\boldsymbol{\theta}}%
)d{\boldsymbol{\theta}}}\text{,}\label{GJSPb}\\
&  {\mathbf{S}(\mathbf{x}\text{;\thinspace}{\boldsymbol{\theta}}%
)}=\left\langle {\dfrac{\partial\ln p(r|\mathbf{x}\text{;\thinspace
}{\boldsymbol{\theta}})}{\partial\mathbf{x}}\dfrac{\partial\ln p(r|\mathbf{x}%
\text{;\thinspace}{\boldsymbol{\theta}})}{\partial\mathbf{x}^{T}}%
}\right\rangle _{r{|\mathbf{x}\text{;\thinspace}{\boldsymbol{\theta}}}%
}\text{.} \label{GJSPc}%
\end{align}
Here ${\mathbf{P}}\left(  \mathbf{x}\right)  $ is given in (\ref{Px}) and it
generally can be substituted by ${\mathbf{P}}_{+}$ (see Eq. \ref{PN}).

When $\varsigma_{1}\approx0$ (see Eq. \ref{Thm1a.3a}), the object function
(\ref{maxIa}) can be reduced to
\begin{equation}
\text{\textsf{maximize}}\mathrm{\;\;}I_{F}[p({\boldsymbol{\theta}})]=\frac
{1}{2}\left\langle \ln\left(  \det\left(  \frac{\mathbf{J}(\mathbf{x})}{2\pi
e}\right)  \right)  \right\rangle _{\mathbf{x}}+H(X)\text{,} \label{maxIb}%
\end{equation}
or equivalently,
\begin{equation}
\text{\textsf{minimize}}\mathrm{\;\;}Q_{F}[p({\boldsymbol{\theta}})]=-\frac
{1}{2}\left\langle \ln\left(  \det\left(  \mathbf{J}(\mathbf{x})\right)
\right)  \right\rangle _{\mathbf{x}}\text{.} \label{maxQb}%
\end{equation}
The constraint condition for $p({\boldsymbol{\theta}})$ is given by
\begin{equation}
\text{\textsf{subject\ to}}\mathrm{\;\;}{\int_{{{\Theta}}}%
p({\boldsymbol{\theta}})d{\boldsymbol{\theta}}=1}\text{, } \quad
{p({\boldsymbol{\theta}})\geq0}\text{.} \label{conditionPtheta}%
\end{equation}

However, without further constraints on the neural populations, especially a
limit on the peak firing rate, the capacity of the system may grow
indefinitely, i.e. $I(X$;\thinspace$R)\rightarrow\infty$. The most common
limitation on neural populations is the energy or power constraint. For neuron
models with Poisson noise or Gaussian noise, a useful constraint is a
limitation on the peak power,%
\begin{equation}
\left\vert f(\mathbf{x}\text{;\thinspace}{\boldsymbol{\theta}}_{n})\right\vert
\leq E_{\max}\text{, } \quad\forall\mathbf{x}\in{\mathcal{X}} \quad\mbox{and}
\quad\forall n=1\text{,\thinspace}2\text{,\thinspace}\cdots\text{,\thinspace
}N\text{.} \label{Emax}%
\end{equation}
where $E_{\max}>0$ is the peak power. Under this constraint, maximizing
$I_{G}[p({\boldsymbol{\theta}})]$ or $I_{F}[p({\boldsymbol{\theta}})]$ for
independent neurons will result in $\max_{\mathbf{x}}\left\vert f(\mathbf{x}%
\text{;\thinspace}{\boldsymbol{\theta}}_{n})\right\vert =E_{\max}$ for
$\forall n=1$,\thinspace$2$,\thinspace$\cdots$,\thinspace$N$.

Another constraint is a limitation on average power. For Poisson neurons given
in Eq. (\ref{PoissNeuron}),
\begin{equation}
\frac{1}{N}\sum_{n=1}^{N}\left\langle \left\langle r_{n}p(r_{n}|\mathbf{x}%
\text{;\thinspace}{\boldsymbol{\theta}}_{n})\right\rangle _{r_{n}|\mathbf{x}%
}\right\rangle _{\mathbf{x}}\leq E_{\mathrm{avg}}\text{,} \label{Constrain.1}%
\end{equation}
which can also be written as
\begin{equation}
\left\langle \left\langle f(\mathbf{x}\text{;\thinspace}\boldsymbol{\theta
})\right\rangle _{\mathbf{x}}\right\rangle _{\boldsymbol{\theta}}\leq
E_{\mathrm{avg}}\text{,} \label{Constrain.2}%
\end{equation}
and for Gaussian noise neurons given in Eq. (\ref{GaussNeuron}),
\begin{equation}
\left\langle \left\langle f(\mathbf{x}\text{;\thinspace}\boldsymbol{\theta
})^{2}\right\rangle _{\mathbf{x}}\right\rangle _{\boldsymbol{\theta}}\leq
E_{\mathrm{avg}}\text{,} \label{Constrain.3}%
\end{equation}
where $E_{\mathrm{avg}}>0$ is the maximum average energy cost.

In Eq. (\ref{GJSP}), we can approximate the continuous integral by a discrete
summation for numerical computation,%
\begin{equation}
\mathbf{J}(\mathbf{x})=N\sum_{k=1}^{K_{1}}\alpha_{k}\mathbf{S}(\mathbf{x}%
\text{;\thinspace}{\boldsymbol{\theta}}_{k})\text{,} \label{Ja}%
\end{equation}
where the positive integer $K_{1}\leq N$ denotes the number of subclasses in
the neural population, and
\begin{equation}
\sum_{k=1}^{K_{1}}\alpha_{k}=1\text{, } \quad\alpha_{k}>0\text{, }\forall
k=1\text{,\thinspace}2\text{,\thinspace}\cdots\text{,\thinspace}K_{1}\text{.}
\label{alfa}%
\end{equation}

If we do not know the specific form of $p(\mathbf{x})$ but have $M$ samples,
$\mathbf{x}_{1}$, $\mathbf{x}_{2}$, $\cdots$, $\mathbf{x}_{M}$, which are
i.i.d.\ samples drawn from the distribution $p(\mathbf{x})$, then we can
approximate the integral in (\ref{maxIa}) by the sample average:
\begin{equation}
\left\langle \ln\left(  \det\left(  \mathbf{G}(\mathbf{x})\right)  \right)
\right\rangle _{\mathbf{x}}\simeq\frac{1}{M}\sum_{m=1}^{M}\ln\left(
\det\left(  \mathbf{G}(\mathbf{x}_{m})\right)  \right)  \text{.} \label{GAppx}%
\end{equation}

Optimizing the objective (\ref{maxIa}) or (\ref{maxIb}) is a convex
optimization problem (see Appendix for a proof).

\begin{proposition}
\label{Proposition 2}The functions $I_{G}[p({\boldsymbol{\theta}})]$ and
$I_{F}[p({\boldsymbol{\theta}})]$ are concave about $p({\boldsymbol{\theta}})$.
\end{proposition}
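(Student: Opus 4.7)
The plan is to reduce the concavity claim to the well-known concavity of $\ln\det(\cdot)$ on the cone of symmetric positive-definite matrices. First, I would observe from (\ref{GJSPb}) that $\mathbf{J}(\mathbf{x})$ is a linear functional of $p(\boldsymbol{\theta})$: for any two admissible densities $p_0,\,p_1$ satisfying (\ref{conditionPtheta}) and any $\lambda\in[0,1]$, the mixture $p_\lambda=\lambda p_1+(1-\lambda)p_0$ yields $\mathbf{J}_\lambda(\mathbf{x})=\lambda\mathbf{J}_1(\mathbf{x})+(1-\lambda)\mathbf{J}_0(\mathbf{x})$. Since $\mathbf{P}(\mathbf{x})$ and $H(X)$ in (\ref{maxIa}) depend only on the input distribution $p(\mathbf{x})$ and not on $p(\boldsymbol{\theta})$, the matrix $\mathbf{G}(\mathbf{x})=\mathbf{J}(\mathbf{x})+\mathbf{P}(\mathbf{x})$ is an affine map of $p(\boldsymbol{\theta})$.

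Next, I would invoke the classical fact that $\mathbf{M}\mapsto\ln\det(\mathbf{M})$ is concave on the open cone of symmetric positive-definite matrices. Because the composition of a concave function with an affine map is concave, for each fixed $\mathbf{x}\in\mathcal{X}$ the scalar $\ln\det(\mathbf{G}(\mathbf{x}))$ is concave in $p(\boldsymbol{\theta})$. Integrating against the non-negative measure $p(\mathbf{x})\,d\mathbf{x}$, which preserves pointwise concavity, and then adding the $p(\boldsymbol{\theta})$-independent constant $H(X)$ and multiplying by the positive scalar $1/2$ shows that $I_G[p(\boldsymbol{\theta})]$ is concave. The argument for $I_F[p(\boldsymbol{\theta})]$ is identical, with $\mathbf{G}(\mathbf{x})$ replaced throughout by $\mathbf{J}(\mathbf{x})$, which is already linear (hence trivially affine) in $p(\boldsymbol{\theta})$.

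The delicate point I anticipate is verifying positive-definiteness along the entire mixture path so that $\ln\det$ stays in its natural domain. Each $\mathbf{S}(\mathbf{x};\boldsymbol{\theta})$ is positive-semidefinite as an outer product (see (\ref{Ppn1.2})), so $\mathbf{J}_\lambda(\mathbf{x})$ is positive-semidefinite for every admissible $p_\lambda$. Condition \textbf{C1} asserts positive-definiteness of $\mathbf{G}(\mathbf{x})$ for the underlying model, and since adding a positive-semidefinite matrix to a positive-definite matrix preserves positive-definiteness, $\mathbf{G}_\lambda(\mathbf{x})$ remains in the domain of $\ln\det$ for every $\lambda\in[0,1]$. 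For $I_F$ one either restricts to densities making $\mathbf{J}(\mathbf{x})$ positive-definite, or adopts the extended-real convention $\ln\det(\mathbf{M})=-\infty$ outside the cone, under which concavity still holds. This domain-bookkeeping is the main obstacle, but it is purely technical; the substantive content reduces to concavity of $\ln\det$ composed with an affine map and preservation of concavity under integration against a positive measure.
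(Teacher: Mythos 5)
Your proposal is correct and follows essentially the same route as the paper: both arguments rest on the observation that $\mathbf{G}(\mathbf{x})$ depends affinely on $p(\boldsymbol{\theta})$ through the linear functional $\mathbf{J}(\mathbf{x})=N\int_{\Theta}p(\boldsymbol{\theta})\mathbf{S}(\mathbf{x}\text{;\thinspace}\boldsymbol{\theta})d\boldsymbol{\theta}$, and then on the concavity of $\ln\det(\cdot)$ over positive-definite matrices, which passes through the expectation over $\mathbf{x}$. The only difference is that you cite the log-concavity of the determinant as a classical fact, whereas the paper derives it in place from the Minkowski determinant inequality combined with the weighted AM--GM inequality; this is a matter of self-containedness rather than of substance.
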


\begin{remark}
\label{Remark 5.2}For a low-dimensional input $\mathbf{x}$, we may use
(\ref{maxIb}) or (\ref{maxQb}) as the objective. Since $I_{G}%
[p({\boldsymbol{\theta}})]$ and $I_{F}[p({\boldsymbol{\theta}})]$ are concave
functions of $p({\boldsymbol{\theta}})$, we can directly use efficient
numerical methods to get the optimal solution for small $K$. However, for
high-dimensional input $\mathbf{x}$, we need to use other methods
\citep[e.g. ][]{Huang(2017-IC-information)}.\qed

\end{remark}

\subsection{Necessary and Sufficient Conditions for Optimal Population
Distribution}

\label{Sec:4.1.1}Applying the method of Lagrange multipliers for the
optimization problem (\ref{maxIa}) and (\ref{conditionPtheta}) yields
\begin{equation}
L[p({\boldsymbol{\theta}})]=I_{G}[p({\boldsymbol{\theta}})]-\lambda_{1}\left(
\int_{{\Theta}}p({\boldsymbol{\theta}})d{\boldsymbol{\theta}}-1\right)
+\int_{{\Theta}}\lambda_{2}({\boldsymbol{\theta}})p({\boldsymbol{\theta}%
})d{\boldsymbol{\theta}}\text{,} \label{LPtheta}%
\end{equation}
where $\lambda_{1}$ is a constant and $\lambda_{2}({\boldsymbol{\theta}})$ is
a function of ${\boldsymbol{\theta}}$. According to Karush-Kuhn-Tucker (KKT)
conditions \citep{Boyd(2004-BK-convex)}, we have
\begin{equation}
{\lambda_{2}({\boldsymbol{\theta}})p({\boldsymbol{\theta}})=0}\text{, }%
\quad{\lambda_{2}({\boldsymbol{\theta}})\geq0}\text{,} \label{LPtheta.1}%
\end{equation}
and the necessary condition for optimal population density,
\begin{equation}
\frac{\partial L[p({\boldsymbol{\theta}})]}{\partial p({\boldsymbol{\theta}}%
)}=\frac{1}{2}\left\langle \mathrm{Tr}\left(  N\mathbf{G}(\mathbf{x}%
)^{-1}\mathbf{S}(\mathbf{x}\text{;\thinspace}{\boldsymbol{\theta}})\right)
\right\rangle _{\mathbf{x}}-\lambda_{1}+\lambda_{2}({\boldsymbol{\theta}%
})=0\text{.} \label{LPtheta.2}%
\end{equation}
It follows from (\ref{LPtheta.1}) and (\ref{LPtheta.2}) that
\begin{align}
\frac{1}{2}\left\langle \mathrm{Tr}\left(  N\mathbf{G}(\mathbf{x}%
)^{-1}\mathbf{S}(\mathbf{x}\text{;\thinspace}{\boldsymbol{\theta}})\right)
\right\rangle _{\mathbf{x}}  &  ={\lambda_{1}}\text{{,}}{\mathrm{\;}\quad
p({\boldsymbol{\theta}})\neq0}\text{{,}}\label{LPtheta.3a}\\
\frac{1}{2}\left\langle \mathrm{Tr}\left(  N\mathbf{G}(\mathbf{x}%
)^{-1}\mathbf{S}(\mathbf{x}\text{;\thinspace}{\boldsymbol{\theta}})\right)
\right\rangle _{\mathbf{x}}  &  ={\lambda_{1}-\lambda_{2}({\boldsymbol{\theta
}})}\text{{,}}{\mathrm{\;}\quad p({\boldsymbol{\theta}})=0}\text{{.}}
\label{LPtheta.3b}%
\end{align}
Since $I_{G}[p({\boldsymbol{\theta}})]$ is a concave function of
$p({\boldsymbol{\theta}})$, Eq. (\ref{LPtheta.3a}) and (\ref{LPtheta.3b}) are
the necessary and sufficient conditions for the optimization problem
(\ref{maxIa}) and (\ref{conditionPtheta}).\textbf{\ }

\subsection{Channel Capacity for Neural Population Coding}

\label{Sec:4.1.3} If $p(\mathbf{x})$ is unknown, then by Jensen's inequality,
we have
\begin{align}
I\simeq{I_{G}\left[  p(\mathbf{x})\right]  }  &  ={\int_{{{\mathcal{X}}}%
}p(\mathbf{x})\ln\left(  p(\mathbf{x})^{-1}\det\left(  \dfrac{\mathbf{G}%
(\mathbf{x})}{2\pi e}\right)  ^{1/2}\right)  d}\mathbf{x}\nonumber\\
&  \leq\ln\int_{{\mathcal{X}}}\det\left(  \dfrac{\mathbf{G}(\mathbf{x})}{2\pi
e}\right)  ^{1/2}d\mathbf{x}\text{,} \label{4_1_3.1}%
\end{align}
and the equality holds if and only if $p(\mathbf{x})^{-1}\det\left(
\mathbf{G}(\mathbf{x})\right)  ^{1/2}$ is a constant. Thus
\begin{align}
&  {I_{G}[p^{\ast}(\mathbf{x})]=}\underset{p(\mathbf{x})}{{\max}}{\left(
I_{G}[p(\mathbf{x})]\right)  =\ln\int_{{\mathcal{X}}}\det\left(
\dfrac{\mathbf{G}(\mathbf{x})}{2\pi e}\right)  ^{1/2}d}\mathbf{x}%
\text{,}\label{4_1_3.2}\\
&  {p^{\ast}(\mathbf{x})=\dfrac{\det\left(  \mathbf{G}(\mathbf{x})\right)
^{1/2}}{\int_{{\mathcal{X}}}\det\left(  \mathbf{G}(\mathbf{\hat{x}})\right)
^{1/2}d\mathbf{\hat{x}}}}\text{,} \label{4_1_3.2b}%
\end{align}
assuming $\int_{{\mathcal{X}}}\det\left(  \mathbf{G}(\mathbf{\hat{x}})\right)
^{1/2}d\mathbf{\hat{x}}<\infty$.

Let us consider a specific example. Suppose $\mathbf{J}(\mathbf{x}%
)=\mathbf{J}_{0}$ is a constant matrix, then it follows from (\ref{IG}) that%
\begin{equation}
I_{G}=\frac{1}{2}\left\langle \ln\left(  \det\left(  \frac{\mathbf{J}%
_{0}+\mathbf{P}(\mathbf{x})}{2\pi e}\right)  \right)  \right\rangle
_{\mathbf{x}}+H(X)\text{.} \label{4_1_3.2bc}%
\end{equation}
According to the maximum entropy probability distribution, we know that
maximizing $H(X)$ results in a uniformly distributed ${p(\mathbf{x})}$. Hence
we have $\mathbf{G}(\mathbf{x})=\mathbf{J}_{0}$ and ${p^{\ast}(\mathbf{x})}$
coincides with the uniform distribution (see \ref{4_1_3.2b}). In this case,
the maximum $I_{G}[p^{\ast}(\mathbf{x})]$ can be regarded as the channel
capacity for this neural population.

If we consider a constraint on random variables $X$ and assume that the
covariance matrix of $X$ is $\boldsymbol{\Sigma}_{0}$ and satisfies%
\begin{equation}
\boldsymbol{\Sigma}_{0}^{-1}=\mathbf{P}(\mathbf{x})\text{,} \label{4_1_3.2c}%
\end{equation}
then it follows from the maximum entropy probability distribution that%
\begin{equation}
H(X)\leq\frac{1}{2}\left(  \det\left(  2\pi e\boldsymbol{\Sigma}_{0}\right)
\right)  \text{,} \label{4_1_3.2d}%
\end{equation}
and the equality holds if and only if the p.d.f.\ of the input is a normal
distribution: $p(\mathbf{x})={\mathcal{N}\left(  {\boldsymbol{\mu}%
}\text{,\thinspace}\boldsymbol{\Sigma}_{0}\right)  }$. Hence
\begin{align}
I_{G}  &  =\frac{1}{2}\ln\left(  \det\left(  \frac{\mathbf{J}_{0}%
+\boldsymbol{\Sigma}_{0}^{-1}}{2\pi e}\right)  \right)  +H(X)\nonumber\\
&  \leq\frac{1}{2}\ln\left(  \det\left(  \boldsymbol{\Sigma}_{0}\mathbf{J}%
_{0}+\mathbf{I}_{K}\right)  \right)  =I_{G}[p^{\ast}(\mathbf{x})]\text{,}
\label{4_1_3.2e}%
\end{align}
where $I_{G}[p^{\ast}(\mathbf{x})]$ is the channel capacity of neural
population. Here the equality holds if and only if $p^{\ast}(\mathbf{x}%
)={\mathcal{N}\left(  {\boldsymbol{\mu}}\text{,\thinspace}\boldsymbol{\Sigma
}_{0}\right)  }$, which is consistent with Eq. (\ref{4_1_3.2c}).

Furthermore, if $\varsigma_{1}\approx0$ (see \ref{Thm1a.3a}), we have
\begin{equation}
I\simeq{I_{G}\left[  p(\mathbf{x})\right]  }\simeq I_{F}[p(\mathbf{x}%
)]=\int_{{{\mathcal{X}}}}p(\mathbf{x})\ln\left(  p(\mathbf{x})^{-1}\det\left(
\dfrac{\mathbf{J}(\mathbf{x})}{2\pi e}\right)  ^{1/2}\right)  d\mathbf{x}%
\text{.} \label{4_1_3.4}%
\end{equation}

Similarly, we also get
\begin{align}
&  {I_{F}[p^{\ast}(\mathbf{x})]}={\underset{p(\mathbf{x})}{{\max}}\left(
I_{F}[p(\mathbf{x})]\right)  =\ln\int_{{\mathcal{X}}}\det\left(
\dfrac{\mathbf{J}(\mathbf{x})}{2\pi e}\right)  ^{1/2}d}\mathbf{x}%
\text{,}\label{4_1_3.4a}\\
&  {p^{\ast}(\mathbf{x})=\dfrac{\det\left(  \mathbf{J}(\mathbf{x})\right)
^{1/2}}{\int_{{\mathcal{X}}}\det\left(  \mathbf{J}(\mathbf{\hat{x}})\right)
^{1/2}d\mathbf{\hat{x}}}}\text{,} \label{4_1_3.4b}%
\end{align}
assuming $\int_{{\mathcal{X}}}\det\left(  \mathbf{J}(\hat{\mathbf{x}})\right)
^{1/2}d\mathbf{\hat{x}}<\infty$. Here $I_{F}[p^{\ast}(\mathbf{x})]$ is the
channel capacity of the neural population. The distribution $p^{\ast
}(\mathbf{x})$ coincides with the Jeffrey's prior in Bayesian probability
\citep{Jeffreys(1961-BK-theory)}. In this case, if we suppose the covariance
matrix of $X$ is $\boldsymbol{\Sigma}_{0}$, then similar to (\ref{4_1_3.2d})
and (\ref{4_1_3.2e}), we can get the channel capacity
\begin{equation}
I_{F}[p^{\ast}(\mathbf{x})]=\frac{1}{2}\ln\left(  \det\left(
\boldsymbol{\Sigma}_{0}\mathbf{J}_{0}\right)  \right)  \label{4_1_3.4c}%
\end{equation}
with $p^{\ast}(\mathbf{x})={\mathcal{N}\left(  {\boldsymbol{\mu}%
}\text{,\thinspace}\boldsymbol{\Sigma}_{0}\right)  }$.

For another example, consider the Poisson neuron model given in
(\ref{PoissNeuron}) and suppose the input $x$ is one dimension, $K=1$. It
follows from (\ref{PoissNeuron.1}) and (\ref{4_1_3.4b}) that
\begin{equation}
p^{\ast}(x)=\frac{\left(  {\int_{{{\Theta}}}p({\boldsymbol{\theta}})}\left(
{\frac{\partial g(x\text{;\thinspace}{\boldsymbol{\theta}})}{\partial x}%
}\right)  ^{2}{d{\boldsymbol{\theta}}}\right)  ^{1/2}}{\int_{{\mathcal{X}}%
}\left(  {\int_{{{\Theta}}}p({\boldsymbol{\theta}})}\left(  {\frac{\partial
g(\hat{x}\text{;\thinspace}{\boldsymbol{\theta}})}{\partial\hat{x}}}\right)
^{2}{d{\boldsymbol{\theta}}}\right)  ^{1/2}d\hat{x}}\text{.} \label{4_1_3.5}%
\end{equation}
If ${p({\boldsymbol{\theta}})=\delta({\boldsymbol{\theta}}}%
-{{\boldsymbol{\theta}}}_{0}{)}$, Eq.~(\ref{4_1_3.5}) becomes%
\begin{equation}
p^{\ast}(x)=\frac{\left\vert {\frac{\partial g(x\text{;\thinspace
}{\boldsymbol{\theta}}_{0})}{\partial x}}\right\vert }{\int_{{\mathcal{X}}%
}\left\vert {\frac{\partial g(\hat{x}\text{;\thinspace}{\boldsymbol{\theta}%
}_{0})}{\partial\hat{x}}}\right\vert d\hat{x}}\text{.} \label{4_1_3.6}%
\end{equation}

\cite{Atick(1990-towards)} presented a redundancy measure to approximate
Barlow's optimality\ principle:%
\begin{equation}
\mathcal{R}=1-\frac{I(X;R)}{C(R)}\text{,} \label{minR}%
\end{equation}
where $C(R)$\ is the channel capacity. Here for neural population coding we
have $C(R)\simeq I_{G}[p^{\ast}(\mathbf{x})]$ and $I(X;R)\simeq I_{G}$ (or
$C(R)\simeq I_{F}[p^{\ast}(\mathbf{x})]$ and $I(X;R)\simeq I_{F})$. Hence we
can minimize $\mathcal{R}$\ by choosing an appropriate $\mathbf{J}%
(\mathbf{x})$ to maximize $I_{G}$ (or $I_{F}$)\ and simultaneously satisfying
(\ref{4_1_3.2b}) (or \ref{4_1_3.4b}) \citep[see ][for further details]{Huang(2017-IC-information)}.

\section{Discussion}

\label{Conclusion}In this paper we have derived several information-theoretic
bounds and approximations for effective approximation of MI in the context of
neural population coding for large but finite population size. We have found
some regularity conditions under which the asymptotic bounds and
approximations hold. Generally speaking, these regularity conditions are easy
to meet. Special examples that satisfy these conditions include the cases when
the likelihood function $p(\mathbf{r|x})$ for the neural population responses
is conditionally independent or has correlated noises with a multivariate
Gaussian distribution. Under the general regularity conditions we have derived
several asymptotic bounds and approximations of MI for a neural population and
found some relationships among different approximations.

How to choose among these different asymptotic approximations of MI in a
neural population with finite size $N$? For a flat prior distribution
$p(\mathbf{x})$, we have $I_{G}\simeq I_{F}$; that is, the two approximations
$I_{G}$ and $I_{F}$ are about equally valid. For a sharply peaked prior
distribution $p(\mathbf{x})$, $I_{G}$ is generally a better approximation to
MI $I$ than $I_{F}$. Under suitable conditions (e.g. \textbf{C1} and
\textbf{C2}) for low-dimensional inputs, $I_{G}$, and $I_{F}$ are good
approximations of MI $I$ not only for large $N$ but also for small $N$. For
high-dimensional inputs, the FI matrix $\mathbf{J}(\mathbf{x})$ (see Eq.
\ref{IF}) or matrix $\mathbf{P}^{-1}(\mathbf{x})$\ (see Eq. \ref{Px}) often
becomes degenerate, which causes a large error between $I_{F}$ and MI $I$.
Hence, in this situation, $I_{G}$ is a better approximation to MI $I$ than
$I_{F}$. For more convenient computation of the approximation, we have also
introduced the approximation formula $I_{G_{+}}$ which may substitute for
$I_{G}$ as a proxy of MI $I$. For some special cases (see \textbf{Corollary
\ref{Corollary 2}}), $I_{G}$ and $I_{G_{+}}$\ are strictly equal to the true
MI $I$. Our simulation results for the one-dimensional case shows that the
approximations $I_{G}$, $I_{G_{+}}$, and $I_{F}$ are all highly precise
compared with the true MI $I$, even for small $N$ (Figure 1).

These approximation formulas satisfy additional constraints. By the
Cram\'{e}r-Rao lower bound, we know that $I_{F}$\ is related to the covariance
matrix of an unbiased estimator (see Eq. \ref{Fisher}). By the van Trees'
Bayesian Cram\'{e}r-Rao bound, we get a link between $I_{G_{+}}$ and the
covariance matrix of a biased estimator (see Eq. \ref{vt1}). From the point of
view of neural population decoding and Bayesian inference, there is a
connection between MI (or $I_{G}$)\ and MAP (see Eq.~\ref{MAP.6}).

For more efficient calculation of the approximation $I_{G}$ (or $I_{G_{+}}$)
for high-dimensional inputs, we propose to apply an invertible transformation
on the input variable so as to make the new variable closer to a normal
distribution (see section \ref{Sec:3.1}). Another useful technique is
dimensionality reduction which effectively approximates MI by further reducing
the computational complexity for high-dimensional inputs. We found that
$I_{F}$ could lead to huge errors as a proxy of the true MI $I$ for
high-dimensional inputs even when $I_{G}$ and $I_{G_{+}}$\ are strictly equal
to the true MI $I$.

These approximation formulas are potentially useful for optimization problems
of information transfer in neural population coding. We have proven that
optimizing the population density distribution of parameters
$p(\boldsymbol{\theta})$ is a convex optimization problem and have found a set
of necessary and sufficient conditions. The approximation formulas are also
useful for discussion of the channel capacity of neural population coding
(section \ref{Sec:4.1.3}).

The information theory is a powerful tool for neuroscience and other
disciplines, including diverse fields such as physics, information and
communication technology, machine learning, computer vision, and
bioinformatics. Finding effective approximation methods for computing MI is a
key for many practical applications of information theory. 
Generally speaking,
the FI matrix is  easier to evaluate or approximate than MI. 
This is because calculation of MI involves averaging over both the input variable $\bf x$
and the output variable $\bf r$ (see Eq.~2.1), and typically $p(\mathbf{r})$ also needs to be calculated
from $p(\mathbf{r}|\mathbf{x})$ by another average over $\bf x$ (see Eq.~2.2).
By contrast, the FI matrix $\mathbf{J}(\mathbf{x})$ involves averaging over $\bf r$ only (see Eq. \ref{Jx}). 
Furthermore, it is often easier to find analytical forms of FI for specific models such as a population of tuning curves with Poisson spike statistics.
Taking into account the
computational efficiency, for practical applications we suggest using $I_{G}$
or $I_{G_{+}}$ as a proxy of the true MI $I$ for most cases. These
approximations could be very useful even when we do not need to know the exact
value of MI. For example, for some optimization and learning problems, we only
need to know how MI is affected by the conditional p.d.f.\ or likelihood
function $p(\mathbf{r}|\mathbf{x})$. In such situations, we may easily solve
for the optimal parameters using the approximation formulas
\citep[]{Huang(2017-IC-information), Huang(2017-IC-information)a}. Further
discussions of the applications will be given in separate publications.

\section*{Acknowledgments}

\phantomsection\addcontentsline{toc}{section}{Acknowledgments}

This work was supported partially by an NIH grant R01 DC013698.

\noindent

\appendix
\renewcommand{\theequation}{A.\arabic{equation}} \setcounter{equation}{0} \renewcommand\thesubsection{A.\arabic{subsection}}

\section*{Appendix: The Proofs}

\phantomsection\addcontentsline{toc}{section}{Appendix: The Proofs}
\label{Appendix} We consider a Taylor expanding of $L(\mathbf{r}%
|{\mathbf{\hat{x}}})$ around $\mathbf{x}$. If $L(\mathbf{r}|{\mathbf{\hat{x}}%
})$ is twice differentiable for $\forall{\mathbf{\hat{x}}}\in{{\mathcal{X}}%
}_{\omega}(\mathbf{x})$, then by condition \textbf{C1} we get
\begin{align}
&  {L(\mathbf{r}|\mathbf{\hat{x}})-L(\mathbf{r}|\mathbf{x})}\nonumber\\
&  =\left(  {\mathbf{\hat{x}}}-{\mathbf{x}}\right)  ^{T}{L}^{\prime
}{(\mathbf{r}|\mathbf{x})}+{\dfrac{1}{2}\left(  {\mathbf{\hat{x}}}%
-{\mathbf{x}}\right)  ^{T}L^{\prime\prime}(\mathbf{r}|\mathbf{\breve{x}}%
)}\left(  {\mathbf{\hat{x}}}-{\mathbf{x}}\right) \nonumber\\
&  ={\mathbf{y}^{T}\mathbf{\tilde{v}}-\dfrac{1}{2}\mathbf{y}^{T}%
\mathbf{y}+\dfrac{1}{2}\mathbf{y}^{T}}\mathbf{By}\text{,} \label{A.Lrx}%
\end{align}
where%
\begin{equation}
{\mathbf{y}}=\mathbf{G}{{^{1/2}}\left(  {\mathbf{x}}\right)  (\mathbf{\hat{x}%
}}-{\mathbf{x})}\text{,} \label{A.y}%
\end{equation}%
\begin{equation}
\mathbf{\tilde{v}}=\mathbf{\mathbf{v}}+\mathbf{\mathbf{v}}_{1}\text{,
}\mathbf{v}=\mathbf{G}{^{-1/2}}\left(  {\mathbf{x}}\right)  {l^{\prime
}(\mathbf{r}|\mathbf{x})}\text{, }\mathbf{v}{_{1}}=\mathbf{G}{^{-1/2}}\left(
{\mathbf{x}}\right)  {{q}^{\prime}(\mathbf{x})}\text{,} \label{A.vy}%
\end{equation}%
\begin{equation}
{\mathbf{\breve{x}}}={\mathbf{x}}+t\left(  {\mathbf{\hat{x}}}-{\mathbf{x}%
}\right)  \in{{\mathcal{X}}}_{\omega}(\mathbf{x})\text{,\ }t\in\left(
0\text{,\thinspace}1\right)  \text{,} \label{A.x.t}%
\end{equation}%
\begin{equation}
\left\{
\begin{array}
[c]{l}%
\mathbf{B}=\mathbf{G}{^{-1/2}}\left(  {\mathbf{x}}\right)  {\mathbf{C}%
\mathbf{G}^{-1/2}(\mathbf{x})}=\mathbf{B}{_{0}}+\mathbf{B}{_{1}}%
+\mathbf{B}{_{2}}\text{,}\\
\mathbf{C}=\mathbf{C}{_{0}}+\mathbf{C}{_{1}}+\mathbf{C}{_{2}}\text{,}%
\end{array}
\right.  \label{A.BC}%
\end{equation}
and
\begin{equation}
\left\{
\begin{array}
[c]{l}%
\mathbf{B}{_{0}=\mathbf{G}^{-1/2}}\left(  {\mathbf{x}}\right)  \mathbf{C}%
{_{0}\mathbf{G}^{-1/2}}\left(  {\mathbf{x}}\right)  \text{,}\\
\mathbf{B}{_{1}=\mathbf{G}^{-1/2}}\left(  {\mathbf{x}}\right)  \mathbf{C}%
{_{1}\mathbf{G}^{-1/2}}\left(  {\mathbf{x}}\right)  \text{,}\\
\mathbf{B}{_{2}=\mathbf{G}^{-1/2}}\left(  {\mathbf{x}}\right)  \mathbf{C}%
{_{2}\mathbf{G}^{-1/2}}\left(  {\mathbf{x}}\right)  \text{,}\\
\mathbf{C}{_{0}=l^{\prime\prime}(\mathbf{r}|\mathbf{x})}-\left\langle
{l^{\prime\prime}(\mathbf{r}|\mathbf{x})}\right\rangle _{{\mathbf{r}%
|\mathbf{x}}}\text{,}\\
\mathbf{C}{_{1}=l^{\prime\prime}(\mathbf{r}|\mathbf{\breve{x}})-l^{\prime
\prime}(\mathbf{r}|\mathbf{x})}\text{,}\\
\mathbf{C}{_{2}= {q}^{\prime\prime}\left(  {\mathbf{\breve{x}}}\right)  }-
{q}^{\prime\prime}\left(  \mathbf{x}\right)  \text{.}%
\end{array}
\right.  \label{A.BC.1}%
\end{equation}

By condition \textbf{C1}, we know that the matrix $\mathbf{B}{_{1}}%
+\mathbf{B}{_{2}}$ is continuous and symmetric for ${\mathbf{\breve{x}}}%
\in{{\mathcal{X}}}_{\omega}$ and ${\left\Vert \mathbf{B}{_{1}}+\mathbf{B}%
{_{2}}\right\Vert }=O\left(  1\right)  $. By the definition of continuous
functions, we can prove the following: for any $\epsilon\in\left(
0\text{,\thinspace}1\right)  $, there is an ${\varepsilon}\in\left(
0\text{,\thinspace}\omega\right)  $\ such that for all ${\mathbf{y}}%
\in{\mathcal{Y}_{\varepsilon}}$%
\begin{equation}
-\epsilon\mathbf{I}_{K}\leq\mathbf{B}{_{1}}+\mathbf{B}{_{2}}\leq
\epsilon\mathbf{I}_{K}\text{,} \label{A.BI}%
\end{equation}
where
\begin{equation}
{\mathcal{Y}_{\varepsilon}}=\left\{  {\mathbf{y}}\in%
\mathbb{R}
^{K}:\left\Vert \mathbf{y}\right\Vert <\varepsilon\sqrt{N}\right\}  \text{.}
\label{A.omega}%
\end{equation}
Hence,
\begin{equation}
\left\vert {\mathbf{y}^{T}}\left(  \mathbf{B}{_{1}}+\mathbf{B}{_{2}}\right)
{\mathbf{y}}\right\vert <\epsilon{\left\Vert {\mathbf{y}}\right\Vert ^{2}%
}\text{\textrm{.}} \label{A.yy}%
\end{equation}
Here ${\mathbf{\breve{x}}}={\mathbf{x}}+t\mathbf{G}{{^{-1/2}}}\left(
{{\mathbf{x}}}\right)  \mathbf{y}$, $\varepsilon$\ is a function of
${\mathbf{r}}$, $\varepsilon=\varepsilon\left(  {\mathbf{r}}\right)  =O\left(
1\right)  $, and
\begin{equation}
{\mathcal{Y}_{\varepsilon}\subseteq\mathcal{Y}_{\omega}}=\left\{  {\mathbf{y}%
}\in%
\mathbb{R}
^{K}:\left\Vert \mathbf{y}\right\Vert <\omega\sqrt{N}\right\}  \text{.}
\label{A.Yo}%
\end{equation}

We define the sets
\begin{equation}
\left\{
\begin{array}
[c]{l}%
{\mathcal{\bar{Y}}_{\varepsilon}=\left\{  \mathbf{y}\in%
\mathbb{R}
^{K}:\left\Vert \mathbf{y}\right\Vert \geq\varepsilon\sqrt{N}\right\}
}\text{,}\\
{\mathcal{Z}_{\hat{\varepsilon}}=\left\{  \mathbf{z}\in%
\mathbb{R}
^{K}:\left\vert z_{k}\right\vert <\hat{\varepsilon}\sqrt{N/K}\text{,\thinspace
}\forall k=1\text{,\thinspace}2\text{,\thinspace}\cdots\text{,\thinspace
}K\right\}  }\text{,}\\
{\bar{\mathcal{Z}}_{\hat{\varepsilon}}=\left\{  \mathbf{z}\in%
\mathbb{R}
^{K}:\left\vert z_{k}\right\vert \geq\hat{\varepsilon}\sqrt{N/K}%
\text{,\thinspace}\forall k=1\text{,\thinspace}2\text{,\thinspace}%
\cdots\text{,\thinspace}K\right\}  }\text{,}\\
{\mathcal{\tilde{Z}}_{\varepsilon}=\left\{  \mathbf{z}\in%
\mathbb{R}
^{K}:\left\Vert \mathbf{z}+\mathbf{\tilde{v}}1_{\mathcal{R}_{\hat{\varepsilon
}}}\right\Vert <\varepsilon\sqrt{N}\right\}  }\text{,}%
\end{array}
\right.  \label{A.YZ}%
\end{equation}
where
\begin{equation}
\hat{\varepsilon}=\varepsilon/2\text{,} \label{A.Eps'v}%
\end{equation}
$1_{\left(  \mathcal{\cdot}\right)  }$ denotes an indicator random variable,
\begin{equation}
1_{\mathcal{R}_{\hat{\varepsilon}}}=\left\{
\begin{array}
[c]{l}%
{1}\text{{\textrm{,\thinspace}}}{\mathrm{\;}\mathbf{r}\in\mathcal{R}%
_{\hat{\varepsilon}}(\mathbf{x})}\\
{0}\text{{\textrm{,\thinspace}}}{\mathrm{\;}\mathbf{r}\notin\mathcal{R}%
_{\hat{\varepsilon}}(\mathbf{x})}%
\end{array}
\right.  \text{\textrm{,\thinspace}}\mathrm{\;} \quad1_{\bar{\mathcal{R}%
}_{\hat{\varepsilon}}}=\left\{
\begin{array}
[c]{l}%
{1}\text{{\textrm{,\thinspace}}}{\mathrm{\;}\mathbf{r}\in\bar{\mathcal{R}%
}_{\hat{\varepsilon}}(\mathbf{x})}\\
{0}\text{{\textrm{{,}\thinspace}}}{\mathrm{\;}\mathbf{r}\notin\bar
{\mathcal{R}}_{\hat{\varepsilon}}(\mathbf{x})}%
\end{array}
\right.  \text{,} \label{A.IR}%
\end{equation}
and
\begin{equation}
\left\{
\begin{array}
[c]{l}%
\mathcal{R}{_{\hat{\varepsilon}}(\mathbf{x})=\left\{  \mathbf{r}\in
\mathcal{R}:\left\Vert \mathbf{\tilde{v}}\right\Vert <\hat{\varepsilon}%
\sqrt{N}\right\}  }\text{,}\\
{\bar{\mathcal{R}}_{\hat{\varepsilon}}(\mathbf{x})=\left\{  \mathbf{r}%
\in\mathcal{R}:\left\Vert \mathbf{\tilde{v}}\right\Vert \geq\hat{\varepsilon
}\sqrt{N}\right\}  }\text{.}%
\end{array}
\right.  \label{A.R}%
\end{equation}
For all $\mathbf{z}\in\mathcal{Z}_{\hat{\varepsilon}}$, we have $\left\Vert
\mathbf{z}+\mathbf{\tilde{v}}1_{\mathcal{R}_{\hat{\varepsilon}}}\right\Vert
_{2}\leq\left\Vert \mathbf{z}\right\Vert _{2}+\left\Vert \mathbf{v}%
1_{\mathcal{R}_{\hat{\varepsilon}}}\right\Vert _{2}<\varepsilon\sqrt{N}$,
then
\begin{equation}
\mathcal{Z}_{\hat{\varepsilon}}\subseteq{\mathcal{\tilde{Z}}_{\varepsilon}%
}\text{.} \label{A.Z<}%
\end{equation}

It follows from (\ref{A.vy}) and (\ref{A.BC.1}) that%
\begin{equation}
\left\langle \mathbf{v}\right\rangle _{\mathbf{r}|\mathbf{x}}=0\text{,
}{\left\langle \mathbf{B}_{0}\right\rangle _{\mathbf{r}|\mathbf{x}}=0}\text{,}
\label{A.vB=}%
\end{equation}
and
\begin{align}
{\left\langle \left\langle \tilde{\mathbf{v}}^{T}\tilde{\mathbf{v}%
}\right\rangle _{\mathbf{r}|\mathbf{x}}\right\rangle _{\mathbf{x}}}  &
={\left\langle \left\langle L^{\prime}(\mathbf{r}|\mathbf{x})^{T}%
\mathbf{G}^{-1}\left(  \mathbf{x}\right)  L^{\prime}(\mathbf{r}|\mathbf{x}%
)\right\rangle _{\mathbf{r}|\mathbf{x}}\right\rangle _{\mathbf{x}}}\nonumber\\
&  =\left\langle {\mathrm{Tr}}\left(  \left\langle L^{\prime}(\mathbf{r}%
|\mathbf{x})L^{\prime}(\mathbf{r}|\mathbf{x})^{T}\right\rangle _{\mathbf{r}%
|\mathbf{x}}\mathbf{G}^{-1}\left(  \mathbf{x}\right)  \right)  \right\rangle
_{\mathbf{x}}\nonumber\\
&  =K+\zeta\nonumber\\
&  =K+O\left(  N^{-1}\right)  \text{,} \label{A.vv=}%
\end{align}
and it follows from condition \textbf{C1} that%
\begin{align}
\zeta &  =\left\langle {\mathrm{Tr}}\left(  \dfrac{1}{p(\mathbf{x})}%
\dfrac{\partial^{2}p(\mathbf{x})}{\partial\mathbf{x}\partial\mathbf{x}^{T}%
}\mathbf{G}^{-1}\left(  \mathbf{x}\right)  \right)  \right\rangle
_{\mathbf{x}}\nonumber\\
&  =\left\langle {\mathrm{Tr}}\left(  \left(  {{q}^{\prime}}\left(
{\mathbf{x}}\right)  ^{T}{{q}^{\prime}}\left(  {\mathbf{x}}\right)
+{{q}^{\prime\prime}}\left(  {\mathbf{x}}\right)  \right)  \mathbf{G}%
^{-1}\left(  \mathbf{x}\right)  \right)  \right\rangle _{\mathbf{x}%
}\nonumber\\
&  \leq\left\langle N^{-1}\left(  \left\Vert {{q}^{\prime}}\left(
{\mathbf{x}}\right)  ^{T}{{q}^{\prime}}\left(  {\mathbf{x}}\right)
\right\Vert +\left\Vert {{q}^{\prime\prime}}\left(  {\mathbf{x}}\right)
\right\Vert \right)  \left\Vert N\mathbf{G}^{-1}\left(  \mathbf{x}\right)
\right\Vert \right\rangle _{\mathbf{x}}\nonumber\\
&  =O\left(  N^{-1}\right)  \text{.} \label{A.Xi}%
\end{align}
Combining conditions \textbf{C1} and \textbf{C2}, (\ref{A.vy}), (\ref{A.x.t})
and (\ref{A.BC.1}), we find
\begin{equation}
\left\{
\begin{array}
[c]{l}%
{\left\langle \left\Vert \mathbf{B}_{0}\right\Vert ^{2m}\right\rangle
_{\mathbf{r}|\mathbf{x}}}\leq\left\langle {\left\langle \left\Vert
N^{-1}\mathbf{C}_{0}\right\Vert ^{2m}\left\Vert N\mathbf{G}^{-1}\left(
\mathbf{x}\right)  \right\Vert ^{2m}\right\rangle _{\mathbf{r}|\mathbf{x}}%
}\right\rangle _{\mathbf{x}}=O\left(  N^{-1}\right)  \text{,}\\
{\left\langle \left\Vert \mathbf{B}_{0}\right\Vert ^{2m+1}\right\rangle
_{\mathbf{r}|\mathbf{x}}}\leq\left\langle \left\Vert N\mathbf{G}^{-1}\left(
\mathbf{x}\right)  \right\Vert ^{2m+1}{\left\langle \left\Vert N^{-1}%
\mathbf{C}_{0}\right\Vert ^{2}\right\rangle _{\mathbf{r}|\mathbf{x}}%
^{1/2}\left\langle \left\Vert N^{-1}\mathbf{C}_{0}\right\Vert ^{4m}%
\right\rangle _{\mathbf{r}|\mathbf{x}}^{1/2}}\right\rangle _{\mathbf{x}}\\
\ \ \ \ \ \ \ \ \ \ \ \ \ \ \ \ \ \ \ \ \ \ \ \ \ \ \ \ =O\left(
N^{-1}\right)  \text{,}\\
{\left\langle \left\Vert \mathbf{v}\right\Vert ^{2m_{0}}\right\rangle
_{\mathbf{r}|\mathbf{x}}}\leq{\left\langle \left\vert N^{-1}l^{\prime
}(\mathbf{r}|\mathbf{x})^{T}l^{\prime}(\mathbf{r}|\mathbf{x})\right\vert
^{m_{0}}\right\rangle _{\mathbf{r}|\mathbf{x}}}\left\Vert N\mathbf{G}%
^{-1}\left(  \mathbf{x}\right)  \right\Vert ^{m_{0}}=O\left(  1\right)
\text{,}\\
{\left\Vert \mathbf{v}_{1}\right\Vert ^{2m_{0}}}\leq{\left\vert N^{-1}%
{q}{^{\prime}(\mathbf{x})}^{T}{q}{^{\prime}(\mathbf{x})}\right\vert ^{m_{0}%
}\left\Vert N\mathbf{G}^{-1}\left(  \mathbf{x}\right)  \right\Vert ^{m_{0}}%
}=O\left(  N^{-m_{0}}\right)  \text{,}%
\end{array}
\right.  \label{A.vv1<}%
\end{equation}
together with the power mean inequality,%
\begin{align}
{\left\langle \left(  {\tilde{\mathbf{v}}^{T}\tilde{\mathbf{v}}}\right)
^{m_{0}}\right\rangle _{\mathbf{r}|\mathbf{x}}}  &  \leq{\left\langle \left(
\left\Vert {\mathbf{v}}\right\Vert +\left\Vert {\mathbf{v}}_{1}\right\Vert
\right)  ^{2m_{0}}\right\rangle _{\mathbf{r}|\mathbf{x}}}\nonumber\\
&  \leq2^{2m_{0}-1}{\left\langle \left\Vert {\mathbf{v}}\right\Vert ^{2m_{0}%
}+\left\Vert {\mathbf{v}}_{1}\right\Vert ^{2m_{0}}\right\rangle _{\mathbf{r}%
|\mathbf{x}}}\nonumber\\
&  =O\left(  1\right)  \text{,} \label{A.vvm<}%
\end{align}
where $m\in\mathfrak{%
\mathbb{N}
}$, $m_{0}\in\left\{  1\text{,\thinspace}2\right\}  $. Notice that $\left\Vert
\mathbf{G}^{-1}\left(  \mathbf{x}\right)  \right\Vert =O\left(  N^{-1}\right)
$. Here we note that for all conformable matrices $\mathbf{A}$ and
$\mathbf{B}$,
\begin{equation}
\left\{
\begin{array}
[c]{l}%
{\left\vert \mathrm{Tr}\left(  \mathbf{AB}\right)  \right\vert \leq\left\Vert
\mathbf{A}\right\Vert \left\Vert \mathbf{B}\right\Vert }\text{,}\\
{\left\Vert \mathbf{AB}\right\Vert \leq\left\Vert \mathbf{A}\right\Vert
\left\Vert \mathbf{B}\right\Vert }\text{.}%
\end{array}
\right.  \label{A.AB<}%
\end{equation}

By (\ref{C1.b1}) we have
\begin{align}
\mathrm{Tr}\left(  N^{-1}\mathbf{J}\left(  \mathbf{x}\right)  \right)  ^{2}
&  ={\left\langle N^{-1}l^{\prime}(\mathbf{r}|\mathbf{x})^{T}l^{\prime
}(\mathbf{r}|\mathbf{x})\right\rangle _{\mathbf{r}|\mathbf{x}}^{2}}\nonumber\\
&  {\leq}{\left\langle \left(  N^{-1}l^{\prime}(\mathbf{r}|\mathbf{x}%
)^{T}l^{\prime}(\mathbf{r}|\mathbf{x})\right)  ^{2}\right\rangle
_{\mathbf{r}|\mathbf{x}}}=O\left(  1\right)  \text{.} \label{A.TrJ}%
\end{align}
Then it follows from (\ref{C1.a2}) and (\ref{A.TrJ}) that%
\begin{equation}
\det\left(  \mathbf{G}\left(  \mathbf{x}\right)  \right)  =O\left(
N^{K}\right)  \text{.} \label{A.detG}%
\end{equation}

\subsection{Proof of Lemma \ref{Lemma 1}}

\label{A.Lma1P}It follows from (\ref{A.Lrx}) that
\begin{align}
{\Gamma_{\omega}}  &  ={\left\langle \left\langle \ln\int_{{{\mathcal{X}}%
}_{\omega}(\mathbf{x})}\exp\left(  L(\mathbf{r}|\mathbf{\hat{x}}%
)-L(\mathbf{r}|\mathbf{x})\right)  d\mathbf{\hat{x}}\right\rangle
_{\mathbf{r}|\mathbf{x}}\right\rangle _{\mathbf{x}}}\nonumber\\
&  =-\left\langle \dfrac{1}{2}\ln\left(  \det\left(  \mathbf{G}(\mathbf{x}%
)\right)  \right)  \right\rangle _{\mathbf{x}}\nonumber\\
&  +{\underset{{\hat{\Gamma}}_{\omega}}{\underbrace{\left\langle \left\langle
\ln\left(  \int_{{\mathcal{Y}}_{\omega}}\exp\left(  \mathbf{y}^{T}%
\tilde{\mathbf{v}}-\dfrac{1}{2}\mathbf{y}^{T}\mathbf{y}+\dfrac{1}{2}%
\mathbf{y}^{T}\mathbf{By}\right)  d\mathbf{y}\right)  \right\rangle
_{\mathbf{r}|\mathbf{x}}\right\rangle _{\mathbf{x}}}}}\text{.}
\label{A.GammaEps}%
\end{align}
For $\mathbf{y}\in{\mathcal{Y}}_{\varepsilon}$, according to the definitions
in (\ref{A.IR}) and (\ref{A.R}), we have
\begin{align}
{\left\vert \mathbf{y}^{T}{\tilde{\mathbf{v}}}1_{\bar{\mathcal{R}}%
_{\hat{\varepsilon}}}\right\vert }  &  \leq{\left\Vert \mathbf{y}\right\Vert
\left\Vert {\tilde{\mathbf{v}}}1_{\bar{\mathcal{R}}_{\hat{\varepsilon}}%
}\right\Vert }\nonumber\\
&  \leq\left(  N\varepsilon^{2}\right)  ^{1/2}{\left\Vert {\tilde{\mathbf{v}}%
}1_{\bar{\mathcal{R}}_{\hat{\varepsilon}}}\right\Vert }\nonumber\\
&  \leq2{\tilde{\mathbf{v}}}^{T}{\tilde{\mathbf{v}}}1_{\bar{\mathcal{R}}%
_{\hat{\varepsilon}}}\text{.} \label{A.Lma1P.1}%
\end{align}
Then by condition \textbf{C1}, we get%
\begin{align}
{\left\langle {\tilde{\mathbf{v}}}^{T}{\tilde{\mathbf{v}}1_{\bar{\mathcal{R}%
}_{\hat{\varepsilon}}}}\right\rangle _{\mathbf{r}|\mathbf{x}}}  &
\leq{\left\langle \frac{\left\Vert {\tilde{\mathbf{v}}}\right\Vert ^{4}%
}{\left(  \hat{\varepsilon}\sqrt{N}\right)  ^{2}}\right\rangle _{\mathbf{r}%
|\mathbf{x}}}\nonumber\\
&  \leq N^{-1}{\left(  \hat{\varepsilon}_{0}\right)  ^{-2}\left\langle
\left\Vert {\tilde{\mathbf{v}}}\right\Vert ^{4}\right\rangle _{\mathbf{r}%
|\mathbf{x}}}=O\left(  N^{-1}\right)  \text{,} \label{A.Lma1P.vv<}%
\end{align}
where $\hat{\varepsilon}_{0}$ is a positive constant and $\hat{\varepsilon
}_{0}\in\left[  \min\hat{\varepsilon}\left(  \mathbf{r}\right)
\text{,\thinspace}\max\hat{\varepsilon}\left(  \mathbf{r}\right)  \right]  $.
By (\ref{A.yy}), (\ref{A.vv=}) and (\ref{A.GammaEps}), we get%
\begin{align}
{\hat{\Gamma}}_{\omega}  &  \geq\left\langle \left\langle \ln\left(
\int_{{\mathcal{Y}}_{\varepsilon}}\exp\left(  \mathbf{y}^{T}\mathbf{\tilde{v}%
}-\dfrac{1}{2}\left(  1+\epsilon\right)  \mathbf{y}^{T}\mathbf{y}+\dfrac{1}%
{2}\mathbf{y}^{T}\mathbf{B}_{0}\mathbf{y}\right)  d\mathbf{y}\right)
\right\rangle _{\mathbf{r}|\mathbf{x}}\right\rangle _{\mathbf{x}}\nonumber\\
&  \geq\left\langle \left\langle \ln\left(  \int_{{\mathcal{Z}_{\hat
{\varepsilon}}}}\exp\left(  {\dfrac{1}{2}}\left(  \mathbf{z}+\frac
{{\tilde{\mathbf{v}}1_{\mathcal{R}_{\hat{\varepsilon}}}}}{1+\epsilon}\right)
^{T}\mathbf{B}_{0}\left(  \mathbf{z}+\frac{{\tilde{\mathbf{v}}1_{\mathcal{R}%
_{\hat{\varepsilon}}}}}{1+\epsilon}\right)  \right)  \phi_{\hat{\varepsilon}%
}{(\mathbf{z})}d\mathbf{z}\right)  \right\rangle _{\mathbf{r}|\mathbf{x}%
}\right\rangle _{\mathbf{x}}\nonumber\\
&  +\left\langle \left\langle \ln\left(  \Psi_{\hat{\varepsilon}}\right)
+{\dfrac{{\tilde{\mathbf{v}}^{T}\tilde{\mathbf{v}}}}{2\left(  1+\epsilon
\right)  ^{2}}-{\dfrac{5{\tilde{\mathbf{v}}^{T}\tilde{\mathbf{v}}%
1_{\bar{\mathcal{R}}_{\hat{\varepsilon}}}}}{2\left(  1+\epsilon\right)  ^{2}}%
}}\right\rangle _{\mathbf{r}|\mathbf{x}}\right\rangle _{\mathbf{x}}\nonumber\\
&  \geq{\dfrac{1}{2}}\left\langle \left\langle \left(  \int_{{\mathcal{Z}%
_{\hat{\varepsilon}}}}\left(  \mathbf{z}+\frac{{\tilde{\mathbf{v}%
}1_{\mathcal{R}_{\hat{\varepsilon}}}}}{1+\epsilon}\right)  ^{T}\mathbf{B}%
_{0}\left(  \mathbf{z}+\frac{{\tilde{\mathbf{v}}1_{\mathcal{R}_{\hat
{\varepsilon}}}}}{1+\epsilon}\right)  \phi_{\hat{\varepsilon}}{(\mathbf{z}%
)}d\mathbf{z}\right)  \right\rangle _{\mathbf{r}|\mathbf{x}}\right\rangle
_{\mathbf{x}}\nonumber\\
&  +\left\langle \left\langle \ln\left(  \Psi_{\hat{\varepsilon}}\right)
\right\rangle _{\mathbf{r}|\mathbf{x}}\right\rangle _{\mathbf{x}}%
+{\dfrac{K+\zeta}{2\left(  1+\epsilon\right)  ^{2}}}+O\left(  N^{-1}\right)
\text{,} \label{A.Lma1P.GamEps'}%
\end{align}
where $\mathbf{z}={\mathbf{y}-\tilde{\mathbf{v}}}${$1_{\mathcal{R}%
_{\hat{\varepsilon}}(\mathbf{x})}$, }the last step in (\ref{A.Lma1P.GamEps'})
follows from Jensen's inequality, and
\begin{equation}
\left\{
\begin{array}
[c]{l}%
\phi_{\hat{\varepsilon}}{(\mathbf{z})=\Psi}_{\hat{\varepsilon}}^{-1}%
{\exp\left(  -\dfrac{1+\epsilon}{2}\mathbf{z}^{T}\mathbf{z}\right)  }%
\text{,}\\
\Psi_{\hat{\varepsilon}}={\int_{{\mathcal{Z}_{\hat{\varepsilon}}}}\exp\left(
-\dfrac{1+\epsilon}{2}\mathbf{z}^{T}\mathbf{z}\right)  d}\mathbf{z}\text{.}%
\end{array}
\right.  \label{A.Lma1P.Def}%
\end{equation}
Integrating by parts yields%
\begin{equation}
{\left\langle {1_{{\mathcal{\bar{Z}}_{\hat{\varepsilon}}}}}\right\rangle
_{\mathbf{z}}}={\int_{\bar{\mathcal{Z}}_{\hat{\varepsilon}}}}\left(
\frac{1+\epsilon}{{2\pi}}\right)  ^{K/2}{\exp\left(  -\dfrac{1+\epsilon}%
{2}\mathbf{z}^{T}\mathbf{z}\right)  d\mathbf{z}}={O\left(  N^{-K/2}%
e^{-N\delta}\right)  } \label{A.Lma1P.GauTail}%
\end{equation}
and
\begin{equation}
\left(  \frac{{2\pi}}{1+\epsilon}\right)  ^{K/2}\geq\Psi_{\hat{\varepsilon}%
}\geq\left(  \frac{{2\pi}}{1+\epsilon}\right)  ^{K/2}{\left(  1-{O\left(
N^{-K/2}e^{-N\delta}\right)  }\right)  } \label{A.Lma1P.GauTail.1}%
\end{equation}
for some $\delta>0$.

Then from (\ref{A.Lma1P.GamEps'}), we get
\begin{align}
&  \left\langle \left\langle \left(  \int_{{\mathcal{Z}_{\hat{\varepsilon}}}%
}\left(  \mathbf{z}+\frac{{\tilde{\mathbf{v}}1_{\mathcal{R}_{\hat{\varepsilon
}}}}}{1+\epsilon}\right)  ^{T}\mathbf{B}_{0}\left(  \mathbf{z}+\frac
{{\tilde{\mathbf{v}}1_{\mathcal{R}_{\hat{\varepsilon}}}}}{1+\epsilon}\right)
\phi_{\hat{\varepsilon}}{(\mathbf{z})}d\mathbf{z}\right)  \right\rangle
_{\mathbf{r}|\mathbf{x}}\right\rangle _{\mathbf{x}}\nonumber\\
&  ={\left(  \dfrac{{2\pi}}{1+\epsilon}\right)  }^{K/2}\Psi_{\hat{\varepsilon
}}^{-1}\left\langle \left\langle \left\langle \mathbf{z}{^{T}}\mathbf{B}%
_{0}\mathbf{z}{1_{{\mathcal{Z}_{\hat{\varepsilon}}}}}\right\rangle
_{\mathbf{z}}+\frac{{\tilde{\mathbf{v}}^{T}}\mathbf{B}_{0}^{2}{\tilde
{\mathbf{v}}1_{{\mathcal{Z}_{\hat{\varepsilon}}}}1_{\mathcal{R}_{\hat
{\varepsilon}}}}}{\left(  1+\epsilon\right)  ^{2}}\right\rangle _{\mathbf{r}%
|\mathbf{x}}\right\rangle _{\mathbf{x}}\nonumber\\
&  \geq{\left(  \dfrac{{2\pi}}{1+\epsilon}\right)  }^{K/2}\Psi_{\hat
{\varepsilon}}^{-1}\left\langle \left\langle \left\langle \mathbf{z}{^{T}%
}\mathbf{B}_{0}\mathbf{z}{1_{{\mathcal{Z}_{\hat{\varepsilon}}}}}\right\rangle
_{\mathbf{z}}\right\rangle _{\mathbf{r}|\mathbf{x}}\right\rangle _{\mathbf{x}%
}\geq O\left(  N^{-1}\right)  \text{,} \label{A.Lma1P.2}%
\end{align}
where%
\begin{equation}
\left\{
\begin{array}
[c]{l}%
{\left\langle \cdot\right\rangle _{\mathbf{z}}=\int_{%
\mathbb{R}
^{K}}}\left(  {\cdot}\right)  {\phi}_{0}\left(  \mathbf{z}\right)
{d}\mathbf{z}\text{,}\\
{\phi}_{0}{(\mathbf{z})=\left(  \dfrac{1+\epsilon}{{2\pi}}\right)  }%
^{K/2}{\exp\left(  -\dfrac{1+\epsilon}{2}\mathbf{z}^{T}\mathbf{z}\right)
}\text{.}%
\end{array}
\right.  \label{A.LmaP.NormalDef}%
\end{equation}
Here notice that%
\begin{equation}
{\left(  \dfrac{{2\pi}}{1+\epsilon}\right)  }^{K/2}\Psi_{\hat{\varepsilon}%
}^{-1}=1+{O\left(  N^{-K/2}e^{-N\alpha}\right)  } \label{A.Lma1P.3}%
\end{equation}
and%
\begin{align}
\left\langle \left\langle \left\langle \mathbf{z}{^{T}}\mathbf{B}%
_{0}\mathbf{z}{1_{{\mathcal{Z}_{\hat{\varepsilon}}}}}\right\rangle
_{\mathbf{z}}\right\rangle _{\mathbf{r}|\mathbf{x}}\right\rangle
_{\mathbf{x}}  &  =-\left\langle \left\langle \left\langle \mathbf{z}{^{T}%
}\mathbf{B}_{0}\mathbf{z}{1_{{\mathcal{\bar{Z}}_{\hat{\varepsilon}}}}%
}\right\rangle _{\mathbf{z}}\right\rangle _{\mathbf{r}|\mathbf{x}%
}\right\rangle _{\mathbf{x}}\nonumber\\
&  \geq-\left\langle \left\langle {\left\Vert \mathbf{B}_{0}\right\Vert ^{2}%
}\right\rangle _{\mathbf{r}|\mathbf{x}}^{1/2}\left\langle \left\langle
\left\Vert \mathbf{z}\right\Vert ^{4}{1_{{\mathcal{\bar{Z}}_{\hat{\varepsilon
}}}}}\right\rangle _{\mathbf{z}}\right\rangle _{\mathbf{r}|\mathbf{x}}%
^{1/2}\right\rangle _{\mathbf{x}}\nonumber\\
&  =O\left(  N^{-1}\right)  \text{.} \label{A.Lma1P.3a}%
\end{align}
Hence, from the consideration above, we find%
\begin{equation}
{{\hat{\Gamma}}_{\omega}}\geq\frac{K}{2}\ln\left(  \frac{{2\pi}}{1+\epsilon
}\right)  +{\dfrac{K}{2\left(  1+\epsilon\right)  ^{2}}}+O\left(
N^{-1}\right)  \text{.} \label{A.Lma1P.GamEps'>}%
\end{equation}

Since $\epsilon$ is arbitrary, let it go to zero. Thus, combining
(\ref{A.GammaEps}) and (\ref{A.Lma1P.GamEps'>}) yields
\begin{equation}
{\Gamma_{\omega}}=-\left\langle \dfrac{1}{2}\ln\left(  \det\left(
\frac{\mathbf{G}(\mathbf{x})}{{2\pi e}}\right)  \right)  \right\rangle
_{\mathbf{x}}+O\left(  N^{-1}\right)  \text{.} \label{A.Lma1P.GamEps=O}%
\end{equation}

Considering
\begin{equation}
{\left\langle \left\langle \ln\dfrac{p(\mathbf{r})}{p\left(  \mathbf{r}%
|\mathbf{x}\right)  p(\mathbf{x})}\right\rangle _{\mathbf{r}|\mathbf{x}%
}\right\rangle _{\mathbf{x}}}\geq\Gamma_{\omega}\text{,} \label{A.Lma1P.Last}%
\end{equation}
and combining (\ref{MI1}) and (\ref{A.Lma1P.GamEps=O}), we immediately get Eq.
(\ref{Lma1}).

\label{A.Lma1Pa}On the other hand, by conditions (\ref{Lma1.1a}) and
(\ref{Lma1.1b}), we have%
\begin{equation}
\left\{
\begin{array}
[c]{l}%
{\left\langle {\tilde{\mathbf{v}}}^{T}{\tilde{\mathbf{v}}1_{\bar{\mathcal{R}%
}_{\hat{\varepsilon}}}}\right\rangle _{\mathbf{r}|\mathbf{x}}}\leq
{\left\langle \dfrac{\left\Vert {\tilde{\mathbf{v}}}\right\Vert _{2}^{2+2\tau
}}{\left(  \hat{\varepsilon}\sqrt{N}\right)  ^{2\tau}}\right\rangle
_{\mathbf{r}|\mathbf{x}}}\leq N^{-\tau}\left(  \hat{\varepsilon}_{0}\right)
^{-2\tau}{\left\langle \left\Vert {\tilde{\mathbf{v}}}\right\Vert ^{2+2\tau
}\right\rangle _{\mathbf{r}|\mathbf{x}}}=o\left(  1\right) \\
\left\langle \left\langle \left\langle \mathbf{z}{^{T}}\mathbf{B}%
_{0}\mathbf{z}{1_{{\mathcal{Z}_{\hat{\varepsilon}}}}}\right\rangle
_{\mathbf{z}}\right\rangle _{\mathbf{r}|\mathbf{x}}\right\rangle _{\mathbf{x}%
}\geq-\left\langle \left\langle {\left\Vert \mathbf{B}_{0}\right\Vert ^{2}%
}\right\rangle _{\mathbf{r}|\mathbf{x}}^{1/2}\left\langle \left\langle
\left\Vert \mathbf{z}\right\Vert ^{4}{1_{{\mathcal{\bar{Z}}_{\hat{\varepsilon
}}}}}\right\rangle _{\mathbf{z}}\right\rangle _{\mathbf{r}|\mathbf{x}}%
^{1/2}\right\rangle _{\mathbf{x}}=o\left(  1\right)
\end{array}
\right.  \text{.} \label{A.Lma1Pa.1}%
\end{equation}
Similarly we can get (\ref{Lma1.2}). This completes the proof of \textbf{Lemma
\ref{Lemma 1}}.%
\qed

\subsection{Proof of Lemma \ref{Lemma 2}}

\label{A.Lma2P} Define the sets%
\begin{equation}
\Omega_{\epsilon}(\mathbf{x})=\left\{  \mathbf{r}\in{\mathcal{R}}:{\mathbf{y}%
}^{T}\mathbf{B}_{0}{\mathbf{y}}<\epsilon\left\Vert {\mathbf{y}}\right\Vert
^{2}\text{, }\forall{\mathbf{y}}\in%
\mathbb{R}
^{K}\right\}  \label{A.Lma2P.1A}%
\end{equation}
and
\begin{equation}
\Theta_{\epsilon}(\mathbf{x})=\left\{  \mathbf{r}\in{\mathcal{R}}%
:\int_{{{\mathcal{\bar{X}}}}_{\varepsilon}(\mathbf{x})}\dfrac{p(\mathbf{r}%
|\mathbf{\hat{x}})p(\mathbf{\hat{x}})}{p(\mathbf{r}|\mathbf{x})p(\mathbf{x}%
)}dx^{\prime}<\epsilon\det\left(  \mathbf{G}(\mathbf{x})\right)
^{-1/2}\right\}  \text{,} \label{A.Lma2P.1B}%
\end{equation}
where ${{\mathcal{\bar{X}}}}_{\varepsilon}(\mathbf{x})={{\mathcal{X}}%
}-{{\mathcal{X}}}_{\varepsilon}(\mathbf{x})$, assuming $\epsilon\in\left(
0\text{,\thinspace}1/2\right)  $ and $p(\mathbf{x})>0$.

Then by Markov's inequality, we have
\begin{equation}
\left\langle 1_{\bar{\Omega}_{\epsilon}}\right\rangle _{\mathbf{r}|\mathbf{x}%
}\leq\mathbb{P}_{\mathbf{r}|\mathbf{x}}\left\{  \left\Vert \mathbf{B}%
_{0}\right\Vert ^{2}\geq\epsilon^{2}\right\}  \leq\epsilon^{-2}\left\langle
\left\Vert \mathbf{B}_{0}\right\Vert ^{2}\right\rangle _{\mathbf{r}%
|\mathbf{x}}=O\left(  N^{-1}\right)  \text{,} \label{A.Lma2P.1C}%
\end{equation}
and by (\ref{C2.b}),%
\begin{align}
\left\langle 1_{\bar{\Theta}_{\epsilon}}\right\rangle _{\mathbf{r}%
|\mathbf{x}}  &  =\mathbb{P}_{\mathbf{r}|\mathbf{x}}\left\{  \int%
_{{{\mathcal{\bar{X}}}}_{\varepsilon}(\mathbf{x})}\dfrac{p(\mathbf{r}%
|\mathbf{\hat{x}})p(\mathbf{\hat{x}})}{p(\mathbf{r}|\mathbf{x})p(\mathbf{x}%
)}d\mathbf{\hat{x}}\geq\epsilon\det\left(  \mathbf{G}(\mathbf{x})\right)
^{-1/2}\right\} \nonumber\\
&  =\mathbb{P}_{\mathbf{r}|\mathbf{x}}\left\{  {\det}\left(  \mathbf{G}\left(
\mathbf{x}\right)  \right)  ^{1/2}\int_{{{\mathcal{\bar{X}}}}_{\hat{\omega}%
}(\mathbf{x})}p(\mathbf{\hat{x}}|\mathbf{r})d\hat{\mathbf{x}}>\epsilon
p(\mathbf{x}|\mathbf{r})\right\} \nonumber\\
&  =O\left(  N^{-\eta}\right)  \text{.} \label{A.Lma2P.1D}%
\end{align}

Consider the following equality,%
\begin{equation}
\left\langle {\ln\dfrac{p(\mathbf{r})}{p\left(  \mathbf{r}|\mathbf{x}\right)
p(\mathbf{x})}}\right\rangle _{\mathbf{r}|\mathbf{x}}=\left\langle
1_{\Theta_{\epsilon}}{\ln\dfrac{p(\mathbf{r})}{p\left(  \mathbf{r}%
|\mathbf{x}\right)  p(\mathbf{x})}}\right\rangle _{\mathbf{r}|\mathbf{x}%
}+\left\langle 1_{\bar{\Theta}_{\epsilon}}{\ln\dfrac{p(\mathbf{r})}{p\left(
\mathbf{r}|\mathbf{x}\right)  p(\mathbf{x})}}\right\rangle _{\mathbf{r}%
|\mathbf{x}}\text{.} \label{A.Lma2P.2}%
\end{equation}
For the last term in (\ref{A.Lma2P.2}), Jensen's inequality implies that%
\begin{equation}
\left\langle \left\langle 1_{\bar{\Theta}_{\epsilon}}{\ln\dfrac{p(\mathbf{r}%
)}{p\left(  \mathbf{r}|\mathbf{x}\right)  p(\mathbf{x})}}\right\rangle
_{\mathbf{r}|\mathbf{x}}\right\rangle _{\mathbf{x}}\leq\left\langle
\left\langle 1_{\bar{\Theta}_{\epsilon}}\right\rangle _{\mathbf{r}|\mathbf{x}%
}\right\rangle _{\mathbf{x}}\ln\frac{1}{\left\langle \left\langle
1_{\bar{\Theta}_{\epsilon}}\right\rangle _{\mathbf{r}|\mathbf{x}}\right\rangle
_{\mathbf{x}}}=o\left(  N^{-1}\right)  \text{.} \label{A.Lma2P.2A}%
\end{equation}
For the first term in (\ref{A.Lma2P.2}), it follows from (\ref{A.Lma2P.1B})
and (\ref{A.yy}) that
\begin{align}
&  \left\langle 1_{\Theta_{\epsilon}}{\ln\dfrac{p(\mathbf{r})}{p\left(
\mathbf{r}|\mathbf{x}\right)  p(\mathbf{x})}}\right\rangle _{\mathbf{r}%
|\mathbf{x}}\nonumber\\
&  \leq\left\langle 1_{\Theta_{\epsilon}}{\ln}\left(  \int_{{{\mathcal{X}}%
}_{\varepsilon}(\mathbf{x})}\exp\left(  L(\mathbf{r}|\hat{\mathbf{x}%
})-L(\mathbf{r}|\mathbf{x})\right)  d\mathbf{\hat{x}}+{\epsilon}\det\left(
\mathbf{G}(\mathbf{x})\right)  ^{-1/2}\right)  \right\rangle _{\mathbf{r}%
|\mathbf{x}}\nonumber\\
&  \leq-\frac{K}{2}\ln\left(  \det\left(  \mathbf{G}(\mathbf{x})\right)
\right) \nonumber\\
&  +\left\langle 1_{\Theta_{\epsilon}}{\ln}\left(  \int_{{\mathcal{Y}%
_{\varepsilon}}}\exp\left(  \mathbf{y}^{T}\tilde{\mathbf{v}}-\dfrac{1}%
{2}\left(  1-\epsilon\right)  \mathbf{y}^{T}\mathbf{y}+\dfrac{1}{2}%
\mathbf{y}^{T}\mathbf{B}_{0}\mathbf{y}\right)  d\mathbf{y}+{\epsilon}\right)
\right\rangle _{\mathbf{r}|\mathbf{x}}\text{.} \label{A.Lma2P.2B}%
\end{align}
The last term (\ref{A.Lma2P.2B}) is upper-bounded by
\begin{align}
&  \left\langle 1_{\Theta_{\epsilon}\cap\Omega_{\epsilon}}{\ln}\left(  \int_{%
\mathbb{R}
^{K}}\exp\left(  \mathbf{y}^{T}\tilde{\mathbf{v}}-\dfrac{1}{2}\left(
1-2\epsilon\right)  \mathbf{y}^{T}\mathbf{y}\right)  d\mathbf{y}+{\epsilon
}\right)  \right\rangle _{\mathbf{r}|\mathbf{x}}\label{A.Lma2P.3a}\\
&  +\left\langle 1_{\Theta_{\epsilon}\cap\bar{\Omega}_{\epsilon}}{\ln}\left(
\int_{%
\mathbb{R}
^{K}}\exp\left(  \mathbf{y}^{T}\tilde{\mathbf{v}}-\dfrac{1}{2}\left(
1-\epsilon\right)  \mathbf{y}^{T}\mathbf{y}+\dfrac{1}{2}\mathbf{y}%
^{T}\mathbf{B}_{0}\mathbf{y}\right)  d\mathbf{y}+{\epsilon}\right)
\right\rangle _{\mathbf{r}|\mathbf{x}}\text{.} \label{A.Lma2P.3b}%
\end{align}
The term (\ref{A.Lma2P.3a}) is equal to
\begin{align}
&  \left\langle 1_{\Theta_{\epsilon}\cap\Omega_{\epsilon}}\ln\left(  \left(
\frac{{2\pi}}{1-2\epsilon}\right)  ^{K/2}\exp\left(  \frac{\mathbf{\tilde{v}%
}^{T}\tilde{\mathbf{v}}}{2\left(  1-2\epsilon\right)  }\right)  +\epsilon
\right)  \right\rangle _{\mathbf{r}|\mathbf{x}}\nonumber\\
&  \leq\left\langle 1_{\Theta_{\epsilon}\cap\Omega_{\epsilon}}\left(
\frac{\tilde{\mathbf{v}}^{T}\tilde{\mathbf{v}}}{2\left(  1-2\epsilon\right)
}+\ln\left(  \left(  \frac{{2\pi}}{1-2\epsilon}\right)  ^{K/2}+\epsilon
\right)  \right)  \right\rangle _{\mathbf{r}|\mathbf{x}}\text{,}
\label{A.Lma2P.4A}%
\end{align}
The term (\ref{A.Lma2P.3b}) is equal to
\begin{subequations}
\begin{align}
&  \left\langle 1_{\Theta_{\epsilon}\cap\bar{\Omega}_{\epsilon}}{\ln}\left(
\left\langle \left(  \frac{{2\pi}}{1-\epsilon}\right)  ^{K/2}\exp\left(
\dfrac{1}{2}\left(  \mathbf{z+}\frac{\tilde{\mathbf{v}}}{1-\epsilon}\right)
^{T}\mathbf{B}_{0}\left(  \mathbf{z+}\frac{\tilde{\mathbf{v}}}{1-\epsilon
}\right)  +\frac{\tilde{\mathbf{v}}^{T}\tilde{\mathbf{v}}}{2\left(
1-\epsilon\right)  }\right)  \right\rangle _{\mathbf{z}}+{\epsilon}\right)
\right\rangle _{\mathbf{r}|\mathbf{x}}\nonumber\\
&  \leq\left\langle 1_{\Theta_{\epsilon}\cap\bar{\Omega}_{\epsilon}}\left(
\frac{K}{2}\ln\left(  \frac{{2\pi}}{1-\epsilon}\right)  +\frac{\mathbf{\tilde
{v}}^{T}\tilde{\mathbf{v}}}{2\left(  1-\epsilon\right)  }+\frac{\mathbf{\tilde
{v}}^{T}\mathbf{B}_{0}^{2}\tilde{\mathbf{v}}}{2\left(  1-\epsilon\right)
^{2}}+\frac{\tilde{\mathbf{v}}^{T}\mathbf{B}_{0}^{2}\tilde{\mathbf{v}}%
}{\left(  1-\epsilon\right)  ^{3}}\right)  \right\rangle _{\mathbf{r}%
|\mathbf{x}}\label{A.Lma2P.4B1}\\
&  +\left\langle 1_{\Theta_{\epsilon}\cap\bar{\Omega}_{\epsilon}}{\ln}\left(
\left\langle \exp\left(  \dfrac{1}{2}\mathbf{z}^{T}\mathbf{B}_{0}%
\mathbf{z+}\frac{\mathbf{z}^{T}\mathbf{B}_{0}\tilde{\mathbf{v}}}{1-\epsilon
}\mathbf{-}\frac{\tilde{\mathbf{v}}^{T}\mathbf{B}_{0}^{2}\tilde{\mathbf{v}}%
}{\left(  1-\epsilon\right)  ^{3}}\right)  \right\rangle _{\mathbf{z}%
}+{\epsilon}\left(  \frac{1-\epsilon}{{2\pi}}\right)  ^{K/2}\right)
\right\rangle _{\mathbf{r}|\mathbf{x}}\text{,} \label{A.Lma2P.4B2}%
\end{align}
where%
\end{subequations}
\begin{equation}
\left\{
\begin{array}
[c]{l}%
{\left\langle \cdot\right\rangle _{\mathbf{z}}=\int_{%
\mathbb{R}
^{K}}}\left(  {\cdot}\right)  {\phi}_{1}\left(  \mathbf{z}\right)
{d}\mathbf{z}\\
{\phi}_{1}{(\mathbf{z})=\left(  \dfrac{1-\epsilon}{{2\pi}}\right)  }%
^{K/2}{\exp\left(  -\dfrac{1-\epsilon}{2}\mathbf{z}^{T}\mathbf{z}\right)  }%
\end{array}
\right.  \text{.} \label{A.Lma2P.4C}%
\end{equation}

Notice that
\begin{equation}
\left\langle 1_{\Theta_{\epsilon}\cap\bar{\Omega}_{\epsilon}}\right\rangle
_{\mathbf{r}|\mathbf{x}}\leq\left\langle 1_{\bar{\Omega}_{\epsilon}%
}\right\rangle _{\mathbf{r}|\mathbf{x}}=O\left(  N^{-1}\right)
\label{A.Lma2P.5A}%
\end{equation}
and%
\begin{equation}
\left\langle 1_{\Theta_{\epsilon}\cap\Omega_{\epsilon}}\right\rangle
_{\mathbf{r}|\mathbf{x}}=1-\left\langle 1_{\bar{\Theta}_{\epsilon}\cup
\bar{\Omega}_{\epsilon}}\right\rangle _{\mathbf{r}|\mathbf{x}}=1+O\left(
N^{-1}\right)  \text{.} \label{A.Lma2P.5A1}%
\end{equation}
Then by (\ref{A.vv1<}), we get%
\begin{align}
&  \left\langle 1_{\Theta_{\epsilon}\cap\bar{\Omega}_{\epsilon}}\left(
\left\langle \exp\left(  \mathbf{z}^{T}\mathbf{B}_{0}\mathbf{z}\right)
\right\rangle _{\mathbf{z}}^{1/2}-1\right)  \right\rangle _{\mathbf{r}%
|\mathbf{x}}\nonumber\\
&  \leq{\left\langle 1_{\Theta_{\epsilon}\cap\bar{\Omega}_{\epsilon}}%
\sum_{m=0}^{\infty}\dfrac{1}{m!}\left\langle \left(  \mathbf{z}{^{T}%
}\mathbf{B}_{0}\mathbf{z}\right)  ^{m}\right\rangle _{\mathbf{z}}\right\rangle
_{\mathbf{r}|\mathbf{x}}^{1/2}}-\left\langle 1_{\Theta_{\epsilon}\cap
\bar{\Omega}_{\epsilon}}\right\rangle _{\mathbf{r}|\mathbf{x}}=O\left(
N^{-1}\right)  \text{,} \label{A.Lma2P.5B}%
\end{align}%
\begin{equation}
\left\langle \tilde{\mathbf{v}}^{T}\tilde{\mathbf{v}}1_{\bar{\Theta}%
_{\epsilon}}\right\rangle _{\mathbf{r}|\mathbf{x}}\leq\left\langle \left\Vert
\tilde{\mathbf{v}}\right\Vert ^{4}\right\rangle _{\mathbf{r}|\mathbf{x}}%
^{1/2}\left\langle 1_{\bar{\Theta}_{\epsilon}}\right\rangle _{\mathbf{r}%
|\mathbf{x}}^{1/2}=O\left(  N^{-1}\right)  \text{,} \label{A.Lma2P.5C}%
\end{equation}
and by (\ref{XiDef}),%
\begin{align}
0  &  \leq\left\langle \tilde{\mathbf{v}}^{T}\mathbf{B}_{0}^{2}{\tilde
{\mathbf{v}}}1_{\Theta_{\epsilon}\cap\bar{\Omega}_{\epsilon}}\right\rangle
_{\mathbf{r}|\mathbf{x}}\leq\left\langle \mathbf{v}^{T}\mathbf{B}_{0}%
^{2}\mathbf{v}\right\rangle _{\mathbf{r}|\mathbf{x}}+O\left(  N^{-1}\right)
\nonumber\\
&  \leq\xi\left\Vert N\mathbf{G}^{-1}\left(  \mathbf{x}\right)  \right\Vert
+O\left(  N^{-1}\right)  =O\left(  N^{-1}\right)  \text{.} \label{A.Lma2P.5D}%
\end{align}
Hence, we have
\begin{align}
&  \left\langle 1_{\Theta_{\epsilon}}\left(  \frac{K}{2}\ln\left(  \frac
{{2\pi}}{1-\epsilon}\right)  +\frac{\tilde{\mathbf{v}}^{T}\tilde{\mathbf{v}}%
}{2\left(  1-\epsilon\right)  }\right)  \right\rangle _{\mathbf{r}|\mathbf{x}%
}\nonumber\\
&  =\left(  \frac{K}{2}\ln\left(  \frac{{2\pi}}{1-\epsilon}\right)
+\frac{K+\zeta}{2\left(  1-\epsilon\right)  }\right)  +O\left(  N^{-1}\right)
\text{,} \label{A.Lma2P.6}%
\end{align}
and by Cauchy--Schwarz inequality and (\ref{A.Lma2P.5B}), the term
(\ref{A.Lma2P.4B2}) is upper bounded by%
\begin{align}
&  \left\langle 1_{\Theta_{\epsilon}\cap\bar{\Omega}_{\epsilon}}{\ln}\left(
\left\langle \exp\left(  \mathbf{z}^{T}\mathbf{B}_{0}\mathbf{z}\right)
\right\rangle _{\mathbf{z}}^{1/2}\left\langle \exp\left(  \frac{2\mathbf{z}%
^{T}\mathbf{B}_{0}\tilde{\mathbf{v}}}{1-\epsilon}\mathbf{-}\frac
{2\tilde{\mathbf{v}}^{T}\mathbf{B}_{0}^{2}\tilde{\mathbf{v}}}{\left(
1-\epsilon\right)  ^{3}}\right)  \right\rangle _{\mathbf{z}}^{1/2}+{\epsilon
}\left(  \frac{1-\epsilon}{{2\pi}}\right)  ^{K/2}\right)  \right\rangle
_{\mathbf{r}|\mathbf{x}}\nonumber\\
&  =\left\langle 1_{\Theta_{\epsilon}\cap\bar{\Omega}_{\epsilon}}{\ln}\left(
\left\langle \exp\left(  \mathbf{z}^{T}\mathbf{B}_{0}\mathbf{z}\right)
\right\rangle _{\mathbf{z}}^{1/2}+{\epsilon}\left(  \frac{1-\epsilon}{{2\pi}%
}\right)  ^{K/2}\right)  \right\rangle _{\mathbf{r}|\mathbf{x}}\nonumber\\
&  \leq\left\langle 1_{\Theta_{\epsilon}\cap\bar{\Omega}_{\epsilon}}\left(
\left\langle \exp\left(  \mathbf{z}^{T}\mathbf{B}_{0}\mathbf{z}\right)
\right\rangle _{\mathbf{z}}^{1/2}+{\epsilon}\left(  \frac{1-\epsilon}{{2\pi}%
}\right)  ^{K/2}-1\right)  \right\rangle _{\mathbf{r}|\mathbf{x}}=O\left(
N^{-1}\right)  \text{.} \label{A.Lma2P.7}%
\end{align}

Since $\epsilon$ is arbitrary, we can let it go to zero. Then taking
everything together, we get%
\begin{equation}
{\left\langle \left\langle \ln\dfrac{p(\mathbf{r})}{p\left(  \mathbf{r}%
|\mathbf{x}\right)  p(\mathbf{x})}\right\rangle _{\mathbf{r}|\mathbf{x}%
}\right\rangle _{\mathbf{x}}\leq}-\left\langle \dfrac{1}{2}\ln\left(
\det\left(  \frac{\mathbf{G}(\mathbf{x})}{{2\pi e}}\right)  \right)
\right\rangle _{\mathbf{x}}+O\left(  N^{-1}\right)  \text{.}
\label{A.Lma2P.last}%
\end{equation}
Putting (\ref{A.Lma2P.last}) into (\ref{MI1}) yields (\ref{Lma2}).

On the other hand, we have%
\begin{align}
&  \left\langle \left\langle {\ln\dfrac{p(\mathbf{r})}{p\left(  \mathbf{r}%
|\mathbf{x}\right)  p(\mathbf{x})}}\right\rangle _{\mathbf{r}|\mathbf{x}%
}\right\rangle _{\mathbf{x}}\nonumber\\
&  =\left\langle \left\langle 1_{\Theta_{\epsilon}\cap\Omega_{\epsilon}}%
{\ln\dfrac{p(\mathbf{r})}{p\left(  \mathbf{r}|\mathbf{x}\right)
p(\mathbf{x})}}\right\rangle _{\mathbf{r}|\mathbf{x}}\right\rangle
_{\mathbf{x}}\label{A.Lma2aP.4a}\\
&  +\left\langle \left\langle 1_{\Theta_{\epsilon}\cap\bar{\Omega}_{\epsilon}%
}{\ln\dfrac{p(\mathbf{r})}{p\left(  \mathbf{r}|\mathbf{x}\right)
p(\mathbf{x})}}\right\rangle _{\mathbf{r}|\mathbf{x}}\right\rangle
_{\mathbf{x}}+\left\langle \left\langle 1_{\bar{\Theta}_{\epsilon}}{\ln
\dfrac{p(\mathbf{r})}{p\left(  \mathbf{r}|\mathbf{x}\right)  p(\mathbf{x})}%
}\right\rangle _{\mathbf{r}|\mathbf{x}}\right\rangle _{\mathbf{x}}\text{.}
\label{A.Lma2aP.4b}%
\end{align}
For term (\ref{A.Lma2aP.4b}), it follows from Jensen's inequality that%
\begin{equation}
\left\langle \left\langle 1_{\bar{\Theta}_{\epsilon}}{\ln\dfrac{p(\mathbf{r}%
)}{p\left(  \mathbf{r}|\mathbf{x}\right)  p(\mathbf{x})}}\right\rangle
_{\mathbf{r}|\mathbf{x}}\right\rangle _{\mathbf{x}}\leq\left\langle
\left\langle 1_{\bar{\Theta}_{\epsilon}}\right\rangle _{\mathbf{r}|\mathbf{x}%
}\right\rangle _{\mathbf{x}}\ln\frac{1}{\left\langle \left\langle
1_{\bar{\Theta}_{\epsilon}}\right\rangle _{\mathbf{r}|\mathbf{x}}\right\rangle
_{\mathbf{x}}}=o\left(  1\right)  \label{A.Lma2aP.5}%
\end{equation}
and%
\begin{equation}
\left\langle \left\langle 1_{\Theta_{\epsilon}\cap\bar{\Omega}_{\epsilon}}%
{\ln\dfrac{p(\mathbf{r})}{p\left(  \mathbf{r}|\mathbf{x}\right)
p(\mathbf{x})}}\right\rangle _{\mathbf{r}|\mathbf{x}}\right\rangle
_{\mathbf{x}}\leq\left\langle \left\langle 1_{\Theta_{\epsilon}\cap\bar
{\Omega}_{\epsilon}}\right\rangle _{\mathbf{r}|\mathbf{x}}\right\rangle
_{\mathbf{x}}\ln\frac{1}{\left\langle \left\langle 1_{\Theta_{\epsilon}%
\cap\bar{\Omega}_{\epsilon}}\right\rangle _{\mathbf{r}|\mathbf{x}%
}\right\rangle _{\mathbf{x}}}=o\left(  1\right)  \text{,} \label{A.Lma2aP.6}%
\end{equation}
where
\begin{equation}
\left\{
\begin{array}
[c]{l}%
\left\langle 1_{\bar{\Omega}_{\epsilon}}\right\rangle _{\mathbf{r}|\mathbf{x}%
}\leq P\left(  \left\Vert \mathbf{B}_{0}\right\Vert ^{2}\geq\epsilon
^{2}\right)  \leq\epsilon^{-2}\left\langle \left\Vert \mathbf{B}%
_{0}\right\Vert ^{2}\right\rangle _{\mathbf{r}|\mathbf{x}}=o\left(  1\right)
\text{,}\\
\left\langle 1_{\Theta_{\epsilon}\cap\bar{\Omega}_{\epsilon}}\right\rangle
_{\mathbf{r}|\mathbf{x}}\leq\left\langle 1_{\bar{\Omega}_{\epsilon}%
}\right\rangle _{\mathbf{r}|\mathbf{x}}=o\left(  1\right)  \text{.}%
\end{array}
\right.  \label{A.Lma2aP.6a}%
\end{equation}

Similarly we can get (\ref{Lma2a}). This completes the proof of \textbf{Lemma
\ref{Lemma 2}}\textit{.}%
\qed

\subsection{Proof of Theorem \ref{Theorem 1}}

\label{A.Thm1P}By \textbf{Lemma \ref{Lemma 1}}\textit{\ }and\textit{\ }%
\textbf{Lemma \ref{Lemma 2}}, we immediately get (\ref{Thm1}). The proof of
(\ref{Thm1.1}) is similar.%
\qed

\subsection{Proof of Theorem \ref{Theorem 1a}}

First, we have
\begin{equation}
\mathbf{G}(\mathbf{x})=\mathbf{J}{^{1/2}}(\mathbf{x})\left(  \mathbf{I}%
_{K}+\boldsymbol{\Psi}(\mathbf{x})\right)  \mathbf{J}{^{1/2}}(\mathbf{x}%
)\text{.} \label{A.Thm1P.0}%
\end{equation}
Since $\mathbf{J}(\mathbf{x})$ and $\mathbf{G}(\mathbf{x})$\ are symmetric and
positive-definite, $\mathbf{I}_{K}+\boldsymbol{\Psi}(\mathbf{x})$ is also
symmetric and positive-definite. The eigendecompositon of $\boldsymbol{\Psi
}(\mathbf{x})$ is given by
\begin{equation}
\boldsymbol{\Psi}(\mathbf{x})=\mathbf{U_{\mathbf{x}}}\boldsymbol{\Lambda
}\mathbf{_{\mathbf{x}}U}_{\mathbf{x}}^{T}\text{,} \label{A.Thm1P.2}%
\end{equation}
where $\mathbf{U}_{\mathbf{x}}\in%
\mathbb{R}
^{K\times K}$ is an orthogonal matrix, and the matrix $\boldsymbol{\Lambda
}_{\mathbf{x}}\in%
\mathbb{R}
^{K\times K}$ is a $K\times K$ diagonal matrix with $K$ nonnegative real
numbers on the diagonal, $\lambda_{1}\geq\lambda_{2}\geq$\textrm{,\thinspace
}$\cdots$\textrm{,\thinspace}$\geq\lambda_{K}>-1$. Then we have
\begin{equation}
\left\langle {\mathrm{Tr}}\left(  \boldsymbol{\Lambda}_{\mathbf{x}}\right)
\right\rangle _{\mathbf{x}}=\left\langle {\mathrm{Tr}}\left(  \boldsymbol{\Psi
}(\mathbf{x})\right)  \right\rangle _{\mathbf{x}}=\left\langle {\mathrm{Tr}%
}\left(  \mathbf{P}(\mathbf{x})\mathbf{J}^{-1}(\mathbf{x})\right)
\right\rangle _{\mathbf{x}}=\varsigma\label{A.Thm1P.3}%
\end{equation}
and
\begin{equation}
{\left\langle \ln\left(  \det\left(  \mathbf{I}_{K}+\boldsymbol{\Psi
}(\mathbf{x})\right)  \right)  \right\rangle _{\mathbf{x}}}=\left\langle
\mathrm{Tr}\left(  \ln\left(  \mathbf{I}_{K}+\boldsymbol{\Lambda}_{\mathbf{x}%
}\right)  \right)  \right\rangle _{\mathbf{x}}\leq\left\langle \mathrm{Tr}%
\left(  \boldsymbol{\Lambda}_{\mathbf{x}}\right)  \right\rangle _{\mathbf{x}%
}=\varsigma\text{.} \label{A.Thm1P.4}%
\end{equation}
Notice that $\ln(1+x)\leq x$ for $\forall x\in(-1$\textrm{,\thinspace}%
$\infty)$. It follows from (\ref{A.Thm1P.0}) and (\ref{A.Thm1P.4}) that
\begin{equation}
\left\langle \ln\left(  \det\left(  \mathbf{G}(\mathbf{x})\right)  \right)
\right\rangle _{\mathbf{x}}-\left\langle \ln\left(  \det\left(  \mathbf{J}%
(\mathbf{x})\right)  \right)  \right\rangle _{\mathbf{x}}=\left\langle
\ln\left(  \det\left(  \mathbf{I}_{K}+\boldsymbol{\Psi}(\mathbf{x})\right)
\right)  \right\rangle _{\mathbf{x}}\leq\varsigma\text{.} \label{A.Thm1P.5}%
\end{equation}
From (\ref{IG}), (\ref{IF}) and (\ref{A.Thm1P.5}), we obtain (\ref{Thm1a.2}).

If $\mathbf{P}(\mathbf{x})$ is positive-semidefinite, then $\lambda_{1}%
\geq\lambda_{2}\geq$\textrm{,\thinspace}$\cdots$\textrm{,\thinspace}%
$\geq\lambda_{K}\geq0$,\ $\varsigma\geq0$ and\ ${\left\langle \ln\left(
\det\left(  \mathbf{I}_{K}+\boldsymbol{\Psi}(\mathbf{x})\right)  \right)
\right\rangle _{\mathbf{x}}\geq0}$. Hence we can get (\ref{Thm1a.3}).

On the other hand, it follows from (\ref{Thm1a.3a}), (\ref{A.Thm1P.4}) and the
power mean inequality that%
\begin{equation}
\left\vert \varsigma\right\vert \leq\left\langle \sum\nolimits_{k=1}%
^{K}\left\vert \lambda_{k}\right\vert \right\rangle _{\mathbf{x}}\leq\sqrt
{K}\left\langle \left(  \sum\nolimits_{k=1}^{K}\lambda_{k}^{2}\right)
^{1/2}\right\rangle _{\mathbf{x}}=\sqrt{K}\left\langle \left\Vert
\boldsymbol{\Psi}(\mathbf{x})\right\Vert \right\rangle _{\mathbf{x}}=\sqrt
{K}\varsigma_{1}=O(N^{-\beta})\text{.} \label{A.Thm1P.5a}%
\end{equation}
Let $\lambda_{k}^{-}=\min\left(  0\mathrm{,\,}\lambda_{k}\right)  $ for
$\forall k\in\left\{  1\mathrm{,\,}2\mathrm{,\,}\cdots\mathrm{,\,}K\right\}
$, then%
\begin{equation}
{\left\langle \sum\nolimits_{k=1}^{K}\ln\left(  1+\lambda_{k}^{-}\right)
\right\rangle _{\mathbf{x}}\leq\left\langle \ln\left(  \det\left(
\mathbf{I}_{K}+\boldsymbol{\Psi}(\mathbf{x})\right)  \right)  \right\rangle
_{\mathbf{x}}}\text{.} \label{A.Thm1P.5b}%
\end{equation}
Notice that $-1<\lambda_{k}^{-}\leq0$, then by (\ref{A.Thm1P.5a}), we have
\begin{equation}
{\left\langle \sum\nolimits_{k=1}^{K}\ln\left(  1+\lambda_{k}^{-}\right)
\right\rangle _{\mathbf{x}}}=\left\langle \sum\nolimits_{m=1}^{\infty}%
\dfrac{-1}{m}\sum\nolimits_{k=1}^{K}\left(  -\lambda_{k}^{-}\right)
^{m}\right\rangle _{\mathbf{x}}=O(N^{-\beta})\text{.} \label{A.Thm1P.6}%
\end{equation}
From (\ref{IG}), (\ref{IF}), (\ref{A.Thm1P.5}), (\ref{A.Thm1P.5b}) and
(\ref{A.Thm1P.6}), we immediately get (\ref{Thm1a.4}). This completes the
proof of \textbf{Theorem \ref{Theorem 1a}}.%
\qed

\subsection{Proof of Theorem \ref{Theorem 2}}

\label{A.Thm2P}Considering the change of variables theorem, for any
real-valued function $f$ and invertible transformation $\mathbf{T}$, we have
\begin{equation}
\int_{{{\mathcal{\tilde{X}}}}}f(\mathbf{\tilde{x}})d\mathbf{\tilde{x}}%
=\int_{{{\mathcal{X}}}}f\left(  \mathbf{T}(\mathbf{x})\right)  \left\vert
\det\left(  D\mathbf{T}(\mathbf{x})\right)  \right\vert d\mathbf{x}\text{,}
\label{A.Thm2P.1}%
\end{equation}
and for $p(\mathbf{x})$ and $p(\mathbf{\tilde{x}})$,
\begin{equation}
\left.  p(\mathbf{\tilde{x}})\right\vert _{\mathbf{\tilde{x}}=T(\mathbf{x}%
)}=\left\vert \det\left(  D\mathbf{T}(\mathbf{x})\right)  \right\vert
^{-1}p(\mathbf{x})\text{.} \label{A.Thm2P.2}%
\end{equation}
Then, it follows from (\ref{prxT}), (\ref{A.Thm2P.1}) and (\ref{A.Thm2P.2})
that%
\begin{equation}
\left\{
\begin{split}
{p(\mathbf{r})}  &  ={\int_{{{\mathcal{X}}}}p(\mathbf{r}|\mathbf{x}%
)p(\mathbf{x})d}\mathbf{x}={\int_{\mathcal{\tilde{X}}}p(\mathbf{r}%
|\mathbf{\tilde{x}})p(\mathbf{\tilde{x}})d}\mathbf{\tilde{x}}\text{,}\\
{H(\tilde{X})}  &  ={-\int_{{\mathcal{\tilde{X}}}}p(\mathbf{\tilde{x}})\ln
p(\mathbf{\tilde{x}})d}\mathbf{\tilde{x}}\\
&  ={-\int_{{{\mathcal{X}}}}p(\mathbf{x})\ln\left(  p(\mathbf{x})\left\vert
\det\left(  D\mathbf{T}(\mathbf{x})\right)  \right\vert ^{-1}\right)
d}\mathbf{x}\\
&  ={H(X)+\int_{{{\mathcal{X}}}}p(\mathbf{x})\ln\left\vert \det\left(
D\mathbf{T}(\mathbf{x})\right)  \right\vert d}\mathbf{x}\text{,}\\
\mathbf{G}(\mathbf{x})  &  =D\mathbf{T}(\mathbf{x})^{T}\mathbf{G}%
(\mathbf{\tilde{x}})D\mathbf{T}(\mathbf{x})\text{.}%
\end{split}
\right.  \label{A.Thm2P.3}%
\end{equation}
Substituting (\ref{A.Thm2P.2}) and (\ref{A.Thm2P.3}) into (\ref{MI}), we can
directly obtain (\ref{IT}). Moreover, if $p(\mathbf{\tilde{x}})$ and
$p(\mathbf{r}|\mathbf{\tilde{x}})$ fulfill conditions \textbf{C1,\ C2}
and\textbf{ }$\xi=O\left(  N^{-1}\right)  $, then by \textbf{Theorem
\ref{Theorem 1}}, we immediately obtain Eq. (\ref{ITP}). This completes the
proof of \textbf{Theorem \ref{Theorem 2}}.
\qed

\subsection{Proof of Corollary \ref{Corollary 2}}

\label{Cly2P}It follows from (\ref{I_Gau}) and\textbf{\textit{\ }Theorem
\ref{Theorem 2}} that
\begin{equation}
I_{G}=I_{G_{+}}=I(X;R)=I(Y;R)=\frac{1}{2}\ln\left(  \det\left(  \frac{1}{2\pi
e}\left(  \mathbf{AA}^{T}+\boldsymbol{\Sigma}_{\mathbf{f}}^{-1}\right)
\right)  \right)  +H(Y) \label{A.Cly2P.6}%
\end{equation}
and
\begin{equation}
H(Y)=\dfrac{1}{2}\ln\left(  \det\left(  2\pi e\boldsymbol{\Sigma}_{\mathbf{f}%
}\right)  \right)  =H(X)+\left\langle \ln\left\vert \det\left(  \mathbf{D}%
(\mathbf{x})\right)  \right\vert \right\rangle _{\mathbf{x}}\text{.}
\label{A.Cly2P.7}%
\end{equation}
Here notice that
\begin{align}
\mathbf{J}{(\mathbf{x})}  &  ={\left\langle \dfrac{\partial\ln p(\mathbf{r}%
|\mathbf{x})}{\partial\mathbf{x}}\dfrac{\partial\ln p(\mathbf{r}|\mathbf{x}%
)}{\partial\mathbf{x}^{T}}\right\rangle _{\mathbf{r}|\mathbf{x}}}\nonumber\\
&  ={\left\langle \dfrac{\partial\mathbf{y}^{T}}{\partial\mathbf{x}}%
\dfrac{\partial\ln p(\mathbf{r}|\mathbf{y})}{\partial\mathbf{y}}%
\dfrac{\partial\ln p(\mathbf{r}|\mathbf{y})}{\partial\mathbf{y}^{T}}%
\dfrac{\partial\mathbf{y}}{\partial\mathbf{x}^{T}}\right\rangle _{\mathbf{r}%
|\mathbf{y}}}\nonumber\\
&  ={\mathbf{D}(\mathbf{x})^{T}\mathbf{AA}^{T}\mathbf{D}(\mathbf{x})}
\label{A.Cly2P.8}%
\end{align}
and
\begin{equation}
\mathbf{P}(\mathbf{x})=-\frac{\partial^{2}\ln p(\mathbf{x})}{\partial
\mathbf{x}\partial\mathbf{x}^{T}}=-\frac{\partial\mathbf{y}^{T}}%
{\partial\mathbf{x}}\frac{\partial^{2}\ln p(\mathbf{y})}{\partial
\mathbf{y}\partial\mathbf{y}^{T}}\frac{\partial\mathbf{y}}{\partial
\mathbf{x}^{T}}=\mathbf{D}(\mathbf{x})^{T}\boldsymbol{\Sigma}_{\mathbf{f}%
}^{-1}\mathbf{D}(\mathbf{x})\text{.} \label{A.Cly2P.9}%
\end{equation}
Hence combining (\ref{A.Cly2P.6})--(\ref{A.Cly2P.9}), we can immediately
obtain (\ref{Cly2.4}). This completes the proof of \textbf{Corollary
\ref{Corollary 2}}.
\qed

\subsection{Proof of Theorem \ref{Theorem 3}}

\label{A.Thm3P}First, we have
\begin{align}
&  {\left\langle \ln\left(  \det\left(  \dfrac{\mathbf{G}(\mathbf{x})}{2\pi
e}\right)  \right)  \right\rangle _{\mathbf{x}}}\nonumber\\
&  =\left\langle \ln\left(  \det\left(  \dfrac{\mathbf{G}_{1\text{,\thinspace
}1}\left(  \mathbf{x}\right)  }{2\pi e}\right)  \det\left(  \dfrac{1}{2\pi
e}(\mathbf{G}_{2\text{,\thinspace}2}\left(  \mathbf{x}\right)  -\mathbf{G}%
_{2\text{,\thinspace}1}(\mathbf{x})\mathbf{G}_{1\text{,\thinspace}1}%
^{-1}(\mathbf{x})\mathbf{G}_{1\text{,\thinspace}2}(\mathbf{x}))\right)
\right)  \right\rangle _{\mathbf{x}}\nonumber\\
&  =\left\langle \ln\left(  \det\left(  \dfrac{\mathbf{G}_{1\text{,\thinspace
}1}\left(  \mathbf{x}\right)  }{2\pi e}\right)  \right)  +\ln\left(
\det\left(  \dfrac{\mathbf{G}_{2\text{,\thinspace}2}\left(  \mathbf{x}\right)
}{2\pi e}\right)  \right)  +\ln\left(  \det(\mathbf{I}_{K_{2}}-\mathbf{A}%
_{\mathbf{x}})\right)  \right\rangle _{\mathbf{x}}\text{.} \label{A.Thm3P.2}%
\end{align}
Then by the eigendecompositon of $\mathbf{A}_{\mathbf{x}}$, we have
\begin{equation}
\mathbf{A}_{\mathbf{x}}=\mathbf{U}_{\mathbf{x}}\boldsymbol{\Lambda
}_{\mathbf{x}}\mathbf{U}_{\mathbf{x}}^{T}\text{,} \label{A.Thm3P.3}%
\end{equation}
where $\mathbf{U}_{\mathbf{x}}$ and $\boldsymbol{\Lambda}_{\mathbf{x}}$\ are
$K_{2}\times K_{2}$ eigenvector matrix and eigenvalue matrix, respectively.
Since $\mathbf{G}\left(  \mathbf{x}\right)  $, $\mathbf{G}_{1\text{,\thinspace
}1}\left(  {\mathbf{x}}\right)  $ and $\mathbf{G}_{2\text{,\thinspace}%
2}\left(  {\mathbf{x}}\right)  $ are positive-definite, then $\mathbf{I}%
_{K}-\mathbf{A}_{\mathbf{x}}$\ is also positive-definite and $\mathbf{A}%
_{\mathbf{x}}$ is positive-semidefinite, with $0\leq\left(
\boldsymbol{\Lambda}_{\mathbf{x}}\right)  _{k\text{,\thinspace}k}=\lambda
_{k}<1$ for $\forall k\in\left\{  1\mathrm{,\,}2\mathrm{,\,}\cdots
\mathrm{,\,}K_{2}\right\}  $. Moreover, it follows from (\ref{Thm3.1}) that
\begin{equation}
\left\{
\begin{array}
[c]{l}%
{0}\leq\left\langle {\mathrm{Tr}\left(  \boldsymbol{\Lambda}_{\mathbf{x}%
}\right)  }\right\rangle _{\mathbf{x}}=\left\langle {\mathrm{Tr}\left(
\mathbf{A}_{\mathbf{x}}\right)  }\right\rangle _{\mathbf{x}}\ll1\text{,}\\
{0}\leq\left\langle {\mathrm{Tr}\left(  \boldsymbol{\Lambda}_{\mathbf{x}}%
^{m}\right)  }\right\rangle _{\mathbf{x}}=\left\langle {\sum\nolimits_{k=1}%
^{K_{2}}\lambda_{k}^{m}}\right\rangle _{\mathbf{x}}\leq{\left\langle
{\mathrm{Tr}\left(  \boldsymbol{\Lambda}_{\mathbf{x}}\right)  }\right\rangle
_{\mathbf{x}}}\ll1\text{.}%
\end{array}
\right.  \label{A.Thm3P.4}%
\end{equation}
Then by (\ref{A.Thm3P.4}) we have
\begin{equation}
\left\langle \ln\left(  \det(\mathbf{I}_{K_{2}}-\mathbf{A}_{\mathbf{x}%
})\right)  \right\rangle _{\mathbf{x}}=\left\langle {\mathrm{Tr}}\left(
\ln\left(  \mathbf{I}_{K_{2}}-\boldsymbol{\Lambda}_{\mathbf{x}}\right)
\right)  \right\rangle _{\mathbf{x}}=\sum\nolimits_{m=1}^{\infty}\dfrac{-1}%
{m}\left\langle {\mathrm{Tr}\left(  \boldsymbol{\Lambda}_{\mathbf{x}}%
^{m}\right)  }\right\rangle _{\mathbf{x}}\simeq0\text{.} \label{A.Thm3P.5a}%
\end{equation}
Substituting (\ref{A.Thm3P.5a}) into (\ref{A.Thm3P.2}) and then combining with
(\ref{IG}), we get (\ref{Thm3.2}).

If Eq.~(\ref{Thm3.3}) holds, then $\mathbf{A}_{\mathbf{x}}=\mathbf{0}$ and
$I_{G}={I}_{G_{1}}$. Conversely, if $I_{G}={I}_{G_{1}}$, then%
\begin{equation}
0=\left\langle \ln\left(  \det(\mathbf{I}_{K_{2}}-\mathbf{A}_{\mathbf{x}%
})\right)  \right\rangle _{\mathbf{x}}\leq-\left\langle {\mathrm{Tr}}\left(
\mathbf{A}_{\mathbf{x}}\right)  \right\rangle _{\mathbf{x}}\leq0\text{,}
\label{A.Thm3P.6}%
\end{equation}
$\mathbf{A}_{\mathbf{x}}=\mathbf{0}$, and Eq. (\ref{Thm3.3}) holds. This
completes the proof of \textbf{Theorem \ref{Theorem 3}.}%
\qed

\subsection{Proof of Theorem \ref{Theorem 4}}

\label{A.Thm4P}Similar to (\ref{A.Thm3P.2}), we have
\begin{align}
&  {\left\langle \ln\left(  \det\left(  \dfrac{\mathbf{G}(\mathbf{x})}{2\pi
e}\right)  \right)  \right\rangle _{\mathbf{x}}}\nonumber\\
&  ={\left\langle \ln\left(  \det\left(  \dfrac{\mathbf{G}_{1\text{,\thinspace
}1}\left(  \mathbf{x}\right)  }{2\pi e}\right)  \right)  +\ln\left(
\det\left(  \dfrac{\mathbf{P}_{2\text{,\thinspace}2}\left(  \mathbf{x}\right)
}{2\pi e}\right)  \right)  +\ln\left(  \det(\mathbf{I}_{K_{2}}+\mathbf{B}%
_{\mathbf{x}})\right)  \right\rangle _{\mathbf{x}}}\text{.} \label{A.Thm4P.1}%
\end{align}
Similar to (\ref{A.Thm1P.2}), the eigendecompositon of $\mathbf{B}%
_{\mathbf{x}}$ is given by%
\begin{equation}
\mathbf{B}_{\mathbf{x}}=\mathbf{U}_{\mathbf{x}}\boldsymbol{\Lambda
}_{\mathbf{x}}\mathbf{U}_{\mathbf{x}}^{T}\text{,} \label{A.Thm4P.2}%
\end{equation}
where $\mathbf{U}_{\mathbf{x}}$ and $\boldsymbol{\Lambda}_{\mathbf{x}}$\ are
$K_{2}\times K_{2}$ eigenvector matrix and eigenvalue matrix, respectively. If
the matrix $\mathbf{B}_{\mathbf{x}}$ is positive-semidefinite and satisfies
(\ref{Thm4.2}), then $\left(  \boldsymbol{\Lambda}_{\mathbf{x}}\right)
_{k\text{,\thinspace}k}=\lambda_{k}\geq0$ for $\forall k\in\left\{
1\mathrm{,\,}2\mathrm{,\,}\cdots\mathrm{,\,}K_{2}\right\}  $ and
\begin{align}
0  &  \leq\left\langle \ln\left(  \det(\mathbf{I}_{K_{2}}+\mathbf{B}%
_{\mathbf{x}})\right)  \right\rangle _{\mathbf{x}}=\left\langle \sum
\nolimits_{k=1}^{K_{2}}\ln\left(  1+\lambda_{k}\right)  \right\rangle
_{\mathbf{x}}\nonumber\\
&  \leq\left\langle {\mathrm{Tr}}\left(  \boldsymbol{\Lambda}_{\mathbf{x}%
}\right)  \right\rangle _{\mathbf{x}}={\mathrm{Tr}}\left(  \left\langle
\mathbf{B}_{\mathbf{x}}\right\rangle _{\mathbf{x}}\right)  \ll1\text{.}
\label{A.Thm4P.5}%
\end{align}
Substituting (\ref{A.Thm4P.5}) into (\ref{A.Thm4P.1}), we immediately get
(\ref{Thm4.3}). If $\mathbf{C}_{\mathbf{x}}=\mathbf{0}$, then $\ln\left(
\det(\mathbf{I}_{K_{2}}+\mathbf{B}_{\mathbf{x}})\right)  =0$ and
$I_{G}=I_{G_{2}}$. And if $I_{G}=I_{G_{2}}$, then $\ln\left(  \det
(\mathbf{I}_{K_{2}}+\mathbf{B}_{\mathbf{x}})\right)  =0$, $\mathbf{B}%
_{\mathbf{x}}=$ $\mathbf{0}$ and $\mathbf{C}_{\mathbf{x}}=\mathbf{0}$.%
\qed

\subsection{Proof of Corollary \ref{Corollary 3}}

Notice that
\begin{equation}
\left\{
\begin{array}
[c]{l}%
{H(X)=H(X_{1})+H(X_{2})}\text{,}\\
{H(X_{2})=\dfrac{1}{2}\ln\left(  \det({2\pi e}\boldsymbol{\Sigma}%
_{\mathbf{x}_{2}})\right)  }\text{,}\\
{\mathbf{P}_{2\text{,\thinspace}1}\left(  \mathbf{x}\right)  =\mathbf{P}%
_{1\text{,\thinspace}2}\left(  \mathbf{x}\right)  }=\mathbf{0}\text{,}\\
{\mathbf{P}_{2\text{,\thinspace}2}\left(  \mathbf{x}\right)
=\boldsymbol{\Sigma}_{\mathbf{x}_{2}}^{-1}}\text{,}%
\end{array}
\right.  \label{A.Cly3P.1}%
\end{equation}
and the matrices%
\begin{align}
\mathbf{C}_{\mathbf{x}}  &  =\mathbf{J}_{2\text{,\thinspace}2}\left(
{\mathbf{x}}\right)  -\mathbf{J}_{2\text{,\thinspace}1}\left(  {\mathbf{x}%
}\right)  \mathbf{G}_{1\text{,\thinspace}1}^{-1}\left(  {\mathbf{x}}\right)
\mathbf{J}_{1\text{,\thinspace}2}\left(  {\mathbf{x}}\right)  \text{,}%
\label{A.Cly3P.2a}\\
\mathbf{B}_{\mathbf{x}}  &  =\mathbf{P}_{2\text{,\thinspace}2}^{-1/2}\left(
{\mathbf{x}}\right)  \mathbf{C}_{\mathbf{x}}\mathbf{P}_{2\text{,\thinspace}%
2}^{-1/2}\left(  {\mathbf{x}}\right)  \label{A.Cly3P.2b}%
\end{align}
are positive-semidefinite, and the proof is similar to (\ref{3.24}). Then by
\textbf{Theorem \ref{Theorem 4}} we immediately get (\ref{Thm4.3}).
Substituting (\ref{A.Cly3P.1}) into (\ref{Thm4.3}) yields (\ref{Cly3.1}) with
strict equality if and only if $\mathbf{C}_{\mathbf{x}}=\mathbf{0}$. This
completes the proof of \textbf{Corollary \ref{Corollary 3}}.%
\qed

\subsection{Proof of Proposition \ref{Proposition 2}}

By writing $p({\boldsymbol{\theta}})$ as a sum of two density functions
$p_{1}({\boldsymbol{\theta}})$ and $p_{2}({\boldsymbol{\theta}})$,
\begin{equation}
p({\boldsymbol{\theta}})=\alpha p_{1}({\boldsymbol{\theta}})+\left(
1-\alpha\right)  p_{2}({\boldsymbol{\theta}})\text{,\thinspace}
\label{A.Ppn2P.1}%
\end{equation}
we have
\begin{equation}
\mathbf{G}(\mathbf{x})=N\int_{{{\Theta}}}p({\boldsymbol{\theta}}%
)\mathbf{S}(\mathbf{x}\text{;\thinspace}{\boldsymbol{\theta}}%
)d{\boldsymbol{\theta}}+\mathbf{P}(\mathbf{x})=\alpha\mathbf{G}_{1}%
(\mathbf{x})+(1-\alpha)\mathbf{G}_{2}(\mathbf{x})\text{,\thinspace}
\label{A.Ppn2P.2}%
\end{equation}
where $0\leq\alpha\leq1$ and
\begin{align}
\mathbf{G}_{1}{(\mathbf{x})}  &  {=N\int_{{{\Theta}}}p_{1}({\boldsymbol{\theta
}})}\mathbf{S}(\mathbf{x}\text{;\thinspace}{\boldsymbol{\theta}}%
)d{\boldsymbol{\theta}}+{\mathbf{P}(\mathbf{x})}\text{,}\label{A.Ppn2P.3a}\\
\mathbf{G}_{2}{(\mathbf{x})}  &  {=N\int_{{{\Theta}}}p_{2}({\boldsymbol{\theta
}})}\mathbf{S}(\mathbf{x}\text{;\thinspace}{\boldsymbol{\theta}}%
)d{\boldsymbol{\theta}}+{\mathbf{P}(\mathbf{x})}\text{.} \label{A.Ppn2P.3b}%
\end{align}
Using the Minkowski determinant inequality and the inequality of weighted
arithmetic and geometric means, we find
\begin{align}
{\det\left(  \mathbf{G}(\mathbf{x})\right)  ^{1/K}}  &  ={\det\left(
\alpha\mathbf{G}_{1}(\mathbf{x})+(1-\alpha)\mathbf{G}_{2}(\mathbf{x})\right)
^{1/K}}\nonumber\\
&  \geq\alpha\det\left(  \mathbf{G}_{1}(\mathbf{x})\right)  ^{1/K}%
+(1-\alpha)\det\left(  \mathbf{G}_{2}(\mathbf{x})\right)  ^{1/K}\nonumber\\
&  \geq\left(  \det\left(  \mathbf{G}_{1}(\mathbf{x})\right)  ^{\alpha}%
\det\left(  \mathbf{G}_{2}(\mathbf{x})\right)  ^{(1-\alpha)}\right)
^{1/K}\text{.} \label{A.Ppn2P.4}%
\end{align}
It follows from (\ref{A.Ppn2P.2}) and (\ref{A.Ppn2P.4}) that
\begin{equation}
\ln\left(  \det\left(  \alpha\mathbf{G}_{1}(\mathbf{x})+(1-\alpha
)\mathbf{G}_{2}(\mathbf{x})\right)  \right)  \geq\alpha\ln\left(  \det\left(
\mathbf{G}_{1}(\mathbf{x})\right)  \right)  +(1-\alpha)\ln\left(  \det\left(
\mathbf{G}_{2}(\mathbf{x})\right)  \right)  \text{,} \label{A.Ppn2P.5}%
\end{equation}
where the equality holds if and only if $\mathbf{G}_{1}(\mathbf{x}%
)=\mathbf{G}_{2}(\mathbf{x})$. Thus $\ln\left(  \det\left(  \mathbf{G}%
(\mathbf{x})\right)  \right)  $ is concave about $p({\boldsymbol{\theta}})$.
Therefore $I_{G}[p({\boldsymbol{\theta}})]$ is a concave function about
$p({\boldsymbol{\theta}})$. Similarly we can prove that $I_{F}%
[p({\boldsymbol{\theta}})]$ is also a concave function about
$p({\boldsymbol{\theta}})$. This completes the proof of \textbf{Proposition
\ref{Proposition 2}}. \qed

\phantomsection\addcontentsline{toc}{section}{References}
\bibliographystyle{apalike2}
\bibliography{NECO-08-17-2952R1-Source-BIB}

\end{document}